\documentclass{article}
 
\usepackage{arxiv}

\usepackage[utf8]{inputenc}
\usepackage{amsmath,amssymb,amsfonts,amsthm}
\usepackage{ifthen}
\usepackage{comment}

\pdfoutput=1

\usepackage{url}            
\usepackage{booktabs}       
\usepackage{nicefrac}       
\usepackage{microtype}      
\usepackage{xcolor}         
\usepackage{enumerate}
\usepackage{pgf,pgfplots,graphicx,subcaption,siunitx}     
\usepackage[export]{adjustbox}
\usepackage[normalem]{ulem}
\usepackage{multirow, multicol}
\usepackage{doi}

% Types of theorems
\newtheorem{definition}{Definition}
\newtheorem{theorem}{Theorem}
\newtheorem{lemma}{Lemma}

\newtheorem{claim}{Claim}
\newtheorem{remark}{Remark}

%%% Graphs 
\usepackage{tikz}
    \usetikzlibrary{matrix,positioning}
    \usetikzlibrary{decorations.pathreplacing}
    \usetikzlibrary{through,calc,intersections}
    \usetikzlibrary{shapes,shapes.geometric}

\newenvironment{proofof}[1]
      {\smallskip\noindent{\bf #1.}}
      {\hfill$\Box$\medskip}
      
\DeclareMathOperator*{\argmax}{arg\,max}

%%%%%%%%%%%%%%%%%%%%%%%%%%%%%%%%%%%%%%%%%%%%%%%%%%%%%%%%%%%%%%%%%%%%%%%%%%%%%%%%%%%%%%%%%%%%%%%%%%%%%%%%

\title{Forward Looking Best-Response Multiplicative Weights Update Methods for Bilinear Zero-sum Games}

\date{} 					% Or removing it

\author{   
  Michail Fasoulakis \\
  Institute of Computer Science, \\
  Foundation for Research and Technology-Hellas \\ (FORTH)\\
  \texttt{mfasoul@ics.forth.gr} \\
  \And
  Evangelos Markakis \\
  Department of Informatics,\\ Athens University of Economics and Business,\\
  \texttt{markakis@gmail.com} \\
  \AND
  Yannis Pantazis \\
  Institute of Applied and Computational Mathematics, \\
  Foundation for Research and Technology-Hellas \\ (FORTH) \\
  \texttt{pantazis@iacm.forth.gr} \\
  \And
  Constantinos Varsos \\
  Institute of Computer Science, \\
  Foundation for Research and Technology-Hellas \\ (FORTH) \\
  \texttt{varsosk@ics.forth.gr} \\
}

%%% Add PDF metadata to help others organize their library
%%% Once the PDF is generated, you can check the metadata with
%%% $ pdfinfo template.pdf
\hypersetup{
pdftitle={Forward Looking Best-Response Multiplicative Weights Update Methods for Bilinear Zero-sum Games},
pdfsubject={cs.GT,cs.LG},
pdfauthor={Michail~Fasoulakis, Evangelos~Markakis, Yannis~Pantazis, Constantinos~Varsos},
pdfkeywords={Multiplicative Weights Update, Zero-sum games, Last Iterate Convergence},
}

\begin{document}
\maketitle

\begin{abstract}
Our work focuses on extra gradient learning algorithms for finding Nash equilibria in bilinear zero-sum games. The proposed method, which can be formally considered as a variant of Optimistic Mirror Descent \cite{DBLP:conf/iclr/MertikopoulosLZ19}, uses a large learning rate for the intermediate gradient step which essentially leads to computing (approximate) best response strategies against the profile of the previous iteration. Although counter-intuitive at first sight due to the irrationally large, for an iterative algorithm, intermediate learning step, we prove that the method guarantees last-iterate convergence to an equilibrium.
Particularly, we show that the algorithm reaches first an $\eta^{1/\rho}$-approximate Nash equilibrium, with $\rho > 1$, by decreasing the Kullback-Leibler divergence of each iterate by at least $\Omega(\eta^{1+\frac{1}{\rho}})$, for sufficiently small learning rate, $\eta$, until the method becomes a contracting map, and converges to the exact equilibrium.
Furthermore, we perform experimental comparisons with the optimistic variant of the multiplicative weights update method, by \cite{Daskalakis2019LastIterateCZ} and show that our algorithm has significant practical potential since it offers substantial gains in terms of accelerated convergence. 
\end{abstract}

%%%%%%%%%%%%%%%%%%%%%%%%%%%

\section{Introduction}

Our work focuses on the problem of designing learning algorithms for finding Nash equilibria in zero-sum games. 
Zero-sum games form a fundamental class of bimatrix games, where the two players need to solve a max-min and a min-max optimization problem respectively, with a bilinear objective function. 
It is well known by von Neumann's \emph{minmax theorem}, that these two problems have the same optimum. 
Apart from their role in the development of  game theory, zero-sum games also have a prominent role in optimization, as any linear program can be recast to solving such a game \cite{Adler13}. More recently, there has also been a renewed interest in the learning theory community for zero-sum games, given their applications on boosting and reinforcement learning (see \cite{DBLP:conf/icml/DaiS0XHLCS18}), and their relevance in formulating GANs in deep learning (as they capture the interaction between the Generator and the Discriminator, see \cite{GPMXWOCB14}).

Although one can solve a zero-sum game by centralized linear programming algorithms, the application areas above highlight the importance of developing fast, iterative learning algorithms.
Several approaches have been proposed throughout the past decades starting with \emph{fictitious play} \cite{Rob51}.
Recently, some of the more standard methodologies include the family of no-regret algorithms as well as several classes of first-order methods.  
To mention a few examples, the important class of \emph{Multiplicative Weights Update} (MWU) algorithms \cite{LW94,Freund1999AdaptiveGP}, together with Gradient Descent, Mirror Descent, and Extra Gradient methods (for a survey see \cite{Bubeck15}), all fall within the above approaches.

In this work, we are interested in methods that exhibit {\it last-iterate} convergence, a property most desirable from an application point of view, meaning that the strategy profile $(x^t, y^t)$, reached at iteration $t$ of an iterative algorithm, converges to the actual equilibrium as $t\rightarrow \infty$. Unfortunately, many of the methods mentioned above do not satisfy this. No-regret algorithms, like the MWU method, are known to converge only in an average sense, resulting in an $\varepsilon$-Nash equilibrium in expectation (see \cite{DBLP:journals/toc/AroraHK12}) for $\varepsilon>0$. In fact, it was shown in \cite{DBLP:conf/sigecom/BaileyP18} that several MWU variants do not satisfy last-iterate convergence. Similarly, the same can be shown for many descent-based methods (see e.g., \cite{DBLP:conf/iclr/MertikopoulosLZ19}).

Driven by these negative results, recent works have focused on certain {\it optimistic} variations of well known optimization methods. In particular, \cite{Daskalakis2019LastIterateCZ} studied a variant of MWU, referred to as the \emph{Optimistic Multiplicative Weights Update} method (OMWU), where an extra negative momentum term is added to correct the dynamics behavior. Their main result is that for zero-sum games with a unique Nash equilibrium, OMWU exhibits last-iterate convergence. Even further, \cite{DBLP:conf/iclr/MertikopoulosLZ19} considered an extra gradient method, named \emph{Optimistic Mirror Descent} (OMD), where last-iterate convergence for a more general class of min-max optimization problems is  established.
These positive results have generated more interest on the behavior and limitations of such approaches, which is not yet fully understood. Namely, 
they give rise to further questions, 
such as: (i) can we prove last-iterate convergence for other related dynamics, and (ii) can we establish faster convergence rates?
These questions are the main focus of our work. 

%%%%%%%%%%%%%%%%%%%%%%%%%%%%%%%%%%%

\subsection{Our Contribution}
\label{sec:contribution}

We introduce a simple yet substantially different variant of Optimistic Mirror Descent method with entropy regularization \cite{DBLP:conf/iclr/MertikopoulosLZ19}, for the case of zero-sum games. OMD is an extra gradient method, i.e., it contains an intermediate gradient step before the final update step, and each iteration is characterized by its learning rate parameter, which is the same for both steps (and often the same across all iterations). Our tweak
is that the intermediate step uses a different learning rate parameter from the update step in each iteration. In fact, we set this to be sufficiently large, which yields a {\it game-theoretic interpretation}, namely that we compute (approximate) best response strategies against the profile of the previous iteration, as a look ahead move.    
Then, during the final update step, we apply multiplicative weights updates by rewarding more the pure strategies that perform better against the best responses that we found in the intermediate step. 
Consequently, we refer to this OMD variant as Forward-Looking Best-Response - Multiplicative Weights Update (FLBR-MWU) method. 

At first sight, this may look counter-intuitive, since learning rates are usually kept small in classic MWU algorithms and, more generally, in any kind of iterative gradient-type optimization algorithms (apart from the notable exception of \cite{DBLP:conf/nips/BaileyP19}). However, our theoretical and experimental study reveal the following promising findings: 
\begin{itemize}
    \item In Section \ref{sec:theory}, we investigate theoretically the convergence properties of FLBR-MWU. If $\eta$ is the standard learning rate parameter used in the update step, and $\xi$ is the corresponding parameter in the intermediate step, then FLBR-MWU exhibits last-iterate convergence for games with a unique equilibrium, when $\xi$ is sufficiently large and $\eta\xi < 1$. 
    Our proof employs a similar methodology to \cite{Daskalakis2019LastIterateCZ}, adapting convergence tools from the field of dynamical systems.
    Our method also appears to attain faster convergence, quantified in terms of $\eta$, compared to OMD and OMWU. In particular, we prove that the decrease in the divergence from the equilibrium is at least $\Omega(\eta^{1+1/\rho})$ per iteration, for any $\rho>1$, until we reach an approximate $O(\eta^{1/\rho})$-equilibrium, by which time, our rule becomes a contraction map (see also Figure \ref{motiv:ex:fig}). This improves on the $\Omega(\eta^3)$ bound established for OMWU in \cite{Daskalakis2019LastIterateCZ}. Although our bounds do not translate into bounds with respect to time, we suspect a linear convergence rate is highly likely (supported also by our experiments). This has been recently established for OMWU in \cite{Wei2021LinearLC}, and is left as an open problem for FLBR-MWU.
    \item In Section \ref{sec:exp}, we perform numerical experiments, using randomly generated data, comparing FLBR-MWU with OMWU\footnote{We note that for the case of zero-sum games, it has been shown in \cite{Wei2021LinearLC} that OMWU can be seen as a variant of OMD with entropy regularization.}. Our experiments reveal that in practice our method achieves indeed a much faster convergence rate, showing an average speedup by a factor of 10 for small size games and up to hundreds, or even higher, for larger games compared to OMWU.
\end{itemize}

%%%%%%%%%%%%%%%%%%%%%%%%%%%%%%%%%%%%%%%

\subsection{A Revealing Example}

\begin{figure*}
\centering
    \includegraphics[scale=.25]{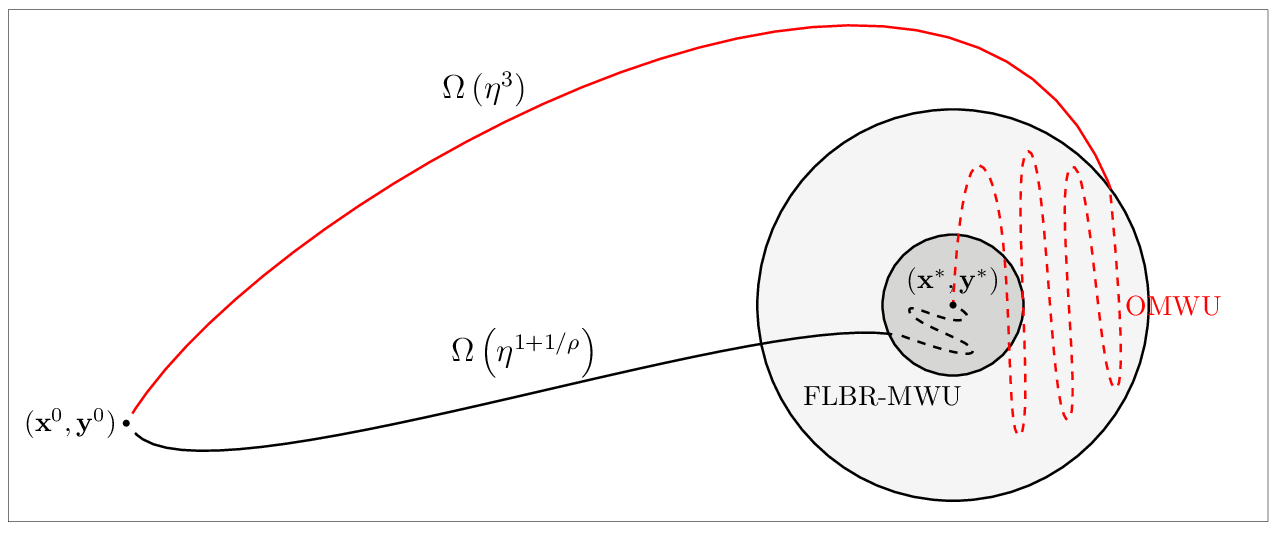}\\
    \includegraphics[width=0.45\textwidth]{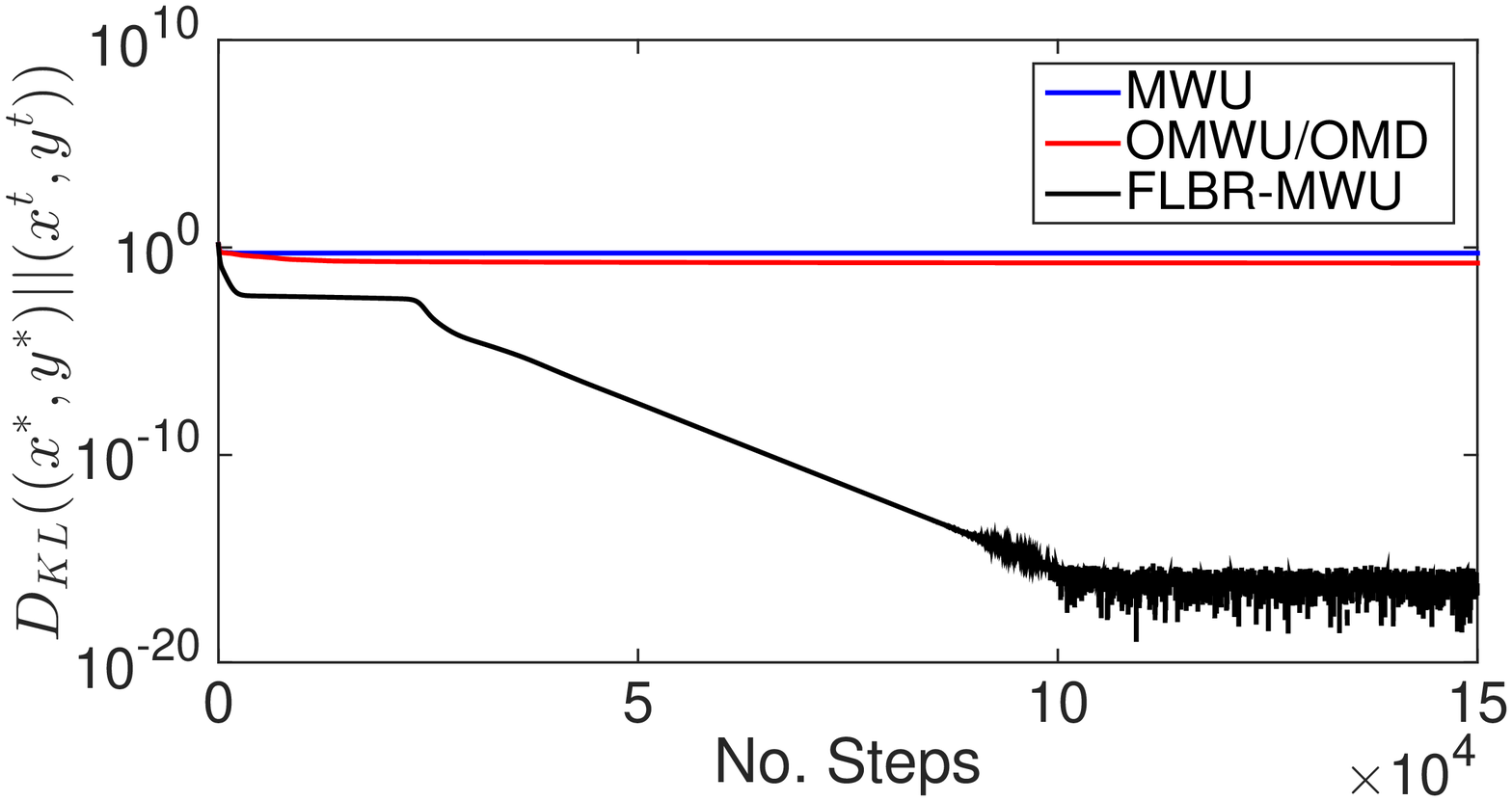}
    \includegraphics[width=0.45\textwidth]{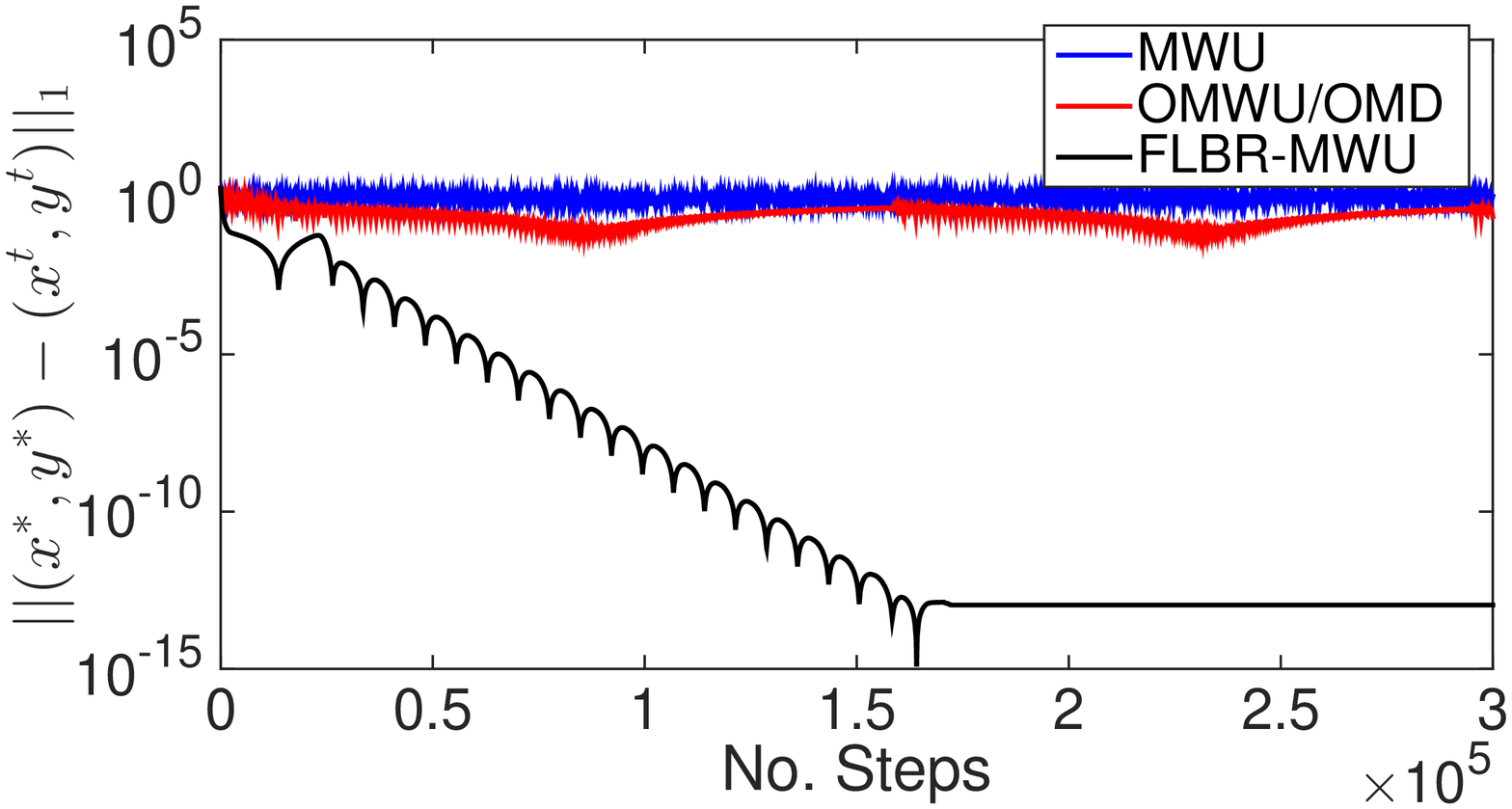}
    \caption{Upper plot: Schematic representation of the convergence path of OMWU (red) and FLBR-MWU (black). Lower plots: A random realization of the learning dynamics for three variants of MWU. 
    The convergence rate for the proposed algorithm (FLBR-MWU) is significantly faster than the existing state-of-the-art algorithms.}
    \label{motiv:ex:fig}
\end{figure*}

The upper plot in Figure \ref{motiv:ex:fig} attempts to demonstrate in a qualitative manner the differences we observed in convergence between the proposed FLBR-MWU and the OMWU dynamics. The two phases of the learning dynamics (decrease of divergence, followed by contraction), are highlighted along with the regions of convergence. A quantitative presentation is shown in the lower plots of Figure \ref{motiv:ex:fig} which depict the convergence behavior of MWU (blue lines), OMWU (red lines) and FLBR-MWU (black lines) for a random realization of a $10\times 10$ payoff matrix with learning rate $\eta=0.1$. We provide two measures of convergence, the Kullback-Leibler Divergence ($D_{KL}$) from the Nash equilibrium (lower left panel), and the respective $l_1$ norm difference (lower right panel), which reveal different aspects of the dynamics.

As anticipated, MWU fails to converge and a smaller learning rate $\eta$ would not fix this issue. OMWU does converge  but in a very slow pace requiring an enormous number of steps, whereas OMD behaves almost in the same manner as OMWU as expected by \cite{Wei2021LinearLC} (OMD is not explicitly shown here nor in Section \ref{sec:exp}; see Section C in the supplementary material for longer, and more detailed simulations). On the other hand, FLBR-MWU converges up to machine precision, as revealed by both $D_{KL}$ and $l_1$ metrics. Indeed, FLBR-MWU is able to escape from the $D_{KL}$ plateau (seen in the lower left panel), where the dynamics are moving towards a direction with slow $D_{KL}$ decline, and ultimately converges to the equilibrium in an oscillatory manner with decreasing amplitude (damped oscillations), as is evident from the $l_1$-norm difference (lower right panel). 
Overall, FLBR-MWU has more than one order of magnitude faster convergence rate relative to OMWU and furthermore tolerates larger values for the learning rate, thus the speed of equilibrium computation is significantly accelerated.

%%%%%%%%%%%%%%%%%%%%%%%%%%%%%%%%%%%%%%%

\subsection{Related Work}\label{sec:related work}

To position our paper within the existing literature, the works most related to ours are \cite{Daskalakis2019LastIterateCZ} and \cite{DBLP:conf/iclr/MertikopoulosLZ19}, regarding the OMWU and the OMD methods, respectively. Conceptually, the definition of our method is closer to \cite{DBLP:conf/iclr/MertikopoulosLZ19} since it uses an extra gradient step. Technically, however, our analysis is based on the mathematical arsenal used in \cite{Daskalakis2019LastIterateCZ}. 

We also overview other relevant works on optimization methods for learning problems. 
It is well known that most of the standard no-regret algorithms exhibit convergence only in an average sense. Hence, the solution at iteration $t$, as $t\rightarrow \infty$, may diverge or enter a limit cycle. Such behaviors can be observed, among others, for Gradient Descent/Ascent (GDA) in unconstrained optimization, as well as for MWU methods for constrained problems, see e.g.,  \cite{DBLP:conf/sigecom/BaileyP18}. 
Given the importance of achieving last-iterate convergence for applications on learning, such as training GANs, \cite{daskalakis2018training} and \cite{DBLP:conf/aistats/LiangS19} studied an optimistic variant of GDA, referred to as OGDA, which has also been considered in previous works, 
e.g., \cite{NIPS2013_f0dd4a99}. Their main result states that OGDA exhibits last-iterate convergence for the unconstrained minmax problem with bilinear functions.

Driven by this, \cite{Daskalakis2019LastIterateCZ} proposed to study the constrained version of minmax problems, that corresponds to finding equilibria in zero-sum games. They proposed an optimistic variant of MWU, termed OMWU, and proved that for games with a unique Nash equilibrium, it converges in the last-iterate sense. The sequence of approximations in OMWU uses two previous steps in order to compute the next update, where the extra term corrects the behaviour of the MWU dynamics. 
Moreover, the performance of OMWU provides a strengthening to the supporting experimental evidence in \cite{DBLP:conf/nips/SyrgkanisALS15}. 
Very recently, an analysis on the number of required steps, both for OMWU and OGDA, was provided in \cite{Wei2021LinearLC}, establishing a linear rate of convergence.
Finally, an alternative view on the behavior of OMWU by studying
volume contraction is given in \cite{DBLP:conf/nips/CheungP20},
and further generalizations have been obtained in \cite{DBLP:conf/aistats/LeiNPW21} for convex-concave landscapes. 

In parallel to the study of OMWU, the work of \cite{DBLP:conf/iclr/MertikopoulosLZ19} considered a method, where they agglomerate an intermediate approximation with the former state in order to compute the next state.
This method is known as Optimistic Mirror Descent (OMD) \cite{DBLP:journals/jmlr/ChiangYLMLJZ12,NIPS2013_f0dd4a99}, or Mirror-Prox \cite{Nemirovski2004ProxMethodWR}. The main result of \cite{DBLP:conf/iclr/MertikopoulosLZ19} is that OMD attains last-iterate convergence for a quite general class of problems that encompasses zero-sum games. A similar approach was introduced by \cite{DBLP:conf/iclr/GidelBVVL19} to cope with computational issues. 
All these techniques fall under the umbrella of \emph{extra-gradient} methods, whose origins date back to \cite{Korpelevich1976TheEM} (for more details see \cite{Facchinei2003FiniteDimensionalVI,Bubeck15}). 

Several other streams of works have also focused on convergence properties of extra-gradient methods. 
In \cite{DBLP:conf/aistats/LiangS19}, a linear convergence rate was proved for OMD under the assumption that the game matrix is square and full rank. Under the same assumption, the work of \cite{DBLP:conf/aistats/MokhtariOP20} considered more general forms of saddle-point problems. 
Recently, in a different direction, \cite{Cen2021FastPE} developed extra-gradient methods for Quantal Response equilibria, which can be used also for finding an approximate Nash equilibrium.
Furthermore, \cite{Hsieh2019OnTC} investigated asymptotic last-iterate convergence for variants of extra-gradient algorithms. 

To our knowledge, the idea of using different rates in the intermediate and the update steps of extra gradient methods, has also been used in \cite{Azizian}. There are however substantial differences with our work. Most importantly, \cite{Azizian} involves the unconstrained bilinear case. Even further, their result holds under certain spectral assumptions, and shows only local convergence (starting from a point near the fixed point), whereas we do not need such a condition.  

%%%%%%%%%%%%%%%%%%%%%%%%%%%%%%%%%%%%%%%

\section{Basic Definitions}\label{sec:Basic definitions}

\subsection{Zero-sum Games and Approximate Equilibria}
\label{subsec:Bimatrix games}

We consider finite 2-player zero-sum games, defined by a matrix\footnote{We can always scale appropriately so that the entries are in $(0, 1]$, without affecting the equilibrium strategies.} $R\in(0,1]^{n \times n}$, where without loss of generality, we assume both players have $n$ \emph{pure} strategies. We refer to the two players as the {\it row player} and the {\it column player} respectively. 
If the row player plays the $i$-th row and the column player plays the $j$-th column, then the payoff of the row player is $R_{ij}$, and the payoff of the column player is $-R_{ij}$. 
We also allow \emph{mixed strategies} as probability distributions (column vectors) on the pure strategies. E.g., a mixed strategy for the row player will be denoted as $x = (x_1,\dots, x_n)$, where $x_i$ is the probability of playing the $i$-th row. For convenience, we will denote the $i$-th pure strategy of a player by the unit vector $e_i$, which has probability one in its $i$-th coordinate and 0 elsewhere. 

A pair $(x,y)$, where $x, y$ are mixed strategies for the row and the column player respectively, is called a strategy profile. Given such a profile, the expected payoff of the row player is $x^TRy$, whereas for the column player, it is $-x^TRy$. This is obviously a bilinear function, since it is equivalent to $\sum_{i, j} R_{ij}x_i y_j$. 

The fundamental solution concept in game theory is that of Nash equilibrium, stating that no player has an incentive to deviate to another strategy.  

\begin{definition} 
A strategy profile $(x^*,y^*)$ is a Nash equilibrium in the zero-sum game defined by matrix $R$, if and only if, for any $i,j\in [n]$,
\begin{equation*}
(x^*)^T R y^* \geq e_i^T R y^* \text{ and } (x^*)^T R e_{j}\geq (x^*)^T R y^*,
\end{equation*}
\end{definition}

The payoff of the row player at an equilibrium, $v = (x^{*})^TRy^*$, is referred to as the \emph{value} of the game.
It is well known that the value of a game and its equilibrium strategies are given by the solution of the following max-min (saddle-point) problem, over the $n$-dimensional simplex $\Delta_n$: $v = \max_{x \in \Delta_n} \min_{y \in \Delta_n} x^TRy = \min_{y \in \Delta_n}\max_{x \in \Delta_n} x^TRy$.
 
A useful concept in the analysis of games is the \emph{support} of a mixed strategy $x$, which is the set of pure strategies that have a positive probability under $x$, i.e., $supp(x) = \{i: x_i>0\}$. 
It is easy to see that at an equilibrium $(x^*,y^*)$, any pure strategy $e_i$, with $i\in supp(x^*)$, is a best response against $y^*$ (resp. for any $j\in supp(y^*)$, $e_j$ is a best response to $x^*$).

In our work, we will also need to argue about approximate equilibria to establish convergence. We start with defining approximate best responses. Given a profile $(x, y)$, we say that a strategy $x'$ is an \emph{$\varepsilon$-best-response} strategy to $y$ with $\varepsilon \in [0,1]$, if it yields a payoff that is at most $\varepsilon$ less than the best-response payoff. We can define now an approximate equilibrium, as a profile $(x, y)$ where $x$ and $y$ are both approximate best responses to each other. This is precisely the standard notion of additive, approximate equilibria \cite{NRTV07}. 

\begin{definition}
A strategy profile $(x^*,y^*)$ is an $\varepsilon$-Nash equilibrium in the zero-sum game defined by matrix $R$, if and only if, for any $i,j$
\begin{equation*}
    (x^*)^T R y^* \geq e_i^T R y^* - \varepsilon \enspace \text{and} \enspace (x^*)^T R y^* \leq (x^*)^T R e_{j} + \varepsilon
\end{equation*}
\end{definition}

%%%%%%%%%%%%%%%%%%%%%%%%%%%%%%%%%%%%%%%

\subsection{Relevant MWU Variants}

One of the standard versions of multiplicative weights update methods results from the FTRL dynamics (Follow-The-Regularized-Leader), when the regularizer is the negative entropy function, (see e.g., \cite{Hoi18}).
MWU rewards better the pure strategies that perform well against the previous iteration. 
In particular, if $(x^{t-1}, y^{t-1})$ is the profile at the end of iteration $t-1$, and $\eta$ is the learning rate parameter, then for $i\in [n]$, $x^t_i$ is set to be analogous to 
$x^{t-1}_i \cdot e^{\eta e_i^T R y^{t-1}}$ (with appropriate normalization). 

In the remaining paper, we often make comparisons or references to the optimistic variant proposed by \cite{Daskalakis2019LastIterateCZ}, referred to as OMWU. The idea of "optimism" here is 
to take into account two previous iterations in order to compute the next update, where the extra term can be seen as a negative momentum, correcting the behaviour of MWU dynamics.
The dynamics of OMWU are described below for all $i,j \in [n]$.
\begin{equation}\label{eq:OMWU}
\begin{split}
 & x^t_i = x^{t-1}_i \cdot \frac{e^{2\eta e_i^T R y^{t -1} - \eta e_i^T R y^{t -2}}}{\sum\nolimits_{j=1}^{n} x^{t-1}_j e^{2 \eta e_j^T R y^{t-1}-\eta e_j^T R y^{t -2}}}, \enspace y^t_j = y^{t-1}_j \cdot \frac{e^{-2\eta e_j^T R^T x^{t-1} + \eta e_j^T R^T x^{t-2}}}{\sum\nolimits_{i=1}^{n} y^{t-1}_i e^{-2 \eta e_i^T R^T x^{t-1} + \eta e_i^T R^T x^{t-2}}}.
\end{split}
\end{equation}

%%%%%%%%%%%%%%%%%%%%%%%%%%%%%%%%%%%%%%%

\section{Forward Looking Best-Response Multiplicative Weights Update Method (FLBR-MWU)}
\label{sec:theory}

\subsection{Definition of the Dynamics}
\label{subsec:def}

We now present the method studied in this work, which we refer to as Forward Looking Best-Response Multiplicative Weights Update method (FLBR-MWU). We provide first a short description of the main idea behind the dynamics. This is an extra gradient method and each iteration has an intermediate and a final step. Suppose that starting from some initial profile, we reach the profile $(x^{t-1}, y^{t-1})$ by the end of iteration $t-1$. In the intermediate step of iteration $t$, we compute a strategy $\hat{x}^t$ for the row player (resp. $\hat{y}^t$ for the column player), which is an approximate best-response strategy to $y^{t-1}$ (resp. to $x^{t-1}$). This serves as a look ahead step of what would be the currently optimal choices. In the final step of iteration $t$, we compute the new mixed strategy $x^t$ for the row player, by 
performing multiplicative weights updates, but after assuming that the opponent was playing $\hat{y}^t$. 

Formally, the first step of the dynamics, denoted as the intermediate best response (IBR) step, is defined below, at iteration $t$, and for all $i, j\in [n]$, given a non-negative parameter $\xi\in\mathbb{R}^+$ ($\xi$ will be chosen sufficiently large, as will become clear from Lemma \ref{lem:xi}).
\begin{equation}
\label{step1}
\begin{split}
 & \hat{x}^t_i = x^{t-1}_i\cdot\frac{e^{\xi e_i^TRy^{t-1}}}{\sum\nolimits_{j=1}^{n}x^{t-1}_j e^{\xi e_j^TRy^{t-1}}}, \enspace \hat{y}^t_j = y^{t-1}_j\cdot\frac{e^{-\xi e_j^TR^Tx^{t-1}}}{\sum\nolimits_{i=1}^{n}y^{t-1}_i e^{-\xi e_i^TR^Tx^{t-1}}}.
\end{split}
\end{equation}

The second step, which updates the profile $(x^{t-1}, y^{t-1})$ to $(x^{t}, y^{t})$ is below, given the learning rate parameter $\eta\in (0, 1)$. We assume that we use the same fixed constants $\eta$ and $\xi$ in all iterations\footnote{It is an interesting topic for future work, to examine adaptive schemes for $\xi$ and $\eta$ throughout the iterations.}.
\begin{equation}
\label{step2}
\begin{split}
 & x^t_i = x^{t-1}_i\cdot\frac{e^{\eta e_i^TR\hat{y}^{t}}}{\sum\nolimits_{j=1}^{n}x^{t-1}_j e^{\eta e_j^TR\hat{y}^{t}}}, \enspace y^t_j = y^{t-1}_j\cdot\frac{e^{-\eta e_j^TR^T\hat{x}^{t}}}{\sum\nolimits_{i=1}^{n}y^{t-1}_i e^{-\eta e_i^TR^T\hat{x}^{t}}}.
\end{split}
\end{equation}

\begin{remark}
By setting $\xi = \eta$ in Equation \eqref{step1} above, the proposed method becomes the same as OMD with entropic regularization \cite{DBLP:conf/iclr/MertikopoulosLZ19}, which can also be viewed as OMWU \cite{Wei2021LinearLC}. In our method however, $\eta$ and $\xi$ differ substantially across both our theoretical and experimental results.
\end{remark}

%%%%%%%%%%%%%%%%%%%%%%%%%%%%%%%%%%%%%%%

\subsection{Main Results}

We consider games with a unique Nash equilibrium, as in \cite{Daskalakis2019LastIterateCZ}, since it has been argued that the set of zero-sum games with non-unique equilibrium has \emph{Lebesgue measure} equal to zero \cite{Damme91}. For convenience, we also assume that the initial strategy profile consists of the uniform distribution for each player. However, our results hold for any fully-mixed initial profile, with a non-zero probability to all pure strategies. 

The main result of our work is the following theorem. 

\begin{theorem}
\label{thm:main}
Consider a zero-sum game with a unique Nash equilibrium $(x^*, y^*)$. Starting with the uniform distribution for each player, the FLBR-MWU dynamics attain last-iterate convergence to the Nash equilibrium, i.e., $\lim_{t\rightarrow \infty} (x^t, y^t) = (x^*, y^*)$, when $\eta$ is chosen sufficiently small, and for big enough $\xi$, so that $\eta \xi <1$.
\end{theorem}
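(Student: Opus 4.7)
The plan is to use the total Kullback–Leibler divergence to the equilibrium,
\[
\Phi^t := D_{\mathrm{KL}}(x^* \,\|\, x^t) + D_{\mathrm{KL}}(y^* \,\|\, y^t),
\]
as a Lyapunov function and to argue convergence in two phases, mirroring the high-level strategy of \cite{Daskalakis2019LastIterateCZ}. In Phase~1, I would show that as long as $(x^{t-1},y^{t-1})$ is not yet an $\eta^{1/\rho}$-approximate Nash equilibrium (for some $\rho>1$ to be fixed), $\Phi$ drops by $\Omega(\eta^{1+1/\rho})$ per step. Since $\Phi^t\ge 0$, this can happen only finitely often, after which we enter Phase~2: a neighborhood of $(x^*,y^*)$ on which the composed update is a contraction. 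The combination then gives $(x^t,y^t)\to(x^*,y^*)$.

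\paragraph*{Phase 1 (divergence descent).} Using the closed form of \eqref{step2}, one first derives the identity
\[
\Phi^{t-1}-\Phi^t = \eta\bigl[(x^*)^T R\hat{y}^t - (\hat{x}^t)^T R y^*\bigr] + \eta\bigl[(\hat{x}^t)^T R y^{t-1} - (x^{t-1})^T R\hat{y}^t\bigr] + E^t,
\]
where $E^t = -\log Z^x_t - \log Z^y_t + \eta(x^{t-1})^T R\hat{y}^t - \eta(\hat{x}^t)^T R y^{t-1}$ collects the log-normaliser corrections with $Z^x_t, Z^y_t$ the partition functions of \eqref{step2}. The Nash inequalities $(x^*)^T R\hat{y}^t \ge (x^*)^T R y^* \ge (\hat{x}^t)^T R y^*$ make the first bracket non-negative. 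The second bracket is exactly the duality gap at $(x^{t-1},y^{t-1})$ evaluated through the look-ahead profile $(\hat{x}^t,\hat{y}^t)$: when $\xi$ is large enough, Lemma~\ref{lem:xi} guarantees that $\hat{x}^t$ and $\hat{y}^t$ are approximate best responses to $y^{t-1}$ and $x^{t-1}$ respectively, so this bracket is bounded below by the true duality gap $\mathrm{Gap}(x^{t-1},y^{t-1})$ minus a $\xi$-vanishing error $\delta_\xi$. A second-order Taylor expansion of $\log\sum_j x_j^{t-1} e^{\eta u_j}$ around $\eta=0$ (using $|R_{ij}|\le 1$) gives $E^t = -O(\eta^2)$. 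Putting these three pieces together,
\[
\Phi^{t-1}-\Phi^t \;\ge\; \eta\,\mathrm{Gap}(x^{t-1},y^{t-1}) - O(\eta^2) - O(\eta\,\delta_\xi).
\]
If $(x^{t-1},y^{t-1})$ fails to be an $\eta^{1/\rho}$-approximate equilibrium, then $\mathrm{Gap}(x^{t-1},y^{t-1})=\Omega(\eta^{1/\rho})$, so for sufficiently small $\eta$ and sufficiently large $\xi$ (still compatible with $\eta\xi<1$) the right-hand side is $\Omega(\eta^{1+1/\rho})$.

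\paragraph*{Phase 2 (local contraction).} Uniqueness of $(x^*,y^*)$, together with the fact that pure strategies outside $\mathrm{supp}(x^*)\cup\mathrm{supp}(y^*)$ are strictly dominated at $(x^*,y^*)$, implies that once we are sufficiently close, the supports of the iterates stabilise and the dynamics become smooth on the tangent space of the simplex. Linearising the composition of \eqref{step1}–\eqref{step2} at the fixed point $(x^*,y^*)$, where indeed $(\hat{x}^*,\hat{y}^*)=(x^*,y^*)$, the Jacobian has off-diagonal blocks whose leading-order norms scale like $\eta$ and $\eta\xi$ (reflecting the chaining of the $\xi$-rate look-ahead with the $\eta$-rate update). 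The hypothesis $\eta\xi<1$, combined with the optimistic-style partial cancellation in the update, keeps the spectral radius of the linearisation strictly below~$1$, so the map is a local contraction and linear convergence in $\Phi$ follows.

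\paragraph*{Main obstacle.} The hardest part is balancing the three parameters $\eta$, $\xi$, and the best-response error $\delta_\xi$ simultaneously: Phase~1 needs $\xi$ large enough that $\delta_\xi = o(\eta^{1/\rho})$; Phase~2 requires $\eta\xi<1$ for the spectral bound; and the $O(\eta^2)$ Taylor remainder in Phase~1 must be dominated by $\eta\cdot\eta^{1/\rho}$, which forces $\eta$ small. A secondary subtlety is to show that pure strategies outside $\mathrm{supp}(x^*)\cup\mathrm{supp}(y^*)$ which may briefly accumulate small mass during Phase~1 do not obstruct the transition to Phase~2; this again hinges on Lemma~\ref{lem:xi} together with the uniqueness assumption, which ensures those pure strategies have strictly suboptimal payoffs on a sufficiently small neighborhood of $(x^*,y^*)$ and are therefore exponentially suppressed by the multiplicative updates.
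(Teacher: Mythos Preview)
Your two-phase plan is essentially the paper's approach: Phase~1 corresponds to Theorem~\ref{thm:KL_divergence} (KL descent by $\Omega(\eta^{1+1/\rho})$ per step until an $O(\eta^{1/\rho})$-approximate equilibrium is reached), and Phase~2 corresponds to Theorem~\ref{thm:contraction} (spectral radius of the Jacobian at the fixed point is $<1$). Your Phase~1 identity and the bound $E^t=-O(\eta^2)$ via Taylor expansion of the log-partition functions are equivalent to the paper's computation.

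There is, however, one genuine omission in your plan: you never justify the passage from ``the output of Phase~1 is an $O(\eta^{1/\rho})$-approximate Nash equilibrium'' to ``$(x^t,y^t)$ lies in the contraction neighborhood of $(x^*,y^*)$.'' An approximate equilibrium need not be metrically close to a true one in general; uniqueness is what makes this work, but you must invoke it through a quantitative statement. The paper treats this as a separate intermediate step (Theorem~\ref{thm:neighborhood}), appealing to a result from \cite{EY10}: in a game with a unique equilibrium, every $\varepsilon$-Nash equilibrium is $\delta(\varepsilon)$-close to it in $\ell_1$, with $\delta\to 0$ as $\varepsilon\to 0$. Without this bridging argument your two phases do not connect; your ``once we are sufficiently close'' in Phase~2 presumes exactly what has not yet been shown.

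A second, smaller point: your Phase~2 sketch attributes the spectral bound to an ``optimistic-style partial cancellation.'' The paper's mechanism is different and more delicate. After peeling off the rows for $i\notin\mathrm{supp}(x^*)\cup\mathrm{supp}(y^*)$ (which give eigenvalues $e^{\eta(e_i^T R y^*-v)}<1$ directly, by quasi-strictness of the unique equilibrium) and quotienting out two trivial left eigenvectors, the support submatrix has the block form with diagonal blocks $I+\eta\xi D^{xx}$ and $I+\eta\xi D^{yy}$. The crux is Lemma~\ref{lem:Dxx-negative}: the diagonals $D^{xx}_{ii},D^{yy}_{jj}$ are \emph{strictly} negative, and the proof uses uniqueness a second, independent time (a zero diagonal entry would force an additional pure equilibrium). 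The hypothesis $\eta\xi<1$ then places those diagonal entries in $(0,1)$, and a $p$-norm estimate for large $p$ pushes the spectral radius below $1$. Finally, your remark that ``supports stabilise'' is misleading: from the uniform start, all coordinates stay strictly positive for every finite $t$; what actually happens is that non-support coordinates decay geometrically once the contraction takes over.
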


The goal of the remaining section is to establish the proof of Theorem \ref{thm:main}. Towards this, we start with the choice of $\xi$. The next lemma provides the important observation, that as $\xi \rightarrow \infty$, the strategy $\hat{x}^t$, computed in the first step of iteration $t$, becomes a best response against $y^{t-1}$ (analogously for $\hat{y}^t$). 

\begin{lemma}\label{lem:xi}
Given any $t > 0$, let $\hat{x}^t$, $\hat{y}^t$,  be the strategies produced by the first step of iteration $t$. As 
$\xi \to +\infty$, then $\hat{x}^t$ converges to a best-response strategy against $y^{t-1}$ (similarly for $\hat{y}^t$ against $x^{t-1}$).
\end{lemma}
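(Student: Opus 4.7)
The plan is to show the limit explicitly by factoring out the maximum exponent from numerator and denominator of the softmax expression in \eqref{step1}, and to use the fact that all iterates remain fully mixed so that probability mass on best-response pure strategies is not artificially zeroed out.

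First, I would fix the iteration $t$ and abbreviate $u_i := e_i^T R y^{t-1}$, letting $u^* := \max_i u_i$ and $B := \{i \in [n] : u_i = u^*\}$ denote the set of pure best responses to $y^{t-1}$. A pure or mixed strategy is a best response to $y^{t-1}$ exactly when it is supported on $B$. Rewriting \eqref{step1} by multiplying numerator and denominator by $e^{-\xi u^*}$ gives
\begin{equation*}
\hat{x}^t_i \;=\; \frac{x^{t-1}_i \, e^{\xi (u_i - u^*)}}{\sum_{j=1}^{n} x^{t-1}_j \, e^{\xi (u_j - u^*)}}.
\end{equation*}
For $i \in B$ the exponent is zero, while for $i \notin B$ we have $u_i - u^* < 0$, so $e^{\xi (u_i - u^*)} \to 0$ as $\xi \to +\infty$.

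Next I would use the assumption that the play starts from the uniform distribution and argue, by a short induction on $t$, that every iterate $x^{t-1}$ is fully mixed, i.e.\ $x^{t-1}_j > 0$ for all $j$. This is immediate since both update steps \eqref{step1} and \eqref{step2} multiply each coordinate by a strictly positive exponential factor and renormalize. Consequently $\sum_{j \in B} x^{t-1}_j > 0$, and taking the limit termwise yields
\begin{equation*}
\lim_{\xi \to +\infty} \hat{x}^t_i \;=\;
\begin{cases}
\dfrac{x^{t-1}_i}{\sum_{k \in B} x^{t-1}_k}, & i \in B, \\[4pt]
0, & i \notin B.
\end{cases}
\end{equation*}
This limit is a well-defined probability distribution supported on $B$, hence a best response to $y^{t-1}$. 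The identical argument applied to the column player's update in \eqref{step1}, with $-\xi$ in place of $\xi$ and $R^T x^{t-1}$ in place of $R y^{t-1}$, gives the analogous statement for $\hat{y}^t$ against $x^{t-1}$.

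I do not expect any real obstacle here: the only subtlety is making sure that positivity of the prior weights $x^{t-1}_i$ on the best-response coordinates is preserved, which is handled by the induction on fully-mixed iterates. It is worth noting that in general the limiting best response need not be pure (when $|B| > 1$) and depends on $x^{t-1}$, but the statement of the lemma only asserts convergence to \emph{some} best-response strategy, which is exactly what the computation delivers.
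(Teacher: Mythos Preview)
Your proposal is correct and follows essentially the same approach as the paper: both compute the softmax limit by dividing through by an exponential, showing that non--best-response coordinates vanish while best-response coordinates survive with weights proportional to $x^{t-1}_i$. Your version is marginally cleaner because you factor out the single quantity $e^{\xi u^*}$ uniformly, whereas the paper divides by $e^{\xi e_i^T R y}$ and then splits into two cases (introducing an auxiliary index set $A(y)$ for $i\notin B$); both arrive at the same limit $x^{t-1}_i/\sum_{k\in B} x^{t-1}_k$.
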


The proof of the lemma can be found in the supplementary material. In the sequel, we assume that $\xi$ has been chosen sufficiently large, so that $\hat{x}^t$ is an $\epsilon$-best response with $\epsilon \rightarrow 0$. 
For appropriate choices of $\xi$ in practice, we refer to the discussion in Section \ref{sec:exp}. 

The proof of Theorem \ref{thm:main} is split into 3 parts. The first part establishes that after a certain number of iterations, the dynamics reach a profile $(x^t, y^t)$, that is an $O(\eta^{1/\rho})$-Nash equilibrium with $\rho > 1$. The second part shows that the profile $(x^t, y^t)$ lies within a neighborhood of the actual equilibrium $(x^*, y^*)$. Finally, the last part shows that the update rule of FLBR-MWU is a contracting map, i.e., once we are within a neighborhood of $(x^*, y^*)$, the dynamics converge to their fixed point, which directly implies last-iterate convergence. These three parts are established in Theorems \ref{thm:KL_divergence}, \ref{thm:neighborhood} and \ref{thm:contraction} respectively. The structure of the proof is similar to the convergence proof of OMWU in \cite{Daskalakis2019LastIterateCZ}. There are however differences in various parts of the analysis. Most importantly, in the first part, we are able to establish a better convergence rate to an approximate equilibrium, whereas OMWU achieves an $\Omega(\eta^3)$ decrease rate. Furthermore, in the third part, the analysis of our Jacobian matrix (proof of Theorem \ref{thm:contraction}) is also different since we are analyzing sufficiently different dynamics.  

To proceed with the first part of the proof, we will use the \emph{Kullback-Leibler (KL) divergence} as a measure of progress. The KL divergence quantifies the similarity between two distributions, and here we will consider the divergence between a profile $(x^t, y^t)$ and the equilibrium $(x^*,y^*)$, which equals:
\begin{equation}\label{eq:KL Divergence}
\begin{split}
 & D_{KL}((x^*,y^*)||(x^t,y^t)) = \sum\limits_{i=1}^{n}x^*_i \ln(x^*_i/x^t_i) + \sum\limits_{j=1}^{n}y^*_j \ln(y^*_j/y^t_j).
\end{split}
\end{equation}
Note that by the initialization and the definition of the dynamics, $x_i^t > 0$, $y_j^t > 0$ for any given $t$, and any $i$, $j$, so that the logarithmic terms above are well-defined.

\begin{theorem}\label{thm:KL_divergence}
Consider a zero-sum game with a unique Nash equilibrium $(x^*,y^*)$. Assume that we run the FLBR-MWU dynamics with the uniform distribution as the initial strategy for both players, and using a sufficiently small $\eta$ and a big enough $\xi$. Then, for any $\rho>1$, the KL divergence $D_{KL}((x^*,y^*)||(x^t,y^t))$ decreases at every iteration with a rate of at least $\Omega{(\eta^{1+1/\rho})}$, until we reach an $O(\eta^{1/\rho})$-Nash equilibrium of the game.
\end{theorem}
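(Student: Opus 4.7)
The plan is a direct one-step divergence estimate. Starting from the update rule \eqref{step2}, write $\ln(x^t_i/x^{t-1}_i) = \eta\, e_i^T R \hat y^t - \ln Z_x^t$ and $\ln(y^t_j/y^{t-1}_j) = -\eta\, e_j^T R^T \hat x^t - \ln Z_y^t$, where $Z_x^t,Z_y^t$ are the normalizing partition functions. Substituting into \eqref{eq:KL Divergence}, the cross-entropy contributions against $x^*,y^*$ collapse to
\begin{equation*}
D_{KL}^{t-1} - D_{KL}^t \;=\; \eta\bigl[(x^*)^T R \hat y^t - (\hat x^t)^T R y^*\bigr] - \ln Z_x^t - \ln Z_y^t.
\end{equation*}
Since $R\in(0,1]^{n\times n}$, every argument of the exponential inside $Z_x^t$ and $Z_y^t$ lies in a unit-length interval, so Hoeffding's lemma gives $\ln Z_x^t \le \eta\,(x^{t-1})^T R \hat y^t + \eta^2/8$ and $\ln Z_y^t \le -\eta\,(\hat x^t)^T R y^{t-1} + \eta^2/8$. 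Combining,
\begin{equation*}
D_{KL}^{t-1} - D_{KL}^t \;\ge\; \eta\bigl[(x^*-x^{t-1})^T R\hat y^t + (\hat x^t)^T R (y^{t-1}-y^*)\bigr] - \eta^2/4.
\end{equation*}

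The crux is to decompose the bracketed quantity by inserting $\pm v$ (the value of the game) twice, rewriting it as the sum of four pieces
\begin{equation*}
\bigl[(x^*)^T R\hat y^t - v\bigr] + \bigl[v - (\hat x^t)^T R y^*\bigr] + \bigl[(\hat x^t)^T R y^{t-1} - (x^{t-1})^T R y^{t-1}\bigr] + \bigl[(x^{t-1})^T R y^{t-1} - (x^{t-1})^T R \hat y^t\bigr].
\end{equation*}
The first two are non-negative by the saddle-point inequalities $(x^*)^T R y \ge v$ and $x^T R y^* \le v$, valid for all $x,y$. The last two are, up to an additive $\epsilon$ that Lemma~\ref{lem:xi} makes arbitrarily small by choosing $\xi$ large, exactly the row-player regret $\delta_x = \max_x x^T R y^{t-1} - (x^{t-1})^T R y^{t-1}$ and the column-player regret $\delta_y = (x^{t-1})^T R y^{t-1} - \min_y (x^{t-1})^T R y$ at the current profile. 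Hence $D_{KL}^{t-1} - D_{KL}^t \ge \eta(\delta_x + \delta_y - 2\epsilon) - \eta^2/4$.

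By the additive definition of $\alpha$-Nash equilibrium, if $(x^{t-1},y^{t-1})$ fails to be a $C\eta^{1/\rho}$-NE, then $\delta_x + \delta_y \ge C\eta^{1/\rho}$. Choose $\xi$ so that $\epsilon = o(\eta^{1/\rho})$ via Lemma~\ref{lem:xi}, and take $\eta$ small enough that $\eta^2 = o(\eta^{1+1/\rho})$; the latter is precisely where the assumption $\rho>1$ is used, since it forces $1+1/\rho<2$. The linear gain then dominates both the Hoeffding correction and the best-response approximation error, yielding the claimed $\Omega(\eta^{1+1/\rho})$ decrease per iteration until an $O(\eta^{1/\rho})$-Nash equilibrium is reached. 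The main obstacle, and the reason this rate beats the $\Omega(\eta^3)$ bound obtained for OMWU, is the decomposition step: because $\hat x^t,\hat y^t$ are genuine (approximate) best responses rather than a single gradient step, the regrets $\delta_x,\delta_y$ enter at first order in $\eta$ instead of being damped to higher order.
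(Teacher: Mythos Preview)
Your proposal is correct and follows essentially the same route as the paper: substitute the update rule into the KL difference, bound the log-partition functions to second order in $\eta$, then decompose the first-order term so that the saddle-point inequalities $(x^*)^T R y \ge v \ge x^T R y^*$ discard two non-negative pieces and the remaining two pieces are precisely the best-response gains $\varepsilon_1,\varepsilon_2$ (your $\delta_x,\delta_y$) at $(x^{t-1},y^{t-1})$. The only notable difference is that you invoke Hoeffding's lemma to bound $\ln Z_x^t$ and $\ln Z_y^t$, yielding the sharper additive correction $\eta^2/4$, whereas the paper Taylor-expands $e^x$, bounds the tail by a geometric series, and then uses $\ln(1+u)\le u$ to obtain $4\eta^2$; both approaches deliver an $O(\eta^2)$ error term, which is all that is needed once $\rho>1$ forces $\eta^{1+1/\rho}\gg\eta^2$.
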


\begin{proof}
Let $(x^*,y^*)$ be the Nash equilibrium of the game, and let $v$ be the value of the game, $v = (x^{*})^TRy^*$. We take the difference of the KL divergences between two consecutive iterations:
\begin{equation*}
\begin{split}
 & D_{KL}((x^*,y^*)||(x^t,y^t)) - D_{KL}((x^*,y^*)||(x^{t-1},y^{t-1})) = \\ 
 & \hspace{5cm} -\Big(\sum\nolimits_{i=1}^{n} x^*_i \ln(x^{t}_i/x^{t-1}_i) + \sum\nolimits_{j=1}^{n} y^*_j \ln(y^{t}_j/y^{t-1}_j)\Big ).
\end{split}
\end{equation*}
We show that this difference is negative and we quantify the decrease in the KL divergence, till we reach an $O(\eta^{1/\rho})$-Nash equilibrium. Analytically, we have that
\begin{equation*}
\begin{split}
  D_{KL} & ((x^*,y^*)||(x^t,y^t)) - D_{KL}((x^*,y^*)||(x^{t-1},y^{t-1})) \\
 &  = -\sum\limits_{i=1}^{n}x^*_i \ln e^{\eta e_i^T R \hat{y}^t} + \ln\Big(\sum\limits_{i=1}^{n} x^{t-1}_i e^{\eta e_i^T R \hat{y}^t}\Big) - \sum\limits_{j=1}^{n}y^*_j \ln e^{-\eta e_j^T R^T \hat{x}^t} + \ln\Big(\sum\limits_{j=1}^{n} y^{t-1}_j e^{-\eta e_j^T R^T \hat{x}^t}\Big) \\
 & = -\eta x^{*T}R\hat{y}^t + \eta (y^{*})^T R^T \hat{x}^t +\eta (x^{t-1})^T R y^{t-1} -\eta (y^{t-1})^T R^T x^{t-1} + \ln\Big(\sum\limits_{i=1}^{n} x^{t-1}_i e^{\eta e_i^T R \hat{y}^t - \eta (x^{t-1})^T R y^{t-1}}\Big)  \\
 & \hspace{1cm} + \ln\Big(\sum\limits_{j=1}^{n} y^{t-1}_j e^{-\eta e_j^T R^T \hat{x}^t+\eta (y^{t-1})^T R^T x^{t-1}}\Big).
\end{split}
\end{equation*}

Notice that in the last expression above, the third term ($\eta (x^{t-1})^TRy^{t-1}$) cancels out with the fourth term. Also, since $(x^*,y^*)$ is an equilibrium, it holds that $x^{*T}R\hat{y}^t\geq v$ and $(y^{*})^TR^T\hat{x}^t \leq v$.
Therefore, the first and second terms also cancel out and yield an upper bound with the two logarithmic terms. 

We now apply the Taylor expansion of $e^x$. For convenience, let
$p_i(\eta) = \eta(e_i^TR\hat{y}^t - (x^{t-1})^TRy^{t-1})$, and let 
$q_j(\eta) = \eta(-e_j^TR^T\hat{x}^t + (x^{t-1})^TRy^{t-1})$. Using these abbreviations, the difference of the KL divergences is upper bounded by
\begin{equation*}
\begin{split}
 & \ln \Big(1+\eta ((x^{t-1})^T R \hat{y}^t - (x^{t-1})^T R y^{t-1}) + \sum\limits_{i=1}^{n} x^{t-1}_i \sum\limits_{k=2}^{\infty} \frac{(p_i(\eta))^k}{k!} \Big) \\
 & \hspace{3cm} + \ln \Big(1 + \eta (-(y^{t-1})^T R^T \hat{x}^t   + (y^{t-1})^T R^T x^{t-1}) + \sum\limits_{j=1}^{n} y^{t-1}_j \sum\limits_{k=2}^{\infty} \frac{(q_j(\eta))^k}{k!} \Big).
\end{split}
\end{equation*}

It is easy to see that $|p_i(\eta)| \leq \eta$ and $|q_j(\eta)| \leq \eta$. This means that for any $k\geq 2$ (i.e., for both odd and even values of $k$), $(p_i(\eta))^k \leq \eta^k$ and $(q_j(\eta))^k \leq \eta^k$. By using the geometric series, we have that $\sum\limits_{k=2}^{\infty} \frac{(p_i(\eta))^k}{k!} \leq \eta^2/(1-\eta)$, and similarly for the series concerning $q_j(\eta)$. If we also use the inequality $\ln(x)\leq x - 1$, we obtain the following sequence of steps.
\begin{equation}\label{ineq:DKL_dist}
\begin{split}
    & D_{KL} ((x^*,y^*)||(x^t,y^t)) - D_{KL}((x^*,y^*)||(x^{t-1},y^{t-1})) \\
    & \hspace{2cm} \leq \ln \Big(1+\eta ((x^{t-1})^T R \hat{y}^t - (x^{t-1})^T R y^{t-1}) + \frac{\eta^2}{(1-\eta)} \Big) \\
    & \hspace{4cm} + \ln \Big(1 + \eta (-(y^{t-1})^TR^T\hat{x}^t+(y^{t-1})^T R^T x^{t-1})  + \frac{\eta^2}{(1-\eta)} \Big) \\
    & \hspace{2cm} \leq \eta ((x^{t-1})^T R\hat{y}^t - (x^{t-1})^T Ry^{t-1} + (x^{t-1})^T R y^{t-1} - \hat{x}^t R y^{t-1}) + 2 \frac{\eta^2}{(1-\eta)} \\ 
    & \hspace{2cm} = -\eta(\varepsilon_1 +\varepsilon_2) +  4\eta^2  \leq -\eta (\max\{\varepsilon_1,\varepsilon_2\}) +4 \eta^2  ,
\end{split}
\end{equation}

where $\varepsilon_1= \hat{x}^t R y^{t-1} - (x^{t-1})^T R y^{t-1}$, $\varepsilon_2= (x^{t-1})^T R \hat{y}^t - (x^{t-1})^T R y^{t-1}$, and the last inequality holds because $\eta \leq 1/2$. Let us look now more carefully at $\varepsilon_1$ (an analogous argument holds for $\varepsilon_2$). The term $\varepsilon_1$ expresses the additional benefit for the row player, if at the profile $(x^{t-1}, y^{t-1})$, she deviates to $\hat{x}^t$. By Lemma \ref{lem:xi}, we know that as $\xi\rightarrow \infty$, then $\hat{x}^t$ tends to her best response against $y^{t-1}$. Hence when we select $\xi$  sufficiently large, $\varepsilon_1$ tends to the best possible deviation gain of the row player at the profile $(x^{t-1}, y^{t-1})$ (resp. for $\varepsilon_2$ and the column player).

To finish the proof, suppose that the profile $(x^{t-1},y^{t-1})$ is not an $O(\eta^{1/\rho})$-Nash equilibrium. Then there exists a deviation that provides additional gain of $\Omega(\eta^{1/\rho})$ to one of the players. This implies that $\max\{\varepsilon_1, \varepsilon_2\} = \Omega(\eta^{1/\rho})$. 
Hence, by (\ref{ineq:DKL_dist}), and since $\eta<1$, we can see that as long as we have not reached an $O(\eta^{1/\rho})$-Nash equilibrium, the KL divergence will keep decreasing by at least  $\eta \Omega(\eta^{1/\rho}) - 4 \eta^2 = \Omega(\eta^{1+1/\rho})$.
As the KL divergence cannot decrease forever, eventually, our dynamics will reach an $O(\eta^{1/\rho})$-Nash equilibrium. 
\end{proof}

Consider now the first iteration $t$ of the dynamics, where $(x^t, y^t)$ forms an $O(\eta^{1/\rho})$-Nash equilibrium for some fixed $\rho > 1$. The next step is to show that if we make $\eta$ small enough, this profile falls within a neighborhood of the equilibrium $(x^*, y^*)$.

\begin{theorem}
\label{thm:neighborhood}
Let $(x^*,y^*)$ be the unique Nash equilibrium of the zero-sum game, and let $(x^t, y^t)$ be the first profile reached by the dynamics, that is an $O(\eta^{1/\rho})$-Nash equilibrium for some $\rho>1$. Then 
\begin{equation*}
    \lim\limits_{\eta\rightarrow 0} ||(x^*, y^*) - (x^t, y^t)||_1 = 0,
\end{equation*}
\end{theorem}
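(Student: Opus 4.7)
\begin{proofof}{Proof sketch of Theorem \ref{thm:neighborhood}}
The plan is a compactness plus continuity argument, exploiting the uniqueness of the Nash equilibrium. Concretely, I would proceed by contradiction. Suppose the claim fails: then there exist $\delta>0$ and a sequence $\eta_k\to 0^+$ such that, running the FLBR-MWU dynamics with parameter $\eta_k$ (and a correspondingly large $\xi$), the first iterate $(x^{t_k},y^{t_k})$ that qualifies as an $O(\eta_k^{1/\rho})$-Nash equilibrium satisfies
\begin{equation*}
\|(x^{t_k},y^{t_k})-(x^*,y^*)\|_1 \;\geq\; \delta \quad \text{for all } k.
\end{equation*}
By Theorem \ref{thm:KL_divergence} such an iterate is guaranteed to exist for every sufficiently small $\eta_k$, so the sequence $\{(x^{t_k},y^{t_k})\}_k$ is well-defined.

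Next I would invoke compactness. Since each $(x^{t_k},y^{t_k})$ lives in the compact product simplex $\Delta_n\times\Delta_n$, the Bolzano--Weierstrass theorem yields a subsequence (which I still index by $k$) converging to some limit point $(\bar x,\bar y)\in\Delta_n\times\Delta_n$ with
\begin{equation*}
\|(\bar x,\bar y)-(x^*,y^*)\|_1 \;\geq\; \delta \;>\; 0.
\end{equation*}
The core of the argument is then to show that $(\bar x,\bar y)$ is itself an exact Nash equilibrium. By hypothesis, $(x^{t_k},y^{t_k})$ is an $\varepsilon_k$-Nash equilibrium with $\varepsilon_k = O(\eta_k^{1/\rho})\to 0$, meaning that for every pure strategy $e_i,e_j$,
\begin{equation*}
(x^{t_k})^T R y^{t_k} \;\geq\; e_i^T R y^{t_k} - \varepsilon_k, \qquad (x^{t_k})^T R y^{t_k} \;\leq\; (x^{t_k})^T R e_j + \varepsilon_k.
\end{equation*}
The bilinear form $(x,y)\mapsto x^T R y$ and its pure-strategy evaluations are continuous on the simplex, so passing to the limit as $k\to\infty$ (and using $\varepsilon_k\to 0$) preserves the inequalities and gives
\begin{equation*}
(\bar x)^T R \bar y \;\geq\; e_i^T R \bar y, \qquad (\bar x)^T R \bar y \;\leq\; (\bar x)^T R e_j
\end{equation*}
for every $i,j\in[n]$. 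That is precisely the definition of a Nash equilibrium. By uniqueness we must have $(\bar x,\bar y)=(x^*,y^*)$, contradicting $\|(\bar x,\bar y)-(x^*,y^*)\|_1\geq\delta$.

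The main obstacle I anticipate is ensuring the $O(\eta^{1/\rho})$ approximation constant is uniform across the sequence, i.e., that the constant hidden in the big-$O$ does not depend on $\eta$. This comes directly from the derivation of the bound in Theorem \ref{thm:KL_divergence}, where the constant depends only on the payoff matrix $R$ (and the choice of $\rho$), not on $\eta$ or the iterate. Once this uniformity is verified, the rest is a textbook compactness-plus-continuity argument, and no quantitative rate on $\|(x^t,y^t)-(x^*,y^*)\|_1$ in terms of $\eta$ is needed to conclude the stated limit.
\end{proofof}
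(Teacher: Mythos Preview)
Your argument is correct and complete. It is, however, a genuinely different route from the paper's. The paper invokes a quantitative result from \cite{EY10}: for a zero-sum game with a unique equilibrium there is a polynomial $p$ such that every $\varepsilon$-Nash equilibrium is coordinatewise within $\delta$ of $(x^*,y^*)$ once $\varepsilon \le 2^{-p(|R|+size(\delta))}$. Setting $\varepsilon = c\eta^{1/\rho}$ and solving for $\delta(\eta)$ then yields $\|(x^*,y^*)-(x^t,y^t)\|_1 \le 2n\,\delta(\eta)\to 0$. So the paper's proof is \emph{quantitative} in principle (it ties the distance to $\eta$ through the bit complexity of $R$), at the price of importing an external lemma. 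Your compactness-plus-continuity argument is purely qualitative and gives no rate, but it is more elementary and entirely self-contained: it uses nothing beyond the compactness of $\Delta_n\times\Delta_n$, the continuity of the bilinear form, and the uniqueness assumption. Since the theorem as stated only asks for the limit, not a rate, your approach is arguably the cleaner fit here; the paper's approach would pay off only if a later step needed an explicit modulus of convergence.
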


The proof of Theorem~\ref{thm:neighborhood} can be found at the supplementary material, Appendix~\ref{supsec:Omitting proofs}.

The next and final step of our proof is to show that our dynamics induce a contracting map. An update rule with a fixed point $x$ is called a contraction, if there exists a region $U$ around $x$, such that for any starting point in $U$, the rule converges to its fixed point as $t\rightarrow \infty$. 
In our case, the Nash equilibrium $(x^*, y^*)$ of the game is a fixed point of the FLBR-MWU dynamics and Theorem \ref{thm:neighborhood} guarantees that we can reach a neighborhood around $(x^*, y^*)$.
To proceed, we state a sufficient condition for a dynamical system to converge to its fixed point.

\begin{theorem}[see \cite{Galor}]\label{thm:App_jac_contr_thm}
Let ${x}^*$ be a fixed point for the dynamical system ${x}^{(t+1)} = g({x}^{(t)})$. If all eigenvalues of the Jacobian matrix of $g$ at ${x}^*$ have absolute value less than one, then there exists a neighborhood $U$ of $x^*$ such that for all $x \in U$, $g$ converges to $x^*$, starting from $x$.
\end{theorem}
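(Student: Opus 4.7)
The plan is to exploit the spectral condition on $J := Dg(x^*)$ to first produce an adapted norm on the ambient space in which $J$ is a strict linear contraction, and then transfer the contraction property to the nonlinear map $g$ on a sufficiently small neighborhood via a first-order Taylor expansion with remainder. More concretely, once we have a norm $\|\cdot\|_*$ with $\|J\|_* \le \sigma < 1$, we can write $g(x) - x^* = J(x-x^*) + R(x-x^*)$, where $R(h) = o(\|h\|_*)$, and then shrink the neighborhood around $x^*$ enough to absorb the remainder into the spectral gap $1 - \sigma$.

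For the first step, let $\rho = \max_i |\lambda_i(J)| < 1$ by hypothesis, and fix any $\delta > 0$ with $\rho + \delta < 1$. A standard construction gives a symmetric positive definite $P$ for which the inner product $\langle u, v\rangle_P = u^\top P v$ yields an operator norm satisfying $\|J\|_* \le \rho + \delta$. One route is to take $P$ as the unique positive definite solution of the discrete Lyapunov equation $J^\top P J - P = -I$, whose existence is equivalent to $\rho(J) < 1$. Another route uses the Jordan canonical form of $J$ and rescales each Jordan block by $\mathrm{diag}(1,\delta,\delta^2,\dots)$, which shrinks the super-diagonal entries below $\delta$ and yields $\|J\|_* \le \rho + \delta$ in the induced norm. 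Set $\sigma := \rho + \delta$.

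For the second step, since $g$ is differentiable at $x^*$ with derivative $J$, the Taylor remainder $R(h) := g(x^* + h) - x^* - Jh$ satisfies $\|R(h)\|_* / \|h\|_* \to 0$ as $h \to 0$. Pick $\varepsilon > 0$ with $\lambda := \sigma + \varepsilon < 1$ and choose $r > 0$ small enough that $\|R(h)\|_* \le \varepsilon \|h\|_*$ whenever $\|h\|_* \le r$. Letting $U := \{x : \|x - x^*\|_* \le r\}$, we obtain for every $x \in U$:
\begin{equation*}
\|g(x) - x^*\|_* \;\le\; \|J(x-x^*)\|_* + \|R(x-x^*)\|_* \;\le\; (\sigma + \varepsilon)\,\|x - x^*\|_* \;=\; \lambda\,\|x-x^*\|_*,
\end{equation*}
with $\lambda < 1$. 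In particular $g(U) \subseteq U$, so iterating gives $\|g^t(x) - x^*\|_* \le \lambda^t \|x - x^*\|_*$, and hence $g^t(x) \to x^*$ for every $x \in U$, which is the stated conclusion.

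The main obstacle is the first step. When $J$ fails to be diagonalizable (or has eigenvalues close to the unit circle with nontrivial Jordan structure), the standard Euclidean operator norm $\|J\|_2$ can exceed $1$ even though $\rho(J) < 1$, as a $2\times 2$ Jordan block with eigenvalue near $1$ illustrates; so the adapted norm is genuinely needed and is the heart of the argument. Once that is in hand, the Taylor step is routine. This result is a classical fact in discrete dynamical systems and is invoked here as a black box (cf.\ \cite{Galor}) in order to complete the final contraction phase of the FLBR-MWU convergence proof.
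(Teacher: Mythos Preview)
Your argument is correct and follows the standard route for this classical local stability theorem: build an adapted norm in which the linearization is a strict contraction (via the discrete Lyapunov equation or a rescaled Jordan basis), then absorb the Taylor remainder into the spectral gap on a small ball. Note, however, that the paper does not supply its own proof of this statement; it is quoted as a known result from the dynamical systems literature (the ``see \cite{Galor}'' attribution) and used as a black box to establish the final contraction phase in Theorem~\ref{thm:contraction}. So there is no paper-side proof to compare against---your write-up simply fills in what the paper delegates to the reference, and you already acknowledge this in your final paragraph.
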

\vspace{-1mm}
Using Theorem \ref{thm:App_jac_contr_thm}, we show the following theorem, whose proof can be found at the supplementary material, Appendix~\ref{supsec:Omitting proofs}.

\begin{theorem}
\label{thm:contraction}
The update rule of FLBR-MWU is a contraction, as long as $\eta\xi < 1$, i.e. $\lim_{t\rightarrow \infty} (x^{\scriptstyle t},y^t) = (x^*,y^*)$.
\end{theorem}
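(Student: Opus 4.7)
The plan is to apply Theorem~\ref{thm:App_jac_contr_thm} to the composed update rule $g = g_2 \circ g_1$, where $g_1$ denotes the intermediate step \eqref{step1} mapping $(x^{t-1},y^{t-1}) \mapsto (\hat x^t,\hat y^t)$ and $g_2$ denotes the final step \eqref{step2}. Concretely, I need to show that the Jacobian $J_g$ of $g$ at $(x^*,y^*)$ has spectral radius strictly less than $1$ whenever $\eta\xi < 1$ and $\eta$ is small enough (inherited from Theorem~\ref{thm:neighborhood}).

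First I would verify that $(x^*,y^*)$ is indeed a fixed point of $g$. Writing $S_x = \mathrm{supp}(x^*)$ and $S_y = \mathrm{supp}(y^*)$, the equilibrium property gives $e_i^T R y^* = v$ for all $i \in S_x$, so that \eqref{step1} evaluated at $(x^*,y^*)$ becomes $\hat x^t_i = x^*_i e^{\xi v}/(e^{\xi v}\sum_{j \in S_x} x^*_j) = x^*_i$, and analogously $\hat y^t = y^*$; feeding this into \eqref{step2} returns $(x^t,y^t) = (x^*,y^*)$. Next, by the chain rule,
\[
J_g = \frac{\partial g_2}{\partial(x,y)}\bigg|_{(x^*,y^*,x^*,y^*)} + \frac{\partial g_2}{\partial(\hat x,\hat y)}\bigg|_{(x^*,y^*,x^*,y^*)} \cdot J_{g_1}(x^*,y^*),
\]
each block being a standard multiplicative-weights Jacobian. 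Using the identity $\partial_k[x_i e^{a_i}/\sum_j x_j e^{a_j}]$ evaluated where the $a_j$ are equal on the support and strictly smaller off-support produces the usual simplex-projection factor of the form $I - \mathbf{1}(x^*)^T$ on $S_x$ (and analogously on $S_y$).

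I would then bound the spectral radius by decomposing $J_g$ into an in-support block (indices in $S_x \cup S_y$) and an out-of-support block. For $i \notin S_x$, the chained exponential weighting in both steps yields a diagonal multiplier of $\exp\bigl(-(\eta+\xi)(v - e_i^T R y^*)\bigr) + o(1)$, which is strictly less than $1$ by the equilibrium gap $v - e_i^T R y^* > 0$ (and symmetrically for $j \notin S_y$); these give eigenvalues safely inside the unit disk. For the in-support block, perturbations $(\delta x_S, \delta y_S)$ in the simplex tangent space propagate through step \eqref{step1} as $(\delta x_S + \xi\Pi\tilde R \delta y_S,\ \delta y_S - \xi\Pi\tilde R^T \delta x_S)$, where $\tilde R$ is the support-restricted centered payoff matrix and $\Pi$ is the tangent projection, and through step \eqref{step2} they pick up an additional $-\eta\Pi\tilde R$-type coupling to $\hat y$ (resp. $+\eta\Pi\tilde R^T$ for $x$). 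Composing, the first-order map on the support has block form
\[
\begin{pmatrix} I & -\eta\Pi\tilde R \\ \eta\Pi\tilde R^T & I \end{pmatrix}
+ \eta\xi \begin{pmatrix} -\Pi\tilde R\tilde R^T\Pi & 0 \\ 0 & -\Pi\tilde R^T\tilde R\Pi \end{pmatrix} + O(\eta^2),
\]
whose eigenvalues are approximately $1 - \eta\xi\sigma_k^2 \pm i\eta\sigma_k + O(\eta^2)$, giving $|\lambda_k|^2 \approx 1 - 2\eta\xi\sigma_k^2 + \eta^2\sigma_k^2 + (\eta\xi\sigma_k^2)^2$ where $\sigma_k$ ranges over singular values of $\Pi\tilde R\Pi$. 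The condition $\eta\xi < 1$ together with sufficiently small $\eta$ keeps every such $|\lambda_k|$ strictly below $1$.

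The main obstacle is the spectral analysis of the in-support block: one must carefully track the simplex-normalization cancellations together with the antisymmetric saddle-point structure that, in a single-step MWU, would make the pure bilinear coupling $\pm i\eta\sigma_k$ dominate and push eigenvalues \emph{outside} the unit circle. Here the key observation is that the intermediate large-rate step injects an $-\eta\xi\,\Pi\tilde R\tilde R^T\Pi$ damping term into the composed Jacobian, and isolating this as the dominant correction (rather than an $O(\eta^2)$ artifact) is what makes $\eta\xi < 1$ the natural sharp threshold. The argument structurally mirrors the Jacobian analysis of OMWU in \cite{Daskalakis2019LastIterateCZ}, but with the implicit negative-momentum correction replaced by the explicit rate $\xi$, which both clarifies the role of $\eta\xi$ as the governing control parameter and requires a fresh block-eigenvalue bound tailored to our two-rate dynamics.
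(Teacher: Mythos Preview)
Your overall strategy---compute the Jacobian of the composed map at $(x^*,y^*)$ and invoke Theorem~\ref{thm:App_jac_contr_thm}, then split into out-of-support and in-support blocks---matches the paper. But the in-support analysis has two genuine gaps.

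First, the linearized in-support block is not the clean normal object you wrote. Carrying out the derivatives exactly (as the paper does) gives, for the damping block, $-\eta\xi\,\mathrm{diag}(x^*)\,(I-\mathbf{1}(x^*)^T)\,R\,\mathrm{diag}(y^*)\,R^T$, and for the $xy$-coupling block, $\eta\,\mathrm{diag}(x^*)\,(I-\mathbf{1}(x^*)^T)\,R$; analogously for the $yy$ and $yx$ blocks with $\mathrm{diag}(y^*)$ in front. The left multiplication by $\mathrm{diag}(x^*)$ and $\mathrm{diag}(y^*)$ makes the matrix non-normal, and there is no evident similarity that symmetrizes both the antisymmetric coupling and the damping simultaneously. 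Hence the eigenvalue formula $1-\eta\xi\sigma_k^2\pm i\eta\sigma_k$ via an SVD of $\Pi\tilde R\Pi$ is not justified, and your $|\lambda_k|^2$ computation does not go through.

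Second, and more importantly, you are missing the step that actually drives the contraction: one must show that the \emph{diagonal} of the damping block is strictly negative, i.e.\ $D^{xx}_{ii}<0$ and $D^{yy}_{jj}<0$. This is the paper's Lemma~\ref{lem:Dxx-negative}; it is not automatic, requires a convexity/mass-transfer argument comparing $\sum_l R_{il}^2 y_l^*$ to $\sum_k x_k^*\sum_l R_{kl}y_l^*R_{il}$, and uses the uniqueness of $(x^*,y^*)$ in an essential way (equality would produce a second equilibrium). With that lemma in hand, the paper avoids diagonalization altogether and bounds the spectral radius by an induced $p$-norm: each column of the reduced Jacobian $J'$ has diagonal entry $|1+\eta\xi D^{xx}_{jj}|<1$ and $O(k)$ off-diagonal entries of size at most $\eta\xi$ or $\eta$, so $\|J'\|_p^p\le k\bigl(|1+\eta\xi D^{xx}_{jj}|^p + k_1(\eta\xi)^p + k_2\eta^p\bigr)$, which is $<1$ for $\eta\xi<1$, small $\eta$, and $p$ large enough.

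A minor correction: the out-of-support multiplier is $e^{\eta(e_i^TRy^*-v)}$, not $e^{-(\eta+\xi)(v-e_i^TRy^*)}$; the intermediate $\hat x_i$ is never fed back into the update for $x_i^t$, so $\xi$ does not appear there. This does not affect your conclusion since the multiplier is still strictly below $1$.
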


%%%%%%%%%%%%%%%%%%%%%%%%%%%%%%%%%%%%%%%

\vspace{-1mm}
\section{Numerical Experiments}
\vspace{-2mm}
\label{sec:exp}
We note that additional supporting figures and elaboration on our experiments can be found in the supplementary material (Appendix~\ref{supsec:Numerical Experiments}).

{\bf Nash equilibrium estimation.} In order to make comparisons, we need first to compute the equilibria of the generated instances. 
Instead of using a linear programming solver, the equilibrium computation is performed using the proposed FLBR-MWU algorithm with $\eta=0.05$. FLBR-MWU is an iterative approach thus a convergence criterion to ensure that the Nash equilibrium has been reached is required. We propose as a convergence criterion the $D_{KL}$ between the update step and the IBR step of our dynamics: $D_{KL}((x^t,y^t)||(\hat{x}^t,\hat{y}^t))$. This metric is sufficient because the best response strategy at Nash equilibrium is exactly the equilibrium strategy, thus $\lim_{t\to\infty} D_{KL}((x^t,y^t)||(\hat{x}^t,\hat{y}^t)) = 0$ (for small enough $\eta$). We return the solution when the convergence criterion becomes $10^{-15}$, which is approximately the machine's arithmetic precision, or when the maximum number of steps, denoted by $t_{\max}$, --typically millions of steps-- has been reached. In the infrequent latter case (it happened in less than 0.1\% of the instances), we discard the returned solution.

{\bf Effect of the intermediate rate ($\xi$).} In our learning dynamics, the best response strategy is approximated by the softmax function (a.k.a. the normalized exponential function or the Gibbs measure in statistical physics). Sending $\xi$ to infinity, one out of the potentially many best response strategies is obtained as intermediate dynamics by Equation \eqref{step1}. However, $\xi$ should be finite from a practical point of view. Since it appears at the exponentials' argument, very high values of $\xi$ may result in arithmetic imprecision. Therefore, we conducted a numerical study to assess the effect of $\xi$ on the convergence of the algorithm. Table \ref{tab:ksi} presents various statistics about the number of steps required for several values of $\xi$ and for two values of the size of the payoff matrix $R$, with $\eta=0.1$. We average over $10^3$ repetitions using random payoff matrices, whose elements are iid sampled from $\mathcal U([0,1])$.
Evidently, as $\xi$ increases, the FLBR-MWU dynamics require fewer steps in order to reach a specific threshold of accuracy (set to $10^{-10}$ for the $D_{KL}$ between the Nash equilibrium and the FLBR-MWU dynamics). However, the solution occasionally produces `NaN' for values of $\xi$ above 200, due to overflow in the exponentials\footnote{Overflow can be easily fixed by subtracting the maximum value but with an increased underflow risk.}. Overall, values between 50 and 100 are a sufficient compromise between the best response approximation and machine precision trade-off. We also note that even when we select values that violate the condition $\eta\xi < 1$ (which we needed for our theoretical results), we still attain convergence in most cases. In the remaining experiments of this section, we set $\xi=100$, even though larger values can be tolerated especially when both $n\gg1$ and $x_i^*,y_j^*\ll1$ hold.

\begin{table}[th]
\caption{Statistics on the number of steps till convergence for various values of $\xi$ and $n$. The maximum number of steps was set to $t_{\max} = 10^6$.}
\label{tab:ksi}
\centering
\begin{tabular}{cccccc}
Matrix size & Statistic & $\xi=20$ & $\xi=50$ & $\xi=100$ & $\xi=200$ \\ \hline 
\multirow{3}{*}{$n=10$} & Mean & 85.9K & 52.7K & 41.6K & 57.6K \\
& Median & 32.0K & 22.3K & 19.8K & 18.0K \\
& $t_{\max}$ was hit & 0.9\% & 0.0\% & 0.0\% & 2.2\% \\ \hline 
\multirow{3}{*}{$n=20$} & Mean & 352.1K & 233.3K & 173.6K & 141.4K \\
& Median & 225.0K & 123.4K & 82.2K & 65.6K \\
& $t_{\max}$ was hit & 13.5\% & 5.9\% & 3.5\% & 2.5\% \\
\end{tabular}
\end{table}

{\bf Effect of the learning rate ($\eta$).} 
The first row of panels in Figure \ref{effect:eta:fig} shows the $D_{KL}$ between the Nash equilibrium and the FLBR-MWU dynamics for the same payoff matrix instance as in Figure \ref{motiv:ex:fig} and for various values of the learning rate, $\eta$. The difference between the left and right panels is that for the right column of panels, the x-axis has been rescaled by multiplying each run with the respective learning rate. A linear scaling is numerically observed showing that the number of steps is effectively of order $O(\eta^{-1})$ for a fixed accuracy level. This inversely-proportional behavior is observed not only during the convergence to the approximate Nash equilibrium, but also during the contraction period. As a rule of thumb, we propose to increase the rate $\eta$, because it accelerates the convergence, but with caution since a very large $\eta$ might result in an oscillatory solution, thus failing to converge (blue line in second row of panels).

\begin{figure}[h]
    \centering
    \includegraphics[width=0.47\textwidth,height=0.17\textheight]{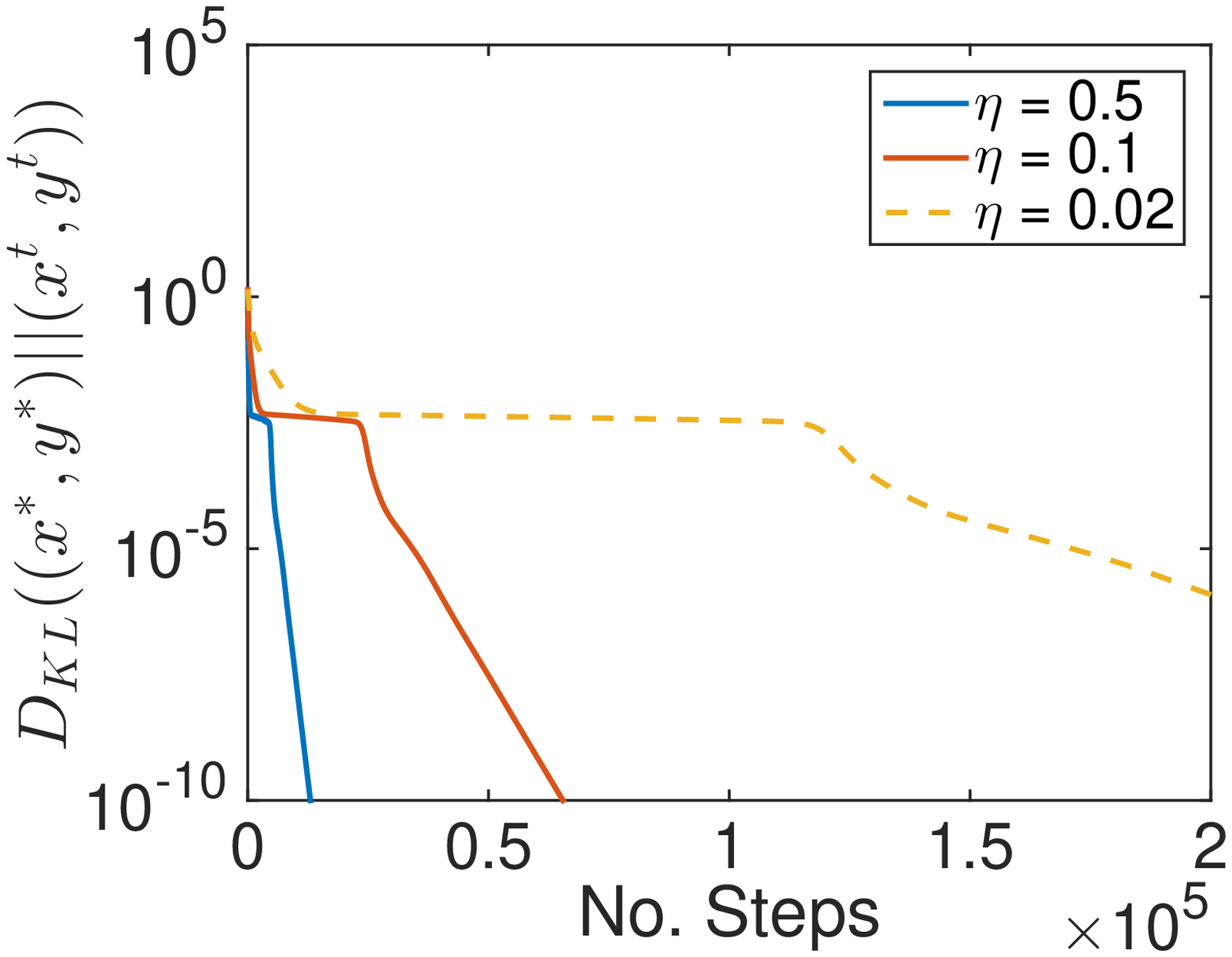}
    \includegraphics[width=0.47\textwidth,height=0.17\textheight]{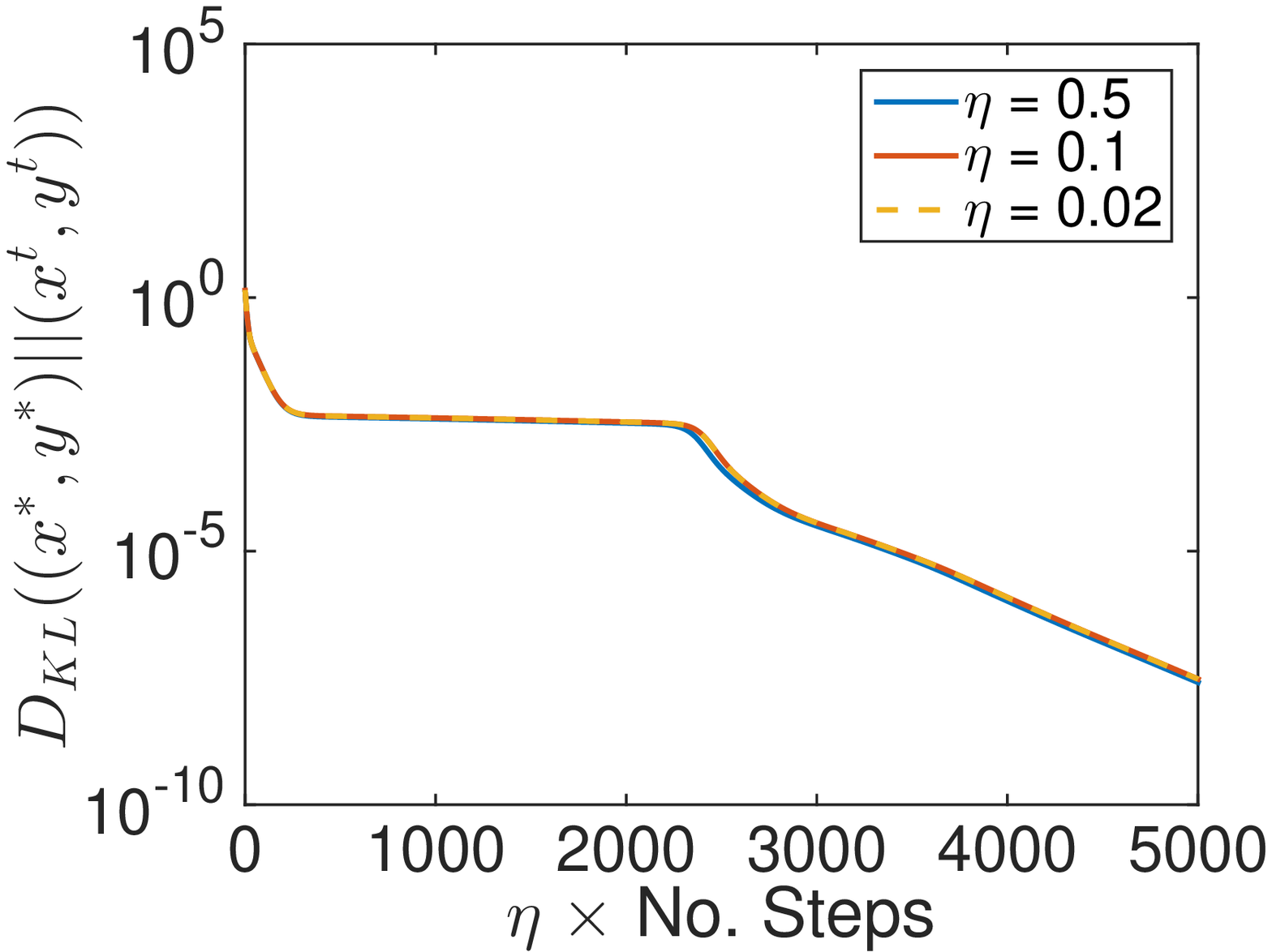}
    \includegraphics[width=0.47\textwidth,height=0.17\textheight]{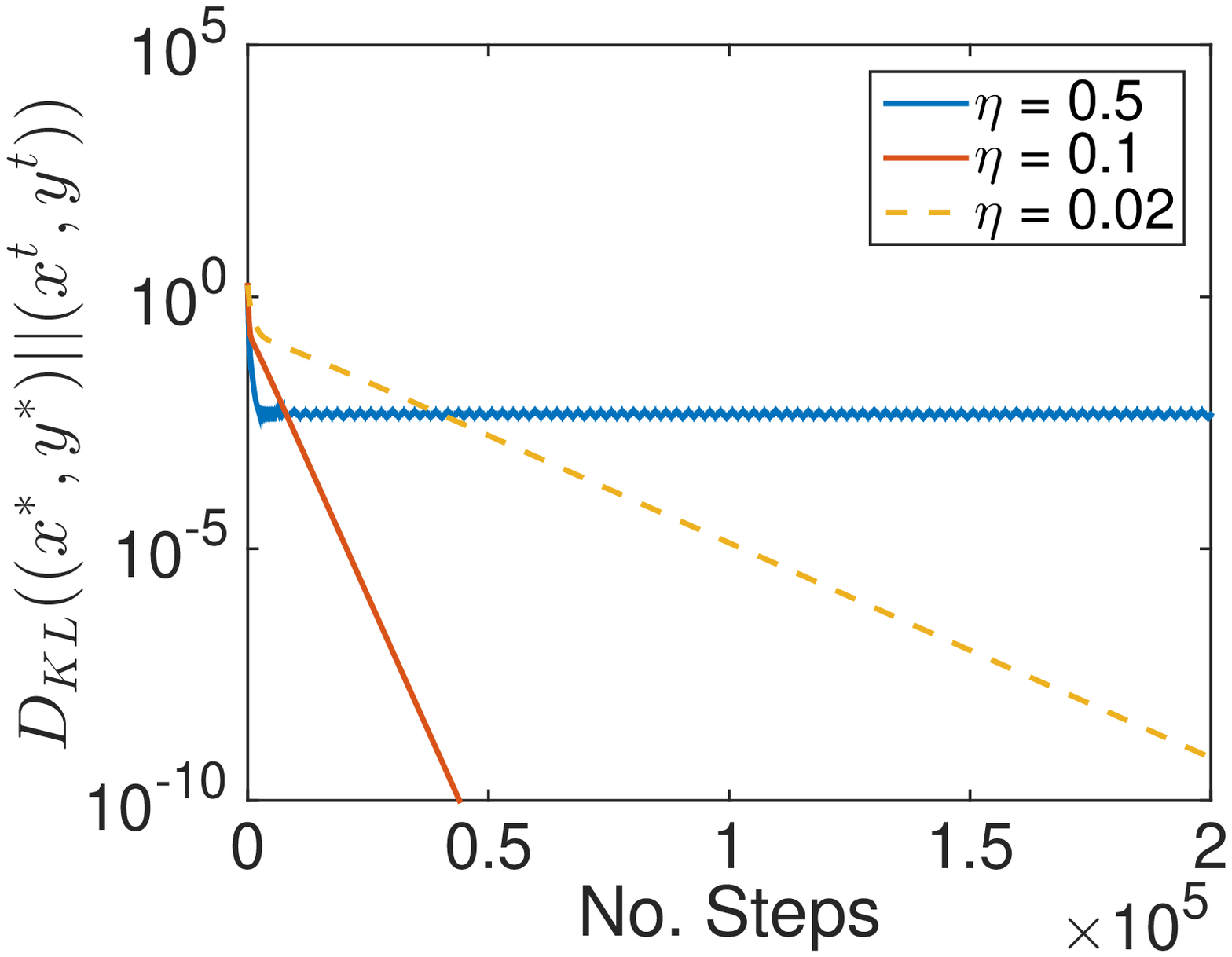}
    \includegraphics[width=0.47\textwidth,height=0.17\textheight]{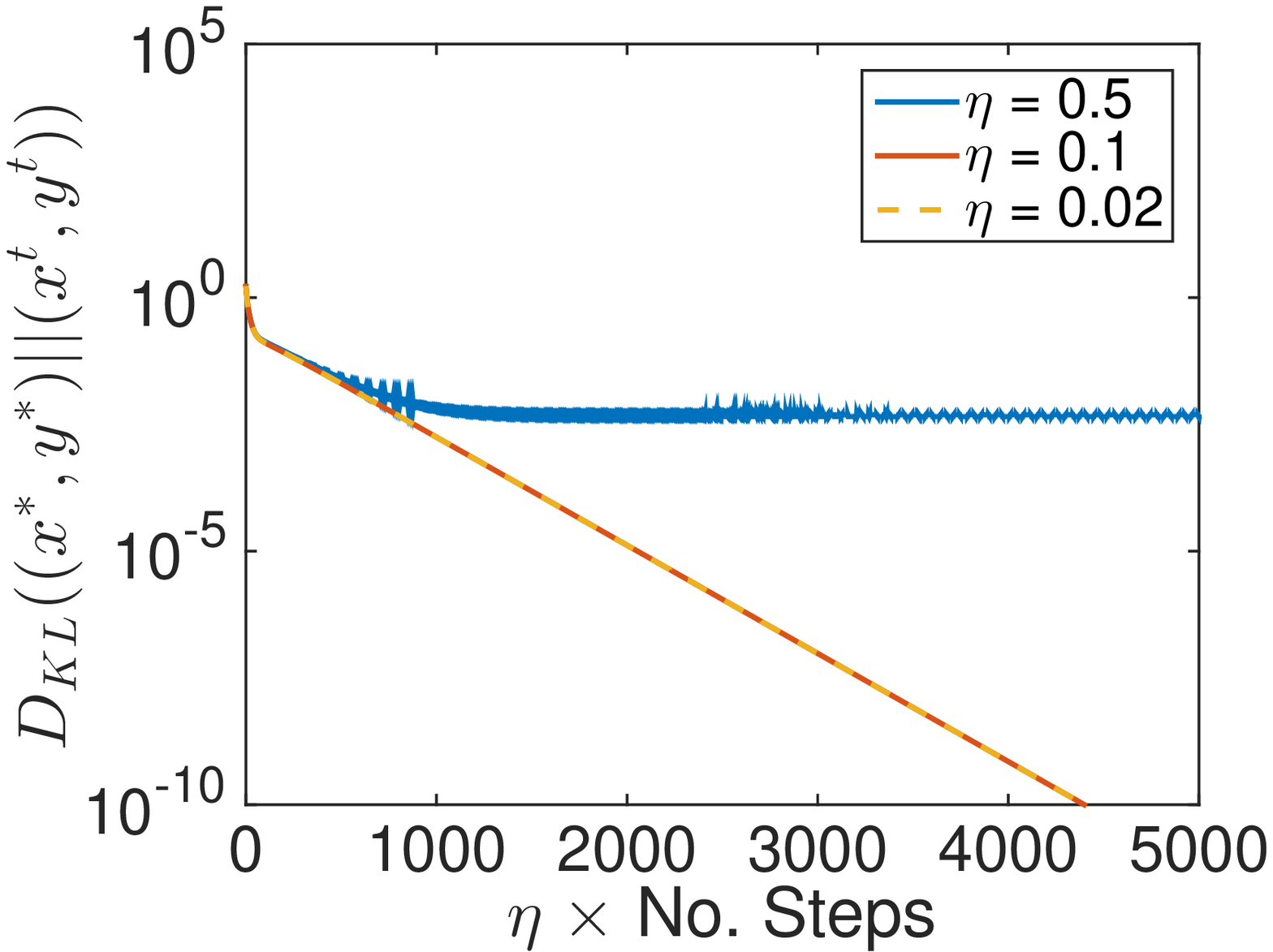}
    \caption{The $D_{KL}$ between the Nash equilibrium and the FLBR-MWU dynamics for two instances and no rescaling of x-axis (left panels) and with rescaling (right panels). The relationship between number of steps and learning rate are inversely proportional.}
    \label{effect:eta:fig}
\end{figure}

{\bf Effect of the payoff matrix size ($n$).} The rate of convergence is sensitive to the size of the payoff matrix and the number of steps is expected to substantially increase on average as the size of the game increases. We performed a numerical comparison between FLBR-MWU and OMWU to evaluate the number of steps required to achieve a predefined level of accuracy.
Table \ref{tab:size} presents statistics on the number of steps for each learning algorithm computed on 100 repetitions using element-wise uniformly-sampled and iid  random payoff matrices. The learning rate was set to $\eta=0.1$. Given that FLBR-MWU requires almost twice as many calculations per iteration, relative to  OMWU, it is fair to multiply the number of steps of FLBR-MWU with two and then compare it with the number of steps of OMWU. We observe that FLBR-MWU is approximately 15 times faster on average when $n=5$. As the size of the payoff matrix increases, the performance gap in convergence rate as measured by the number of steps also increases. Indeed, even for $n=10$, OMWU requires more than $4.2M$ steps in half of the runs, while the respective number for FLBR-MWU is $16.3K$, implying that FLBR-MWU is 100 times faster than OMWU in the median sense. Larger game sizes make OMWU essentially impractical while FLBR-MWU is still able to converge in less than $5M$ steps.

\begin{table}[ht]
\caption{Statistics on the number of steps till convergence for various sizes of the game. The maximum number of steps was set to $t_{\max} = 5\times 10^6$.}
\label{tab:size}
\centering
\begin{tabular}{ccccc}
Learning alg. & Statistic & $n=5$ & $n=10$ & $n=50$ \\ \hline
\multirow{3}{*}{FLBR-MWU} & Mean & 33.7K & 103.3K & 984.9K \\
& Median & 9.8K & 16.3K & 409.3K  \\
& $t_{\max}$ was hit & 0.0\% & 0.0\% & 1.0\%  \\ \hline
\multirow{3}{*}{OMWU} & Mean &  1088.8K & 3323.2K & 5000.0K \\
& Median & 353.8K & 4208.1K & 5000.0K \\
& $t_{\max}$ was hit & 9.0\% & 46.0\% & 100.0\%  \\
\end{tabular}
\end{table}

\newpage

%%%%%%%%%%%%%%%%%%%%%%%%%%%%%%%%%%%%%%%

\bibliographystyle{plain}
\bibliography{FLBRMWU}

\begin{thebibliography}{10}

\bibitem{Adler13}
Ilan Adler.
\newblock The equivalence of linear programs and zero-sum games.
\newblock {\em Int. J. Game Theory}, 42(1):165--177, 2013.

\bibitem{DBLP:journals/toc/AroraHK12}
Sanjeev Arora, Elad Hazan, and Satyen Kale.
\newblock The multiplicative weights update method: a meta-algorithm and
  applications.
\newblock {\em Theory Comput.}, 8(1):121--164, 2012.

\bibitem{Azizian}
Wa{\"{\i}}ss Azizian, Damien Scieur, Ioannis Mitliagkas, Simon
  Lacoste{-}Julien, and Gauthier Gidel.
\newblock Accelerating smooth games by manipulating spectral shapes.
\newblock In {\em The 23rd International Conference on Artificial Intelligence
  and Statistics, {AISTATS} 2020}, volume 108, pages 1705--1715. {PMLR}, 2020.

\bibitem{DBLP:conf/sigecom/BaileyP18}
James~P. Bailey and Georgios Piliouras.
\newblock Multiplicative weights update in zero-sum games.
\newblock In {\em Proceedings of the Conference on Economics and Computation
  (EC'18)}, pages 321--338, 2018.

\bibitem{DBLP:conf/nips/BaileyP19}
James~P. Bailey and Georgios Piliouras.
\newblock Fast and furious learning in zero-sum games: Vanishing regret with
  non-vanishing step sizes.
\newblock In {\em Advances in Neural Information Processing Systems 32: Annual
  Conference on Neural Information Processing Systems 2019, NeurIPS 2019},
  pages 12977--12987, 2019.

\bibitem{Bubeck15}
S{\'{e}}bastien Bubeck.
\newblock Convex optimization: Algorithms and complexity.
\newblock {\em Found. Trends Mach. Learn.}, 8(3-4):231--357, 2015.

\bibitem{Cen2021FastPE}
Shicong Cen, Yuting Wei, and Yuejie Chi.
\newblock Fast policy extragradient methods for competitive games with entropy
  regularization.
\newblock {\em ArXiv}, abs/2105.15186, 2021.

\bibitem{DBLP:conf/nips/CheungP20}
Yun~Kuen Cheung and Georgios Piliouras.
\newblock Chaos, extremism and optimism: Volume analysis of learning in games.
\newblock In {\em Proceedings of Advances in Neural Information Processing
  Systems 33: Annual Conference on Neural Information Processing Systems 2020,
  (NeurIPS'20)}, 2020.

\bibitem{DBLP:journals/jmlr/ChiangYLMLJZ12}
Chao{-}Kai Chiang, Tianbao Yang, Chia{-}Jung Lee, Mehrdad Mahdavi, Chi{-}Jen
  Lu, Rong Jin, and Shenghuo Zhu.
\newblock Online optimization with gradual variations.
\newblock In {\em Proceedings of {COLT}'12 - The 25th Annual Conference on
  Learning Theory}, pages 6.1--6.20, 2012.

\bibitem{DBLP:conf/icml/DaiS0XHLCS18}
Bo~Dai, Albert Shaw, Lihong Li, Lin Xiao, Niao He, Zhen Liu, Jianshu Chen, and
  Le~Song.
\newblock {SBEED:} convergent reinforcement learning with nonlinear function
  approximation.
\newblock In {\em Proceedings of the 35th International Conference on Machine
  Learning, {ICML}'18}, pages 1133--1142, 2018.

\bibitem{daskalakis2018training}
Constantinos Daskalakis, Andrew Ilyas, Vasilis Syrgkanis, and Haoyang Zeng.
\newblock Training {GAN}s with optimism.
\newblock In {\em Proceedings of the International Conference on Learning
  Representations (ICLR'18)}, 2018.

\bibitem{Daskalakis2019LastIterateCZ}
Constantinos Daskalakis and Ioannis Panageas.
\newblock Last-iterate convergence: Zero-sum games and constrained min-max
  optimization.
\newblock In {\em Proceedings of the ITCS'19}, 2019.

\bibitem{EY10}
Kousha Etessami and Mihalis Yannakakis.
\newblock On the complexity of {N}ash equilibria and other fixed points.
\newblock {\em {SIAM} J. Comput.}, 39(6):2531--2597, 2010.

\bibitem{Facchinei2003FiniteDimensionalVI}
Francisco Facchinei and J.~S. Pang.
\newblock Finite-dimensional variational inequalities and complementarity
  problems.
\newblock 2003.

\bibitem{Freund1999AdaptiveGP}
Yoan Freund and Robert~E. Schapire.
\newblock Adaptive game playing using multiplicative weights.
\newblock {\em Games and Economic Behavior}, 29:79--103, 1999.

\bibitem{Galor}
Oded Galor.
\newblock {\em Discrete Dynamical Systems}.
\newblock Springer-Verlag, 2007.

\bibitem{DBLP:conf/iclr/GidelBVVL19}
Gauthier Gidel, Hugo Berard, Ga{\"{e}}tan Vignoud, Pascal Vincent, and Simon
  Lacoste{-}Julien.
\newblock A variational inequality perspective on generative adversarial
  networks.
\newblock In {\em 7th International Conference on Learning Representations,
  {ICLR} 2019}, 2019.

\bibitem{GPMXWOCB14}
Ian~J. Goodfellow, Jean Pouget-Abadie, Mehdi Mirza, Bing Xu, David
  Warde-Farley, Sherjil Ozair, Aaron Courville, and Yoshua Bengio.
\newblock Generative {A}dversarial {N}ets.
\newblock In {\em Proceedings of Annual Conference on Neural Information
  Processing Systems (NIPS '14)}, pages 2672--2680, 2014.

\bibitem{Hoi18}
Steven C.~H. Hoi, Doyen Sahoo, Jing Lu, and Peilin Zhao.
\newblock Online learning: {A} comprehensive survey.
\newblock {\em CoRR}, abs/1802.02871, 2018.

\bibitem{Hsieh2019OnTC}
Yu-Guan Hsieh, F.~Iutzeler, J.~Malick, and P.~Mertikopoulos.
\newblock On the convergence of single-call stochastic extra-gradient methods.
\newblock In {\em NeurIPS}, 2019.

\bibitem{Korpelevich1976TheEM}
G.~M. Korpelevich.
\newblock The extragradient method for finding saddle points and other
  problems.
\newblock 1976.

\bibitem{DBLP:conf/aistats/LeiNPW21}
Qi~Lei, Sai~Ganesh Nagarajan, Ioannis Panageas, and Xiao Wang.
\newblock Last iterate convergence in no-regret learning: constrained min-max
  optimization for convex-concave landscapes.
\newblock In {\em Proceedings of The 24th International Conference on
  Artificial Intelligence and Statistics, {AISTATS}'21}, pages 1441--1449,
  2021.

\bibitem{DBLP:conf/aistats/LiangS19}
Tengyuan Liang and James Stokes.
\newblock Interaction matters: {A} note on non-asymptotic local convergence of
  generative adversarial networks.
\newblock In {\em Proceedings of The 22nd International Conference on
  Artificial Intelligence and Statistics, {AISTATS}'19}, pages 907--915, 2019.

\bibitem{LW94}
Nick Littlestone and Manfred~K. Warmuth.
\newblock The weighted majority algorithm.
\newblock {\em Inf. Comput.}, 108(2):212--261, 1994.

\bibitem{DBLP:conf/iclr/MertikopoulosLZ19}
Panayotis Mertikopoulos, Bruno Lecouat, Houssam Zenati, Chuan{-}Sheng Foo,
  Vijay Chandrasekhar, and Georgios Piliouras.
\newblock Optimistic mirror descent in saddle-point problems: Going the extra
  (gradient) mile.
\newblock In {\em Proceedings of the 7th International Conference on Learning
  Representations, {ICLR}'19}, 2019.

\bibitem{DBLP:conf/aistats/MokhtariOP20}
Aryan Mokhtari, Asuman~E. Ozdaglar, and Sarath Pattathil.
\newblock A unified analysis of extra-gradient and optimistic gradient methods
  for saddle point problems: Proximal point approach.
\newblock In {\em Proceedings of The 23rd International Conference on
  Artificial Intelligence and Statistics, {AISTATS}'20}, pages 1497--1507,
  2020.

\bibitem{Nemirovski2004ProxMethodWR}
Arkadi Nemirovski.
\newblock Prox-method with rate of convergence o(1/t) for variational
  inequalities with lipschitz continuous monotone operators and smooth
  convex-concave saddle point problems.
\newblock {\em SIAM J. Optim.}, 15:229--251, 2004.

\bibitem{NRTV07}
Noam Nisan, Tim Roughgarden, Eva Tardos, and Vijay~V. Vazirani.
\newblock {\em Algorithmic Game Theory}.
\newblock Cambridge Univ. Press, 2007.

\bibitem{DBLP:journals/mp/Norde99}
Henk Norde.
\newblock Bimatrix games have quasi-strict equilibria.
\newblock {\em Math. Program.}, 85(1):35--49, 1999.

\bibitem{QuartSaccoSaleri}
Alfio Quarteroni, Riccardo Sacco, and Fausto Saleri.
\newblock {\em Numerical Mathematics (Texts in Applied Mathematics)}.
\newblock Springer-Verlag, 2006.

\bibitem{NIPS2013_f0dd4a99}
Sasha Rakhlin and Karthik Sridharan.
\newblock Optimization, learning, and games with predictable sequences.
\newblock In {\em Proceedings of the Advances in Neural Information Processing
  Systems}, pages 3066--3074, 2013.

\bibitem{Rob51}
Julia Robinson.
\newblock An iterative method of solving a game.
\newblock {\em Annals of Mathematics}, 54(2):296–301, 1951.

\bibitem{strang09}
Gilbert Strang.
\newblock {\em Introduction to Linear Algebra}.
\newblock Wellesley-Cambridge Press, 2009.

\bibitem{DBLP:conf/nips/SyrgkanisALS15}
Vasilis Syrgkanis, Alekh Agarwal, Haipeng Luo, and Robert~E. Schapire.
\newblock Fast convergence of regularized learning in games.
\newblock In {\em Proceedings of Advances in Neural Information Processing
  Systems 28: Annual Conference on Neural Information Processing Systems 2015},
  pages 2989--2997, 2015.

\bibitem{Damme91}
Eric van Damme.
\newblock {\em Stability and Perfection of {N}ash Equilibria}.
\newblock Springer-Verlag, 1991.

\bibitem{Wei2021LinearLC}
Chen-Yu Wei, Chung-Wei Lee, Mengxiao Zhang, and Haipeng Luo.
\newblock Linear last-iterate convergence in constrained saddle-point
  optimization.
\newblock In {\em Proceedings of the 9th International Conference on Learning
  Representations {ICLR} '21}, 2021.

\end{thebibliography}

\newpage

%%%%%%%%%%%%%%%%%%%%%%%%%%%%%%%%%%%%%%%

\appendix

\section{Appendix}\label{sec:Appendix}

\subsection{Omitting proofs}\label{supsec:Omitting proofs}

\begin{proofof}{Proof of Lemma~\ref{lem:xi}}
Fix $t$ and let us consider the formula that produces the coordinates of $\hat{x}^t$, given $x^{t-1}, y^{t-1}$. For simplicity in writing, we drop the superscript $t-1$ and refer to $x, y$ as the strategies of the two players computed at the end of iteration $t-1$. Focusing on the row player (the same argument follows for the column player too), we know that for every $i\in [n]$,
\begin{equation*}
\hat{x}_i^t  = x_i \cdot \frac{e^{\xi e_i^T R y} }{\sum\nolimits_{j = 1}^{n} x_j e^{\xi e_j^T R y}} .
\end{equation*}

We want to compute for every $i\in [n]$, the limit $\lim_{\xi\to\infty} \hat{x}_i^t$.
We distinguish two cases, depending on whether $e_i$ is a best-response strategy against $y$ or not. Denote by $B(y)$ the set of the pure best-response strategies of the row player against $y$, i.e., $B(y) = \{i: e_i \mbox{ is a best response against } $y$ \}$.

We start with the first case, where $i\in B(y)$. This means that $e_i^TRy$ is a best-response payoff against $y$ and therefore $e^{\xi e_i^T R y} = e^{\xi e_j^T R y}$ for any $j\in B(y)$. Based on this, we can now write the limit as 
\begin{equation*}
    lim_{\xi\to\infty} \hat{x}_i^t=x_i \cdot \frac{e^{\xi e_i^T R y} }{\sum\nolimits_{j \in B(y)} x_j e^{\xi e_j^T R y} + \sum\nolimits_{j \notin B(y)} x_j e^{\xi e_j^T R y}}=x_i \cdot \frac{1 }{\sum\nolimits_{j \in B(y)} x_j  + \sum\nolimits_{j \notin B(y)} x_j e^{\xi (e_j^T R y- e_i^T Ry)}},
\end{equation*}

Obviously, we can see that $(e_j^T R y - e_i^TR y)< 0$ for any $j \notin B(y)$. Hence, as $\xi$ goes to infinity, it holds that $\lim_{\xi\to\infty} = \sum\nolimits_{j \notin B(y)} x_j e^{\xi (e_j^T R y- e_i^T Ry)} = 0$. Thus, we have that 

\begin{equation*}
    \lim_{\xi\to\infty} \hat{x}_i^t =   \frac{x_i}{\sum\nolimits_{j\in B(y)} x_j }. 
\end{equation*}

Consider now the second case where $i\not\in B(y)$. Let 
$A(y)$ be the set of indices $j$, such that $e^{\xi e_j^TRy} > e^{\xi e_i^TRy}$. Since $i\not\in B(y)$, we know that $A(y)\neq\emptyset$. 
Then, we have 
\begin{equation*}
 \lim_{\xi\to\infty} \hat{x}_i^t = x_i \cdot \frac{e^{\xi e_i^T R y} }{\sum\nolimits_{j \in A(y)} x_j e^{\xi e_j^T R y} + \sum\nolimits_{j \notin A(y)} x_j e^{\xi e_j^T R y}}=x_i \cdot \frac{1 }{\sum\nolimits_{j \in A(y)} x_j e^{\xi (e_j^T R y- e_i^T Ry)}  + \sum\nolimits_{j \notin A(y)} x_j e^{\xi (e_j^T R y- e_i^T Ry)}}. 
\end{equation*}
As $\xi$ goes to infinity, it holds that $\lim_{\xi\to\infty} = \sum\nolimits_{j \in A(y)} x_j e^{\xi (e_j^T R y- e_i^T Ry)} = +\infty$. 
For the second term of the denominator, we can see that for the indices $j\not\in A(y)$, for which $e^{\xi e_j^TRy} < e^{\xi e_i^TRy}$, it holds that  $\lim_{\xi\to\infty} = \sum\nolimits_{j} x_j e^{\xi (e_j^T R y- e_i^T Ry)} = 0$. For the remaining indices $j$, for which $e^{\xi e_j^TRy} = e^{\xi e_i^TRy}$, the limit is $\sum_j x_j \leq 1$. Thus, in total, we have that 
\begin{equation*}
    \lim_{\xi\to\infty} \hat{x}_i^t = 0.
\end{equation*}
Hence, when $\xi\to\infty$, the strategy $\hat{x}$ will eventually contain in its support only best responses to $y$, and therefore $\hat{x}$ will form a best response to $y$ as well.
\end{proofof}

%%%%%

\begin{proofof}{Proof of Theorem~\ref{thm:neighborhood}}
The proof is based on the following lemma, shown in \cite{EY10}, which we state here for the case of zero-sum games:
\setcounter{lemma}{1}
\begin{lemma}
\label{lem:EY}
Consider a zero-sum game given by matrix $R$ with a unique Nash equilibrium $(x^*, y^*)$, and let $|R|$ be the number of bits needed for the representation of $R$. There exists a polynomial $p$ such that for every $\delta>0$, every $\varepsilon$-Nash equilibrium $(x, y)$ satisfies that $|x^*_i - x_i| < \delta$, as long as $\varepsilon \leq 1/2^{p(|R|+size(\delta))}$, where $size(\delta) = O(\log(1/\delta))$ is the number of bits needed for representing $\delta$.
\end{lemma}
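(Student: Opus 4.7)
My plan is to prove Lemma \ref{lem:EY} by exploiting the linear programming characterization of Nash equilibria in zero-sum games and then invoking standard quantitative LP-sensitivity bounds. Recall that $(x^*, y^*)$ is a Nash equilibrium of the game defined by $R$ if and only if $x^*$ solves the linear program $\max v \text{ s.t. } R^T x \geq v \mathbf{1},\; x \in \Delta_n$, and symmetrically $y^*$ solves the dual. Thus the set of Nash equilibrium strategies for the row player is the solution set of a linear system, and the assumption that $(x^*, y^*)$ is unique means that this LP has a unique optimal basic feasible solution.

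First I would rewrite the $\varepsilon$-Nash equilibrium condition as a relaxation of this LP: if $(x,y)$ is an $\varepsilon$-NE, then $x$ lies in $\Delta_n$ and satisfies $R^Ty \leq (v + \varepsilon)\mathbf{1}$ in the sense required by the best-response gap, while $x^T R \geq (v - \varepsilon)\mathbf{1}^T$ coordinate-wise on the support; dually for $y$. So every $\varepsilon$-NE is a vertex-level feasible point of the LP system where the optimality gap and the feasibility slack are each at most $O(\varepsilon)$.

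Next I would invoke a standard quantitative LP-sensitivity theorem (cf. Schrijver's \emph{Theory of Linear and Integer Programming}, or Renegar's condition-number framework): for an LP with rational data of total encoding length $L = |R|$, any $\varepsilon$-nearly optimal and $\varepsilon$-nearly feasible point lies within distance $\varepsilon \cdot 2^{\mathrm{poly}(L)}$ in $\ell_\infty$ of the set of exact optima. Because the exact optimum is the single point $x^*$ (uniqueness hypothesis), this directly yields $\|x - x^*\|_\infty \leq \varepsilon \cdot 2^{q(L)}$ for some polynomial $q$. Choosing $\varepsilon \leq 1/2^{p(L + \mathrm{size}(\delta))}$ with $p(L + s) = q(L) + s + 1$ then forces $\|x - x^*\|_\infty < \delta$ (and analogously for $y$), which is the desired bound.

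The main obstacle is the quantitative LP-sensitivity step: one needs a clean polynomial-in-$L$ bound on the "condition number" of the equilibrium LP. The cleanest way is to argue via Cramer's rule on the tight sub-system at the unique optimum: the optimal basis is of size at most $2n \times 2n$, its inverse has entries with numerators and denominators bounded by determinants of $R$-submatrices, hence by $2^{O(n |R|)}$; perturbing the right-hand side by $\varepsilon$ perturbs the solution by at most $\varepsilon$ times this inverse norm. This is precisely the estimate established in \cite{EY10} (Proposition on algebraic sensitivity of fixed points for piecewise-linear maps, specialized to the zero-sum LP), and I would cite it for the bit-complexity polynomial $p$ while sketching the Cramer-rule derivation for self-containedness. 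With that bound in hand, the conclusion $|x_i^* - x_i| < \delta$ is immediate from our choice of the threshold $\varepsilon$.
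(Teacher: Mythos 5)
The paper does not actually prove this lemma: it is imported verbatim from \cite{EY10} (``shown in \cite{EY10}, which we state here for the case of zero-sum games''), so there is no in-paper argument to compare against line by line. Your LP-sensitivity outline is the standard route to such a statement and is consistent with how results of this type are established: reduce the $\varepsilon$-NE condition to approximate feasibility/optimality of the minimax LP, invoke a quantitative error bound whose constant is $2^{\mathrm{poly}(|R|)}$, and use uniqueness to collapse the optimal set to the single point $x^*$. Since you too ultimately cite \cite{EY10} for the bit-complexity polynomial, your proposal ends up resting on the same external source as the paper, just with more of the mechanism exposed.

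Two caveats on the self-contained part of your sketch. First, the translation of the $\varepsilon$-NE condition is slightly off: from $x^TRy\ge e_i^TRy-\varepsilon$ for all $i$ and $x^TRy\le x^TRe_j+\varepsilon$ for all $j$ one gets $x^TRe_j\ge v-2\varepsilon$ for \emph{every} column $j$ (not only on a support), i.e.\ $x$ is feasible for the row player's LP with the right-hand side relaxed by $2\varepsilon$; the ``coordinate-wise on the support'' qualifier is neither needed nor correct. Second, and more substantively, the Cramer's-rule argument as you state it only controls basic solutions: perturbing the right-hand side of the tight subsystem at a vertex bounds how far that vertex moves, but an arbitrary $\varepsilon$-NE is a generic point of the relaxed polytope, not a basic solution, and the corresponding unperturbed basic solution need not even be feasible. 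To bound the distance from an arbitrary approximately feasible point to the exact solution set you need a Hoffman-type error bound, $\mathrm{dist}(x,P_0)\le H\cdot\|(\text{violation})_+\|$, with the Hoffman constant $H$ then bounded via subdeterminants of the constraint matrix (or, equivalently, bound every vertex of the relaxed polytope and use convexity). This is exactly the content of the estimate in \cite{EY10}, so the gap is real but localized; naming Hoffman's bound (or carrying out the vertex-plus-convexity argument) would close it.
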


By the assumptions in the statement of Theorem~\ref{thm:neighborhood}, we fix $\varepsilon = c\cdot \eta^{1/\rho}$, for some constant $c$, so that $(x^t, y^t)$ is an $\varepsilon$-Nash equilibrium. We claim  that there exists $\delta(\eta)$ such that $\varepsilon$ and $\delta(\eta)$ satisfy the inequality stated in Lemma \ref{lem:EY}. In particular, by looking more carefully at the desired inequality and solving with respect to $\delta$, one can construct a function $\delta(\eta)$, such that for the given $\varepsilon$ we have selected, it holds that
\begin{equation*}
    \varepsilon \leq 1/2^{p(|R|+size(\delta(\eta)))} \mbox{  and  } \lim\nolimits_{\eta\rightarrow 0} \delta(\eta)= 0.
\end{equation*}

Hence, we can now apply Lemma \ref{lem:EY} and obtain that for any $\varepsilon$-Nash equilibrium $(x, y)$ we have that $|x^*_i - x_i| \leq \delta(\eta)$ and $|y^*_i - y_i| \leq \delta(\eta)$.   
The proof now of Theorem~\ref{thm:neighborhood} is immediate, since 
$||(x^*, y^*) - (x^t, y^t)||_1 = \sum_{i=1}^n|x^*_i - x_i| + \sum_{i=1}^n|y^*_i - y_i| \leq 2n\cdot \delta(\eta)$, which goes to $0$ as $\eta\rightarrow 0$. 
\end{proofof}

%%%%

\begin{proofof}{Proof of Theorem~\ref{thm:contraction}}
To prove the theorem, we describe first a discrete dynamical system that captures the FLBR-MWU dynamics, and we will prove that for an appropriate norm of the Jacobian matrix of the system, its value is less than one\footnote{Besides \cite{Galor}, readers could advise Chapter 7~\cite{QuartSaccoSaleri}.}. The update rule $\varphi$ of FLBR-MWU is
\begin{equation}
\begin{split}
        & \varphi(x, y)  = (\varphi_1(x, {y}), \varphi_2({x}, {y})), \mbox{ where } \qquad \qquad \qquad \\
        & \varphi_{1,i}({x}, {y}) = (\varphi_1({x}, {y}))_i = x_i \tfrac{e^{\eta e^T_i R f({x}, {y})}}{\sum_\ell x_\ell e^{\eta e^T_\ell R f({x}, {y})}}, \\
        & \varphi_{2,i}({x}, {y}) = (\varphi_2({x}, {y}))_i = y_i \tfrac{e^{-\eta e^T_i R^T h({x}, {y})}}{\sum_\ell y_\ell e^{-\eta e^T_\ell R^T h({x}, {y})}}, 
\end{split}
\end{equation}
where $f(x,y)$ and $h(x,y)$ are column vectors with
$\displaystyle (f({x}, {y}))_i = y_i \tfrac{e^{-\xi e^T_iR^T{x} }}{\sum_\ell y_\ell e^{ -\xi e^T_\ell R^T{x} }}$, and $(h({x}, {y}))_i = x_i \tfrac{e^{\xi e^T_iR{y} }}{\sum_\ell x_\ell e^{\xi e^T_\ell R {y} }}$, for all $i\in \{1,\dots,n\}$.

\noindent Clearly, the dynamics of FLBR-MWU are captured by $({x}^{t+1}, {y}^{t+1}) = \varphi({x}^t, {y}^t)$. The Jacobian of $\varphi$ is a $2n\times 2n$ matrix, which can be written in the form of a $2 \times 2$ block matrix, as follows:

\begin{equation}
\label{eq:jacobian}
    J = \left(\begin{array}{cc}
         \frac{\partial \varphi_1}{ \partial \mathbf{x}} &  \frac{\partial \varphi_1}{ \partial \mathbf{y}} \\
        \frac{\partial \varphi_2}{ \partial \mathbf{x}} &  \frac{\partial \varphi_2}{ \partial \mathbf{y}}
    \end{array}\right).
\end{equation}

\noindent In order to use Theorem \ref{thm:App_jac_contr_thm} and prove that $\phi$ is a contraction, we need to argue about the eigenvalues of $J$ at the equilibrium $({x^*}, {y^*})$. Towards this, in Subsection \ref{supplementary:sub:App_Jac}, we provide the exact form of each entry of $J$ at $(x^*,y^*)$ (after some simplification steps by exploiting the fact that $(x^*,y^*)$ is an equilibrium). 

\noindent We analyze first the eigenvalues that are derived by the rows of $J$ that correspond to $\varphi_{1, i}$ for some $i \not \in supp({x}^*)$ and to $\varphi_{2, i}$ for some $i \not \in supp({y}^*)$.
Let ${x}^{*T} R {y}^* = v$ be the value of the game.
By referring to Subsection \ref{supplementary:sub:App_Jac}, we have that for any $i \not \in supp({x}^*)$:
\begin{equation*}
\frac{\partial \varphi_{1,i}}{\partial x_i}(x^*,y^*) = \frac{e^{\eta e^T_i R y^*}}{e^{\eta v}}, \enspace \frac{\partial \varphi_{1,i}}{\partial x_j}(x^*,y^*) = 0 \enspace \text{for any } i \neq j, \text{ and } \frac{\partial \varphi_{1,i}}{\partial y_j}(x^*,y^*) = 0 , \enspace \text{for any }j.
\end{equation*}
Hence, the $i$-th row of the upper block of $J$ has only one non-zero entry, namely, the diagonal element, provided that $i \not \in supp({x}^*)$. Thus, $\tfrac{e^{\eta e^T_i R y^*}}{e^{\eta v}}$ is an eigenvalue of $J$ at $({x}^*, {y}^*)$. 
We note also that\footnote{A unique Nash equilibrium of a zero-sum game is also a quasi-strict equilibrium (Theorem 1 in \cite{DBLP:journals/mp/Norde99}), meaning that strategies that are not in the support of the equilibrium have strictly less payoff than the best-response payoff.} $e_i^TR {y}^*< v  $ for $i \not \in supp(x^*)$, hence $|\frac{\partial \varphi_{1,i}}{\partial x_i}(x^*,y^*)| < 1$. Analogously, for $i \not \in supp({y}^*)$ we have that $\frac{\partial \varphi_{2,i}}{\partial y_i}(x^*,y^*) = \tfrac{e^{- \eta e^T_i R^T x^*}}{e^{- \eta v}}$, whereas all other partial derivatives of $\varphi_{2,i}$ are zero. Thus, $\tfrac{e^{- \eta e^T_i R^T x^*}}{e^{- \eta v}}$ is also an eigenvalue of $J$, with $|\tfrac{e^{- \eta e^T_i R^T x^*}}{e^{- \eta v}}| < 1$, since $e^T_i R^T x^* > v$ for $i\notin supp(y^*)$ by footnote 2.

\noindent We now focus on the rows and columns that correspond to the support of ${x}^*$ and ${y}^*$. We denote this submatrix as $\tilde{J}$, with $k_1 = |supp({x}^*)|$, $k_2 = |supp({y}^*)|$ and $k = k_1 + k_2$. Thus, $\tilde{J} \in \mathbb{R}^{k \times k}$. It can been seen that $J$ has eigenvalues with absolute value less that one iff the same holds for $\tilde{J}$ as well.

\noindent Using equations~$\eqref{eq:Jac_xy_at_equilibrium}$ and computing $((\boldsymbol{1}_{k_1}, \boldsymbol{0}_{k_2})^T\cdot \tilde{J})_j$ for an arbitrary coordinate $j$, we end up with the quantity $\sum\nolimits_i x^*_i\sum\nolimits_k R_{ik} y^*_k R^T_{kj} - \sum\nolimits_i x^*_i \sum\nolimits_k x^*_k\sum\nolimits_l R_{kl} y^*_l R^T_{lj}$, that equals zero. Thus, $(\boldsymbol{1}_{k_1}, \boldsymbol{0}_{k_2})$ is a left eigenvector of $\tilde{J}$ corresponding to the zero eigenvalue. Using the same argumentation we have that $(\mathbf{0}_{k_1}, \mathbf{1}_{k_2})$ is also a left eigevector of $\tilde{J}$ with eigenvalue zero.

\noindent We will make use of the following claim, regarding orthogonal pairs of eigenvectors.
\begin{claim}\label{claim:left_right_eigenvectors}
Consider a matrix $A \in \mathbb{R}^{n \times n}$, 
an eigenvalue $\lambda$ and a left eigenvector $u^T$, corresponding to $\lambda$. Then for every right eigenvector $v$ that does not correspond to $\lambda$, it holds that $u^T v = 0$. 
\end{claim}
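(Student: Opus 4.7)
The plan is a short two-line computation exploiting how left and right eigenvectors interact through the matrix $A$. Let $v$ be a right eigenvector with $Av = \mu v$ for some eigenvalue $\mu \neq \lambda$. I would compute the scalar $u^T A v$ in two different ways and then equate the results. Using the left-eigenvector identity $u^T A = \lambda u^T$, the product on the left gives $u^T A v = \lambda (u^T v)$. Using instead the right-eigenvector identity $A v = \mu v$, the product on the right gives $u^T A v = \mu (u^T v)$. Subtracting one expression from the other yields $(\lambda - \mu)(u^T v) = 0$, and since $\lambda \neq \mu$ by assumption, the remaining factor $u^T v$ must vanish. This is the entire argument, and no eigenvalue decomposition, invertibility assumption, or further machinery is required.

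The only mild subtlety concerns the interpretation of ``does not correspond to $\lambda$.'' If this is read in the strict sense — namely, $v$ is a genuine eigenvector with its own eigenvalue $\mu \neq \lambda$ — then the computation above is complete. If one wanted to cover generalized eigenvectors (for non-diagonalizable $A$), a small induction on the Jordan block height would be needed, but nothing in the way the claim is applied in the proof of Theorem~\ref{thm:contraction} requires that stronger form. In the application at hand, we have already identified $(\mathbf{1}_{k_1}, \mathbf{0}_{k_2})$ and $(\mathbf{0}_{k_1}, \mathbf{1}_{k_2})$ as left eigenvectors of $\tilde J$ corresponding to the zero eigenvalue, and the claim is invoked to deduce that every right eigenvector of $\tilde J$ belonging to a nonzero eigenvalue has its $x$-components summing to zero and its $y$-components summing to zero. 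This restricts the subsequent spectral analysis to the $(k-2)$-dimensional subspace where those two linear conditions hold, which is exactly where the contraction argument, driven by the condition $\eta \xi < 1$, should then be carried out.

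I do not anticipate any real obstacle in proving the claim itself; the potentially delicate part of the overall contraction argument is not this claim but the subsequent step of showing that the eigenvalues of $\tilde J$ on the aforementioned invariant subspace lie strictly inside the unit disk under the hypothesis $\eta\xi<1$.
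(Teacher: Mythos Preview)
Your proof is correct and is essentially identical to the paper's own argument: both compute the scalar $u^T A v$ in two ways (once via $u^T A = \lambda u^T$ and once via $Av = \mu v$), obtain $\lambda(u^T v) = \mu(u^T v)$, and conclude $u^T v = 0$ from $\lambda \neq \mu$. The additional discussion you provide about the interpretation and the downstream application is accurate but goes beyond what the paper records for this claim.
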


\noindent The proof of the claim, which is a simple linear algebra exercise, is at the end of this section. 
From Claim~\ref{claim:left_right_eigenvectors}, it follows that for any right eigenvector $(\tilde{x}, \tilde{y})$ corresponding to a nonzero eigenvalue, we have 
\begin{equation}
\label{eq:right-eigen}
    \tilde{\mathbf{x}}^T \mathbf{1}_{k_1} = 0 \mbox{ and } \tilde{\mathbf{y}}^T \mathbf{1}_{k_2} = 0. 
\end{equation}

\noindent With that in hand, let us now rewrite $\tilde{J}$, as $\tilde{J} = J' + A$, where $J'$ is produced by 
deleting the term $-x^*_i$ (resp. $-y^*_i$) from every element of the upper left (resp. lower right) block of $\tilde{J}$. I.e., $A$ contains $-x^*_i$ in all entries of the $i$-th row in the upper left block, and $-y^*_i$ in all entries of the $i$-th row in the bottom right block. The other two blocks of $A$ contain only zeros.
Using \eqref{eq:right-eigen}, we can see that for every non-zero eigenvalue $\lambda$ of $\tilde{J}$, that corresponds to a right eigenvector $(\tilde{x}, \tilde{y})$, it holds that $A\cdot (\tilde{x}, \tilde{y})= 0$, thus $\lambda$ is also an eigenvalue of the matrix $J'$. By the equations in Subsection \ref{supplementary:sub:App_Jac}, we can write $J'$ as a $2 \times 2$ block matrix, as follows.
\begin{equation*}
    J' = \left(
    \begin{array}{cc}
        I_{k_1 \times k_1} + \eta \xi D^{xx} & \eta D^{xy} \\
        \eta D^{yx} & I_{k_2 \times k_2} + \eta \xi D^{yy}
    \end{array}
    \right),
\end{equation*}
with 
\begin{itemize}
\item[] $D^{xx}_{ij} = -x^*_i \left(\sum\nolimits_k R_{ik} y^*_k R^T_{kj} - \sum\nolimits_k x^*_k \sum\nolimits_l R_{kl} y^*_l R^T_{lj}\right)$, with $i,j \in [k_1] $,
\item[] $D^{yy}_{ij} = -y^*_i \Big(\sum\nolimits_k R_{kj}R^T_{ik} x^*_k - \sum\nolimits_k y^*_k \sum\nolimits_l R^T_{kl} x^*_l R_{lj}\Big)$, with $i,j \in [k_2] $, 
\item[] $D^{yx}_{ij} = -y^*_i \Big (R^T_{ij}-e_j^TRy^*\Big) \tfrac{e^{\xi e^T_j R y^* }}{e^{\xi v}}$, with $i \in [k_2]$, $ j \in [k_1] $,
\item[] $D^{xy}_{ij} = x^*_i \Big(R_{ij} - e_j^TR^T x^*\Big) \tfrac{e^{-\xi e^T_j R^T x^* }}{e^{-\xi v}}$, with $i \in [k_1] $, $j \in [k_2] $.
\end{itemize}

\noindent We observe that all the entries of the matrices $D^{xx}, D^{yy}, D^{yx}, D^{xy} $ are within the interval $[-1, 1]$.
Furthermore, for the remainder of the proof, and without loss of generality, we assume that $(x^*, y^*)$ is a mixed strategy profile, i.e., both $x^*$ and $y^*$ are mixed.\footnote{If exactly one of $x^*$, $y^*$ were mixed, this would also imply the existence of a pure equilibrium, contradicting our uniqueness assumption. If $(x^*, y^*)$ is a pure strategy profile, then $\tilde{J}$ is a $2 \times 2$ block matrix, where each block is a single element. Using the equations of $\eqref{eq:Jac_xy_at_equilibrium}$ in Subsection \ref{supplementary:sub:App_Jac}, then the matrix $\tilde{J}$ only has the zero eigenvalue and the proof of Theorem \ref{thm:contraction} follows directly.}

\noindent We now consider the diagonal element of $D^{xx}$, for any $i$, that is,
\begin{equation*}
    -x^*_i\Big(\sum\nolimits_l R^2_{il} y^*_l - \sum\nolimits_k x^*_k \sum\nolimits_l R_{kl} y^*_l R^T_{li}\Big)
\end{equation*}
We establish the following useful property.

\begin{lemma}
\label{lem:Dxx-negative}
For any $i\in [k_1]$, $D^{xx}_{ii} < 0$, and for any $j\in [k_2]$, $D^{yy}_{jj} < 0$. 
\end{lemma}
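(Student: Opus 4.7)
The plan is to evaluate $D^{xx}_{ii}$ and $D^{yy}_{jj}$ in closed form using the Nash equilibrium identities, recognize each as a negative multiple of a variance, and then rule out the degenerate case of zero variance by a perturbation argument that invokes uniqueness.

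First I would simplify the diagonal entries. For $i\in[k_1]=\mathrm{supp}(x^*)$, the equilibrium identities $\sum_l y^*_l R_{il} = v$ and (for $l\in\mathrm{supp}(y^*)$) $\sum_k x^*_k R_{kl} = v$ both hold. Plugging these into the formula for $D^{xx}_{ii}$ stated just above the lemma, the cross term $\sum_k x^*_k\sum_l R_{kl} y^*_l R^T_{li}$ collapses to $v\sum_l y^*_l R_{il}=v^2$, leaving
\[
D^{xx}_{ii}=-x^*_i\Bigl(\sum_l y^*_l R_{il}^2-v^2\Bigr)=-x^*_i\,\mathrm{Var}_{l\sim y^*}(R_{il}),
\]
and a symmetric manipulation yields $D^{yy}_{jj}=-y^*_j\,\mathrm{Var}_{k\sim x^*}(R_{kj})$. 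Because both $x^*$ and $y^*$ are mixed by the assumption stated just before the lemma, $x^*_i$ and $y^*_j$ are strictly positive, so the sign of each diagonal is determined entirely by whether the corresponding variance is strictly positive.

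Next I would rule out vanishing variance by contradiction. If $\mathrm{Var}_{l\sim y^*}(R_{il})=0$, the row $R_{i\cdot}$ is constant on $\mathrm{supp}(y^*)$, and this constant must equal the mean $v$. Since $|\mathrm{supp}(x^*)|\ge 2$, I can perturb inside $\mathrm{supp}(x^*)$: for small $\varepsilon>0$ define
\[
x' = x^* + \varepsilon \Bigl(e_i - \sum_{i'\in \mathrm{supp}(x^*),\, i'\neq i} \tfrac{x^*_{i'}}{1-x^*_i}\, e_{i'}\Bigr),
\]
which is a probability vector supported on $\mathrm{supp}(x^*)$ for every $\varepsilon\in(0,1-x^*_i)$. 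A direct computation using $R_{il}=v$ on $\mathrm{supp}(y^*)$ yields $(x')^T R e_l = v$ for each $l\in\mathrm{supp}(y^*)$. Because the unique equilibrium is quasi-strict (a fact already invoked in the proof of Theorem \ref{thm:contraction} via Norde's theorem), $x^{*T} R e_l > v$ strictly for $l\notin\mathrm{supp}(y^*)$, and by continuity this strict inequality persists for $x'$ when $\varepsilon$ is small. Hence $y^*$ remains a best response to $x'$, and $x'$ is a best response to $y^*$ because its support lies inside $\mathrm{supp}(x^*)$. Thus $(x',y^*)$ is a second Nash equilibrium distinct from $(x^*,y^*)$, contradicting uniqueness. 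The symmetric perturbation on $y^*$ handles $D^{yy}_{jj}$.

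The main obstacle is this final verification: the perturbed $x'$ must remain a Nash equilibrium on both sides of the bilinear form, and the delicate piece is preserving the strict inequality off the support of $y^*$, which is precisely where uniqueness is used through quasi-strictness. The algebraic simplification in the first step and the support-inclusion checks for $x'$ are routine, but without quasi-strictness the contradiction collapses, so the dependence on uniqueness is essential.
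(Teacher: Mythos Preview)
Your proposal is correct and in some respects cleaner than the paper's own argument. Both approaches establish $D^{xx}_{ii}\le 0$ via the same underlying inequality $\sum_l y^*_l R_{il}^2\ge (\sum_l y^*_l R_{il})^2=v^2$, but you recognize it immediately as a variance after collapsing the cross term, whereas the paper introduces the tilted measure $z_l=R_{il}y^*_l/v$ and proves $v\le\sum_l R_{il}z_l$ through a probability-mass-transfer argument---a more circuitous route to the same Jensen-type fact. For the strict inequality, the paper asserts that if $D^{xx}_{ii}=0$ then $(e_i,y^*)$ is itself a Nash equilibrium and hence yields a pure equilibrium, contradicting uniqueness; this step is terse and tacitly requires $R_{il}\ge v$ even for $l\notin\mathrm{supp}(y^*)$, which does not obviously follow from the zero-variance condition alone. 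Your perturbation within $\mathrm{supp}(x^*)$, combined with quasi-strictness and continuity to preserve the strict off-support inequalities, sidesteps this issue and produces a genuinely mixed $x'\neq x^*$ with $(x',y^*)$ a second equilibrium. So your route is shorter on the non-strict inequality and more carefully justified on the equality case.
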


\begin{proofof}{Proof of Lemma \ref{lem:Dxx-negative}}
We first prove that for any $i$, $D^{xx}_{ii} \leq  0$.
For the sake of contradiction, assume that there exists an index $i$, such that $D^{xx}_{ii} > 0$. 
This means that 
\begin{equation*}
    \sum\nolimits_l R^2_{il} y^*_l <\sum_kx^*_k\sum_l  R_{kl}y^*_lR_{il}.
\end{equation*}
To proceed, we claim that
\begin{equation}
\label{eq:contradiction}
    v\leq\sum\nolimits_l R_{il} z_l,
\end{equation}
where $z_l = \frac{R_{il}y^*_l}{v}$, and $z = (z_l)_{l\in [n]}$. 
To see this, it is crucial to notice first that both $y^*$ and $z$ are probability vectors and also that $v = \sum\nolimits_l R_{il} y_l^*$. 
Hence, the LHS  and the RHS of Equation \eqref{eq:contradiction} are two different convex combinations of the $R_{il}$ values. To go from the LHS to the RHS, we simply replace $y_l^*$ by $z_l$. For each $R_{il}$ that is itself less than $v$, the coefficient $y_l^*$ is replaced by a smaller coefficient, since $z_l < y_l^*$ in this case (by the definition of $z_l$). On the contrary, for each $R_{il}$ with $R_{il} > v$, it holds that $z_l> y_l^*$ (and we also have $z_l = y_l^*$ when $R_{il} = v$). Hence, we can think of the move from the LHS to the RHS of \eqref{eq:contradiction}, as transferring probability mass from the lowest valued $R_{il}$'s to the highest ones. Let $\Delta$ be the total amount of probability mass that was transferred. Then $\Delta = \sum_{l:R_{il < v}} (y_l^* - z_l) \geq 0$. Note that it also holds that $\Delta = \sum_{l:R_{il > v}} (z_l - y_l^*)$. If we compare now the LHS with the RHS, the RHS has a deficit of a total value of at most $\Delta\cdot v$ from the terms with $R_{il} < v$, compared to the corresponding terms of the LHS. At the same time, it has a surplus of at least $\Delta\cdot v$ from the terms with $R_{il} > v$. Combining the deficit and the surplus, this proves Equation \eqref{eq:contradiction}.

Using \eqref{eq:contradiction}, we can now obtain the following contradiction:
\begin{equation*}
    v\leq\sum\nolimits_l R_{il} z_l <  x^{*T}Rz = v,
\end{equation*}
where the strict inequality above follows by the condition stated just before Equation \eqref{eq:contradiction}, and the final equality holds since $x^{*T}Re_j = v$ for any $j\in supp(y^*)$ (and so for any $j\in supp(z)$). 

Thus, we have reached a contradiction, which means that $D^{xx}_{ii} \leq 0$ for every $i\in [k_1]$. 
In addition, it is not difficult to see that in case $D^{xx}_{ii} = 0$ for some $i$, the strategy profile $(i, y^*)$ is also a Nash equilibrium. But this would imply that there also exists a pure equilibrium formed by $i$ and a pure best response (from the support of $y^*$), contradicting the fact that we have a unique equilibrium. Hence, $D^{xx}_{ii}$ is strictly negative for every $i \in [k_1]$. 

Similarly, the same analysis holds for the matrix $D^{yy}$, completing the proof of the lemma. 
\end{proofof}

\noindent To finish the proof, we estimate an upper bound on the $p$-norm of $J'$ for $p\in\mathbb{N}$. We have that 

\begin{tabbing}
$\|J'\|^p_{p}$ \= $ = \sum_j \Big(\sum_i |J'_{ij}|^p\Big) $ \= $ \leq k \max_j \Big(\sum_i |J'_{ij}|^p\Big)$  \\
\>\> $ \leq k( |1 + \eta \xi D^{xx}_{j'j'}|^p $ + $ \eta^p \xi^p  \sum_{\substack{i=1, \\ i \neq j}}^{k_1} |D^{xx}_{ij'}|^p + \eta^p \sum_{i = k_1 + 1}^{k_2} |D^{yx}_{ij'}|^p)$ \\
\>\> $\leq k(|1 + \eta \xi D^{xx}_{j'j'}|^p + \eta^p \xi^p k_1 + \eta^p k_2)$,\\
\end{tabbing}
where $\ell$ is the column of $J'$ that achieves the maximum sum, i.e., $\ell = \argmax_{j \in [k_1 + k_2]} \sum_i |J'_{ij}|^p$, and we assumed without loss of generality that $\ell$ belongs to $\{1,\dots,k_1\}$. We can now see that since $D^{xx}_{\ell\ell}$ is negative, and both $|D^{xx}_{i\ell}|$, $|D^{yx}_{i\ell}|$ are at most equal to one, then if $\eta \xi<1$, and $\eta$ is sufficiently small, there exists an appropriate $p$ so that $\|J'\|^p_{p} < 1$. However, it is well known that the maximum absolute value of an eigenvalue of a matrix is bounded by the induced matrix norms, therefore is suffices to check that $\|J'\| < 1$ for some matrix norm, see \cite{QuartSaccoSaleri}. Thus, the absolute value of the maximum eigenvalue of $J'$ is less than one, and this concludes our proof.
\end{proofof}

\begin{proofof}{Proof of Claim \ref{claim:left_right_eigenvectors}}
Consider two distinct eigenvalues of $A$ $\lambda_1$ and $\lambda_2$, such that $v$ is the corresponding to $\lambda_1$ left eigenvector, while $u$ is the corresponding to $\lambda_2$ right eigenvector (\cite{strang09}). In other words, $v$ is the corresponding to $\lambda_1$ right eigenvector for $A^T$. We observe that, $v^T (A^T u) = (v^T A^T) u = (Av)^T u$. So, $\lambda_1 v^T u = (A^T v)^T u =  v^T (A u) = v^T \lambda_2 u = \lambda_2 v^T u$. Thus, $v^T u = 0$.
\end{proofof}

%%%%%%%%%%%%%%%%%%

\subsection{Equations of the Jacobian Entries}\label{supplementary:sub:App_Jac}

Recall the form of the Jacobian of our dynamical system in Equation \eqref{eq:jacobian}.

We compute the form of each entry of $J$ at the point $({x}, {y})$. Let $Q_x = \sum\nolimits_\ell x_\ell e^{\eta e^T_\ell R f({x}, {y})}$, $Q_y = \sum\nolimits_\ell y_\ell e^{ - \eta e^T_\ell R^T h({x}, {y})}$, $S_x = \sum\nolimits_\ell x_\ell e^{\xi e^T_\ell R {y} }$, and $S_y = \sum\nolimits_\ell y_\ell e^{- \xi e^T_\ell R^T{x} }$. 

\begin{equation}\label{eq:Jac_xy}
    \begin{split}
        & \tfrac{\partial \varphi_{1,i}}{\partial x_i} =  e^{\eta e^T_i R f({x}, {y})} \tfrac{Q_x \left(1 + \eta x_i \tfrac{\partial}{\partial x_i} (e_i^TR f({x}, {y}))\right) - x_i \tfrac{\partial}{\partial x_i} Q_x}{Q^2_x}, \quad i \in [n], \\
        & \tfrac{\partial \varphi_{1,i}}{\partial x_j} =  x_i e^{\eta e^T_i R f({x}, {y})}\tfrac{ \eta  Q_x \tfrac{\partial}{\partial x_j} (e_i^TR f({x}, {y})) - \tfrac{\partial}{\partial x_j} Q_x}{Q^2_x}, \quad i,j \in [n] \text{ and } i \neq j, \\
        & \tfrac{\partial \varphi_{1,i}}{\partial y_j} =  x_i e^{\eta e^T_i R f({x}, {y})}\tfrac{ \eta Q_x\tfrac{\partial}{\partial y_j} (e_i^TR f({x}, {y})) - \tfrac{\partial}{\partial y_j} Q_x}{Q^2_x}, \quad i,j \in [n], \\
        & \tfrac{\partial \varphi_{2,i}}{\partial x_j} =  y_i e^{-\eta e^T_i R^T h({x}, {y})} \tfrac{ -\eta Q_y \tfrac{\partial}{\partial x_j} (e_i^TR^T h({x}, {y})) - \tfrac{\partial}{\partial x_j} Q_y}{Q^2_y}, \quad i,j \in [n],\\
        & \tfrac{\partial \varphi_{2,i}}{\partial y_i} =  e^{-\eta e^T_i R^T h({x}, {y})} \tfrac{Q_y \left(1 - \eta y_i \tfrac{\partial}{\partial y_i} (e_i^TR^T h({x}, {y})) \right) - y_i \tfrac{\partial}{\partial y_i} Q_y}{Q^2_y}, \quad i \in [n], \\
        & \tfrac{\partial \varphi_{2,i}}{\partial y_j} =  y_i e^{- \eta e^T_i R^T h({x}, {y})} \tfrac{ - \eta Q_y \tfrac{\partial}{\partial y_j} (e_i^TR^T h({x}, {y})) - \tfrac{\partial}{\partial y_j} Q_y}{Q^2_y}, \quad i,j \in [n] \text{ and } i \neq j. \\
    \end{split}
\end{equation}

\noindent At the point $({x^*}, {y}^*)$, after exploiting the fact that this is an equilibrium profile, and simplifying some of the calculations, we obtain the following forms. 

\begin{equation}\label{eq:Jac_xy_at_equilibrium}
    \begin{split}
        & \tfrac{\partial \varphi_{1,i}}{\partial x_i} =  1 - x^*_i \left(\eta \xi \left( \sum\nolimits_k R^2_{ik} y^*_k - \sum\nolimits_k x^{*}_k \sum\nolimits_l R_{kl} y^{*}_l R^T_{li} \right) + 1\right), \enspace i \in supp({x}^*), \\
        & \tfrac{\partial \varphi_{1,i}}{\partial x_i} = \tfrac{e^{\eta e^T_i R {y}^*}}{e^{\eta v}}, \enspace i \not \in supp({x}^*), \\
        & \tfrac{\partial \varphi_{1,i}}{\partial x_j} =  - x^*_i \left( \eta \xi \left( \sum\nolimits_k R_{ik} y^*_k R^T_{kj} - \sum\nolimits_k x^{*}_k \sum\nolimits_l R_{kl} y^{*}_l R^T_{lj} \right) + 1 \right), \enspace i \in supp({x}^*), i\neq j, \\
        & \tfrac{\partial \varphi_{1,i}}{\partial x_j} =  0,  \enspace i \not \in supp({x}^*) \text{ and } i \neq j, \\
        & \tfrac{\partial \varphi_{1,i}}{\partial y_j} =  x^*_i \eta (R_{ij} - e_j^TR^T x^*)\tfrac{e^{-\xi e^T_j R^T x^* }}{e^{-\xi v}}, \enspace \text{for all } i \in supp({x}^*), \\
        & \tfrac{\partial \varphi_{1,i}}{\partial y_j} =  0,  \enspace i \not \in supp({x}^*), \\
        & \tfrac{\partial \varphi_{2,i}}{\partial x_j} =  - y^*_i \eta (R^T_{ij} - e_j^TR y^*)\tfrac{e^{\xi e^T_j R y^* }}{e^{\xi v}}, \enspace \text{for all } i \in supp({y}^*), \\
        & \tfrac{\partial \varphi_{2,i}}{\partial x_j} =  0,  \enspace i \not \in supp({y}^*), \\
        & \tfrac{\partial \varphi_{2,i}}{\partial y_i} =  1 - y^*_i \left( \eta \xi \left( \sum\nolimits_k (R^T_{ik})^2 x^*_k - \sum\nolimits_k y^*_k \sum\nolimits_l R^T_{kl} x^*_l R_{li}\right) + 1\right), \enspace i \in supp({y}^*), \\
        & \tfrac{\partial \varphi_{2,i}}{\partial y_i} =  \tfrac{e^{-\eta e^T_i R^T {x}^*}}{e^{-\eta v}},  \enspace i \not \in supp({y}^*), \\
        & \tfrac{\partial \varphi_{2,i}}{\partial y_j} =  - y^*_i \Big(\eta \xi \Big( \sum\nolimits_k R_{kj} R^T_{ik} x^*_k - \sum_k y^*_k \sum_l R^T_{kl} x^*_l R_{lj} \Big) + 1\Big), \enspace i \in supp({y}^*), i\neq j, \\
        & \tfrac{\partial \varphi_{2,i}}{\partial y_j} =  0,  \enspace i \not \in supp({y}^*) \text{ and } i \neq j, \\
    \end{split}
\end{equation}

%%%%%%%%%%%%%%%%%%

\section{Additional Numerical Demonstrations}\label{supsec:Numerical Experiments}

In this section, we demonstrate the properties of the FLBR-MWU algorithm using additional metrics and perform further comparisons. 

\noindent {\bf Convergence to the value of the game.} Figure \ref{game:value:fig} shows the evolution of the current value of the game at each iteration, with the same payoff matrix as that used in the example of Figure~\ref{motiv:ex:fig}, in the main body of the paper. The current value of the game at iteration $t$ is defined as $v^t = (x^{t})^{T} R y^t$, and it serves as another convergence measure to Nash equilibrium. MWU (blue) oscillates around the true value of the game ($v = 0.529677$) without converging, while OMWU (red) oscillates with decreasing amplitude and eventually it converges to the true value. The current game value for the FLBR-MWU dynamics (black) converges much faster requiring only a few thousand steps.

\begin{figure}[htp]
    \centering
    \includegraphics[width=.8\textwidth]{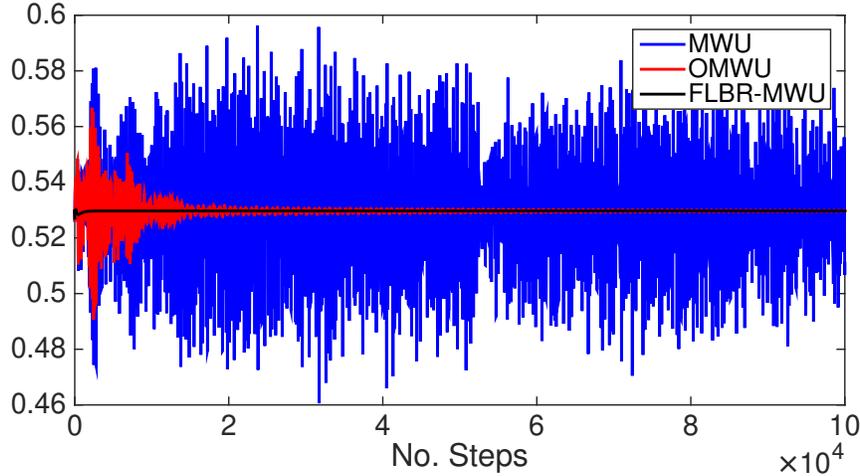}
    \caption{The value of the game as a function of the number of steps for the three MWU variants.}
    \label{game:value:fig}
\end{figure}

\begin{figure}[htp]
    \centering
    \includegraphics[width=1.05\textwidth]{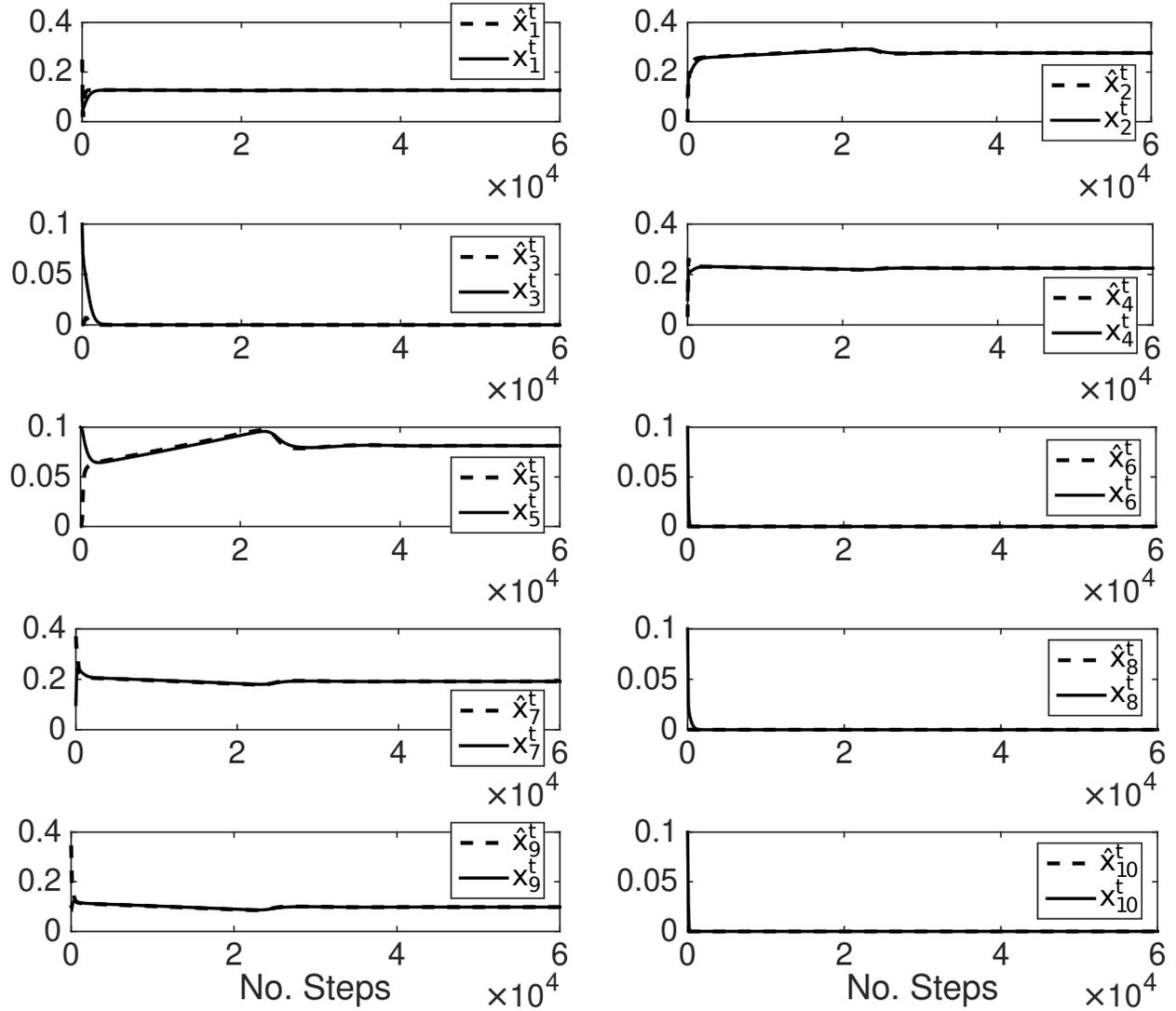}
    \caption{The dynamics of the update step per coordinate (solid), as well as the IBR step (dashed) for the row player. For the equilibrium strategy $x^*$, it holds that $supp(x^*)=\{1,2,4,5,7,9\}$. Note that $x^t$ converges to the same support.}
    \label{x_t:linear:fig}
\end{figure}

\begin{figure}[htp]
    \centering
    \includegraphics[width=1.05\textwidth]{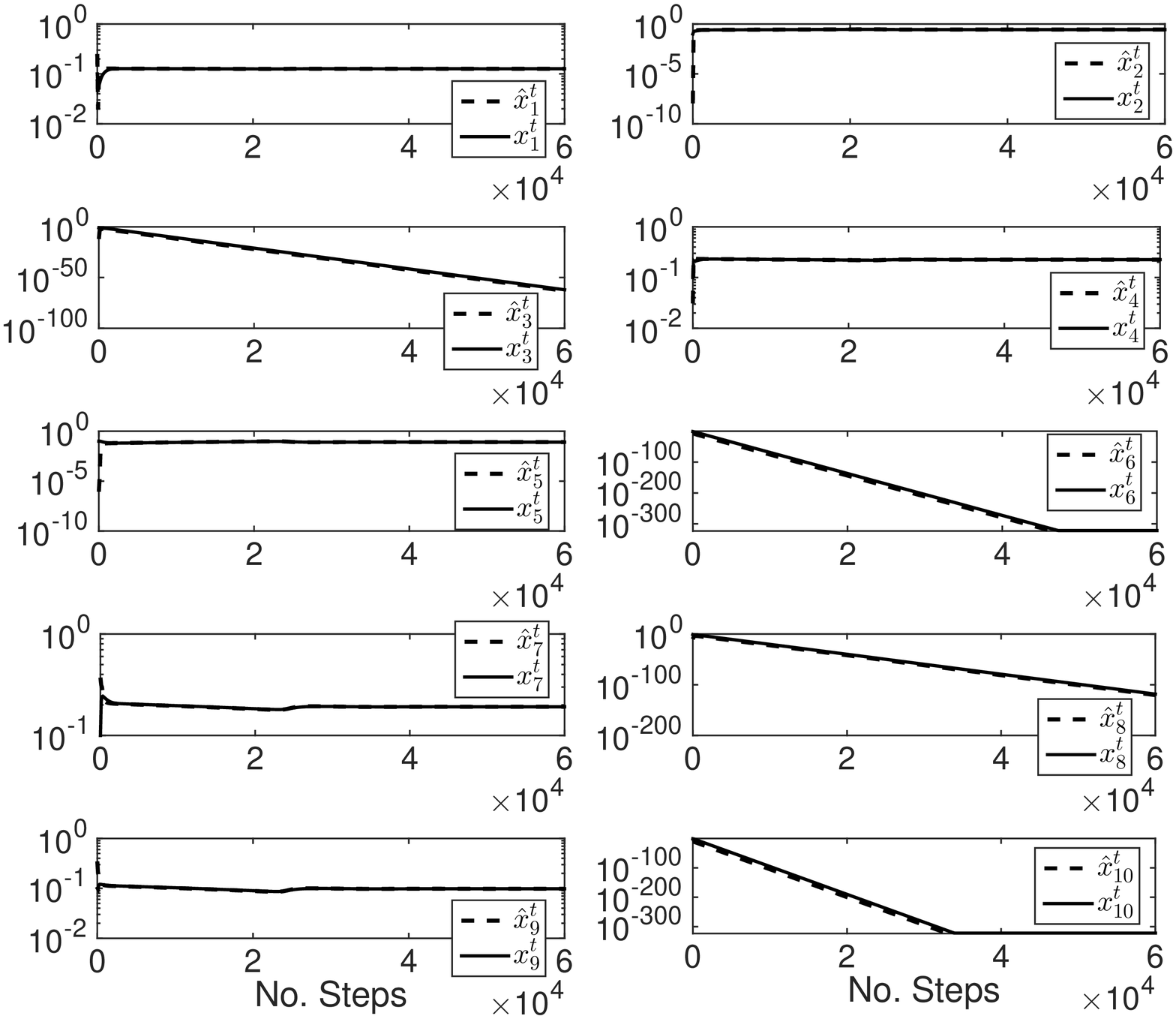}
    \caption{Same as Figure \ref{x_t:linear:fig}, but in logarithmic scale.}
    \label{x_t:log:fig}
\end{figure}

\noindent {\bf Dynamics trajectories.} Figures \ref{x_t:linear:fig} and \ref{x_t:log:fig} show the trajectories of the row player (i.e., $x_i^t$ for $i=1,\dots, 10$) in linear and log scale, respectively. Similarly, Figures \ref{y_t:linear:fig} and \ref{y_t:log:fig} show the trajectories of the column player (i.e., $y_i^t$). Again, the payoff matrix is the same as in Figure~\ref{motiv:ex:fig} from the main paper, and the Nash equilibrium is estimated as:
\begin{equation*}
    \left(\begin{array}{c}
         x^{*T}  \\
         y^{*T}
    \end{array}\right) = \left(\begin{array}{l}
        0.126766, ~0.276988, ~0, ~0.22506, ~0.081435, ~0, ~0.191705, ~0, ~0.098045, ~0  \\
        0, ~0.058227, ~0, ~0.298188, ~0.213176, ~0, ~0, ~0.283403, ~0.000376, ~0.146628 
    \end{array}\right) \ .
\end{equation*}

First, we note that for all pure strategies that do not belong to the support of $x^*$ or $y^*$, the corresponding probabilities in $x^t$ and $y^t$ converge to 0 under FLBR-MWU, after a few thousand steps. Additionally, we observe interesting patterns during the evolution of the learning dynamics in both scales which are intimately connected with the KL divergence trajectory shown in Figure~\ref{motiv:ex:fig} of the main paper. Indeed, it is worth looking at $y_9^t$ (log scale; Figure \ref{y_t:log:fig}), which shows the most interesting pattern. Initially it seems that this is not a surviving strategy of the dynamics and its probability decreases for the first $10K$ steps. However, and, despite its very low value, it recovers to the actual Nash equilibrium value. Similarly, we observe that the non-zero elements of $x^t$ (linear scale; Figure \ref{x_t:linear:fig}) are linearly evolving for several thousands of steps. Those changes in the dynamics correspond to the plateau of the KL divergence observed in Figure~\ref{motiv:ex:fig} of the main text. Our explanation of the dynamics trajectories is as follows: starting from the uniform state, the FLBR-MWU algorithm first finds an approximate Nash equilibrium with a value close to the true value of the game but then escapes from it until it eventually converges to the actual Nash equilibrium.

Another interesting observation is that the dynamics of the IBR step (recall Equation~$\eqref{step2}$ in the main paper) drive the FLBR-MWU dynamics in the sense that when the IBR dynamics are above the FLBR-MWU dynamics, then the corresponding probabilities in the update step of FLBR-MWU increase, while the opposite is true when the IBR dynamics are below the FLBR-MWU dynamics.

\newpage

\begin{figure}[htp]
    \centering
    \includegraphics[width=\textwidth]{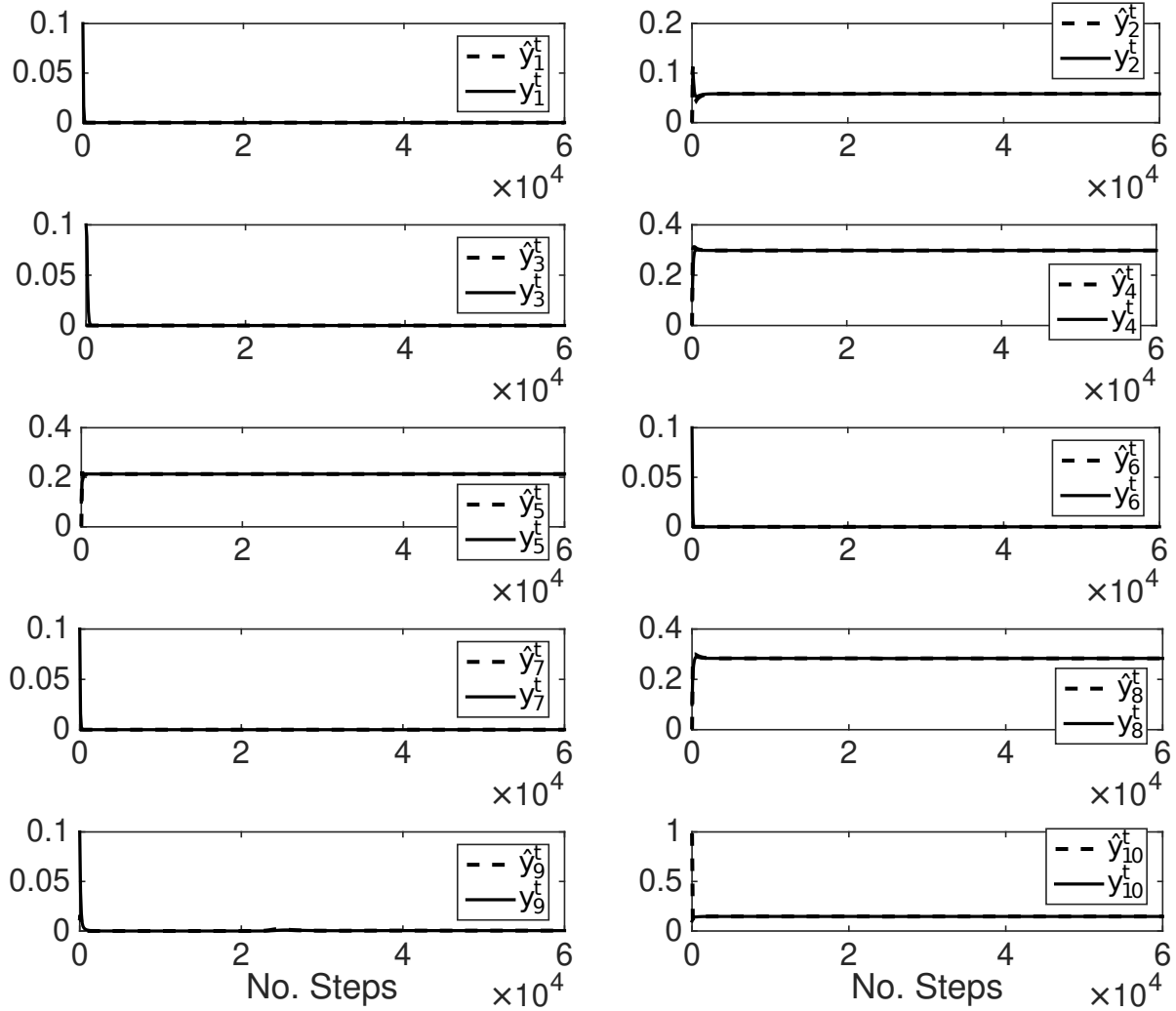}
    \caption{The dynamics of the update step per coordinate (solid), as well as the IBR step (dashed) for the column player. For the equilibrium strategy $y^*$, it holds that $supp(y^*)=\{2,4,5,8,9,10\}$.}
    \label{y_t:linear:fig}
\end{figure}

\begin{figure}[htp]
    \centering
    \includegraphics[width=\textwidth]{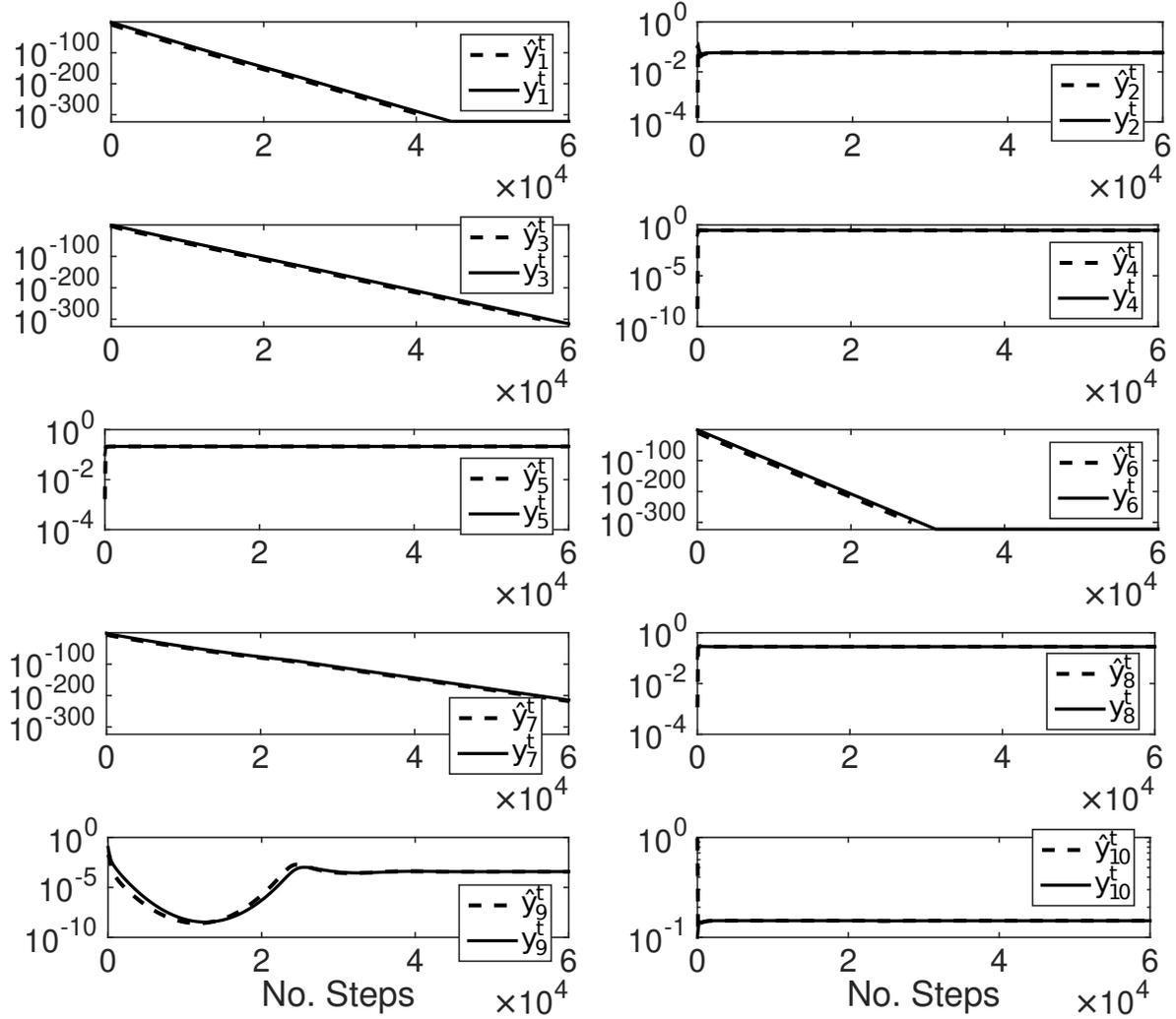}
    \caption{Same as Figure \ref{y_t:linear:fig}, but in logarithmic scale.}
    \label{y_t:log:fig}
\end{figure}

\noindent {\bf Effect of the intermediate rate ($\xi$).} We present further statistical information on the effect of $\xi$. Figure~\ref{ksi:effect:fig} shows the distribution of the number of steps as a boxplot for $n=10$ (left) and $n=20$ (right). The red line in the boxplot corresponds to the median value while the blue box corresponds to the area covered by the 2nd and 3rd quantiles. The distribution of the number of steps till convergence is positively (or right) skewed. Therefore we also report the statistics of the right tail in Table~\ref{tab:ksi:quantiles}. The presented results further validate the suggested value for $\xi$ in Section~\ref{sec:exp} of the main paper. We also remark that the product $\eta\xi$ is not always less than 1 in our experiments. Hence, although we needed the condition $\eta\xi < 1$ to prove our theoretical result in Section~\ref{sec:theory}, the numerical evidence shows that the product can take values greater than 1 and still attain convergence (however $\eta\xi$ should not become arbitrarily large).

\begin{figure}[htp]
    \centering
    \includegraphics[width=0.49\textwidth]{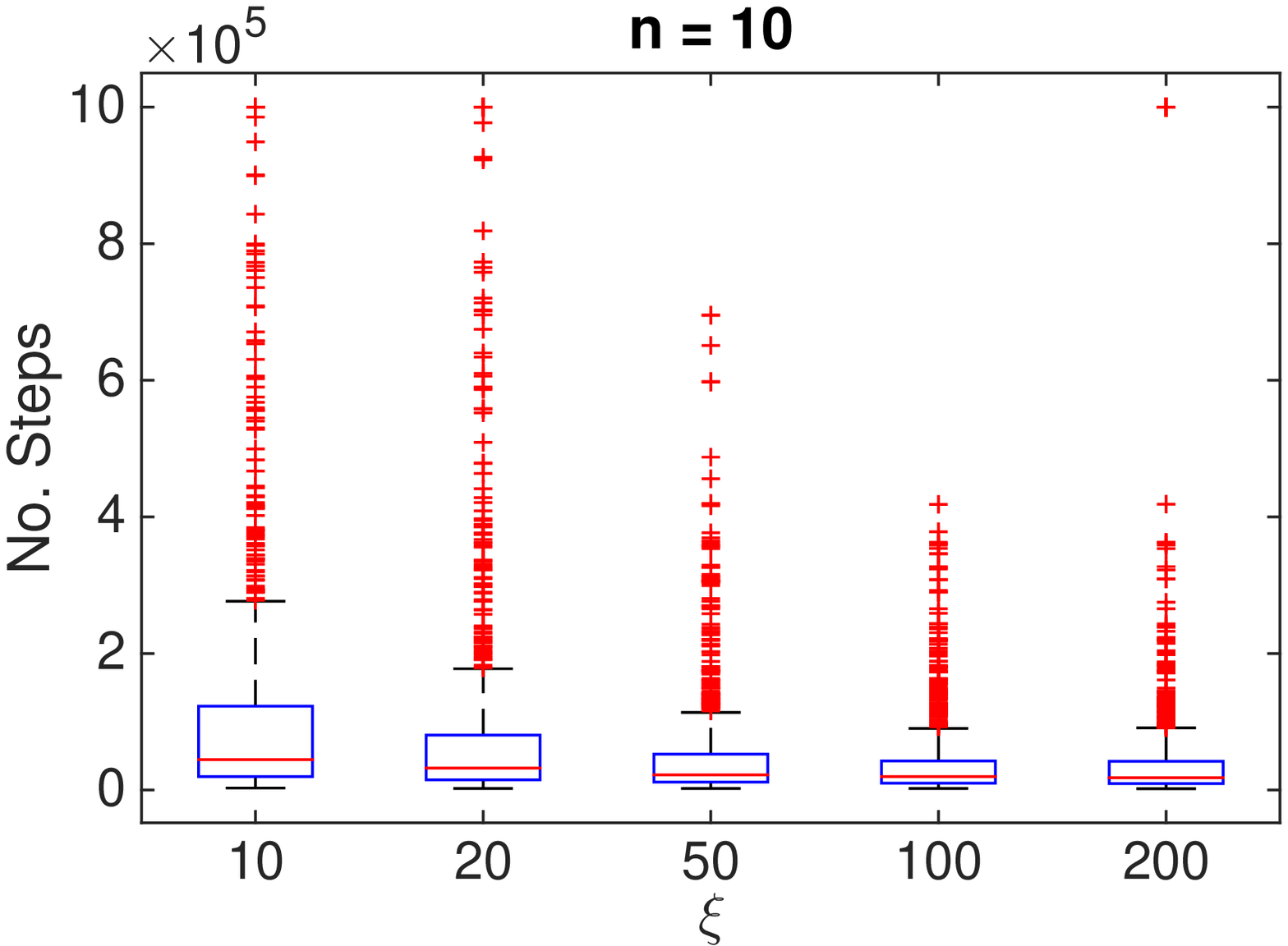}
    \includegraphics[width=0.49\textwidth]{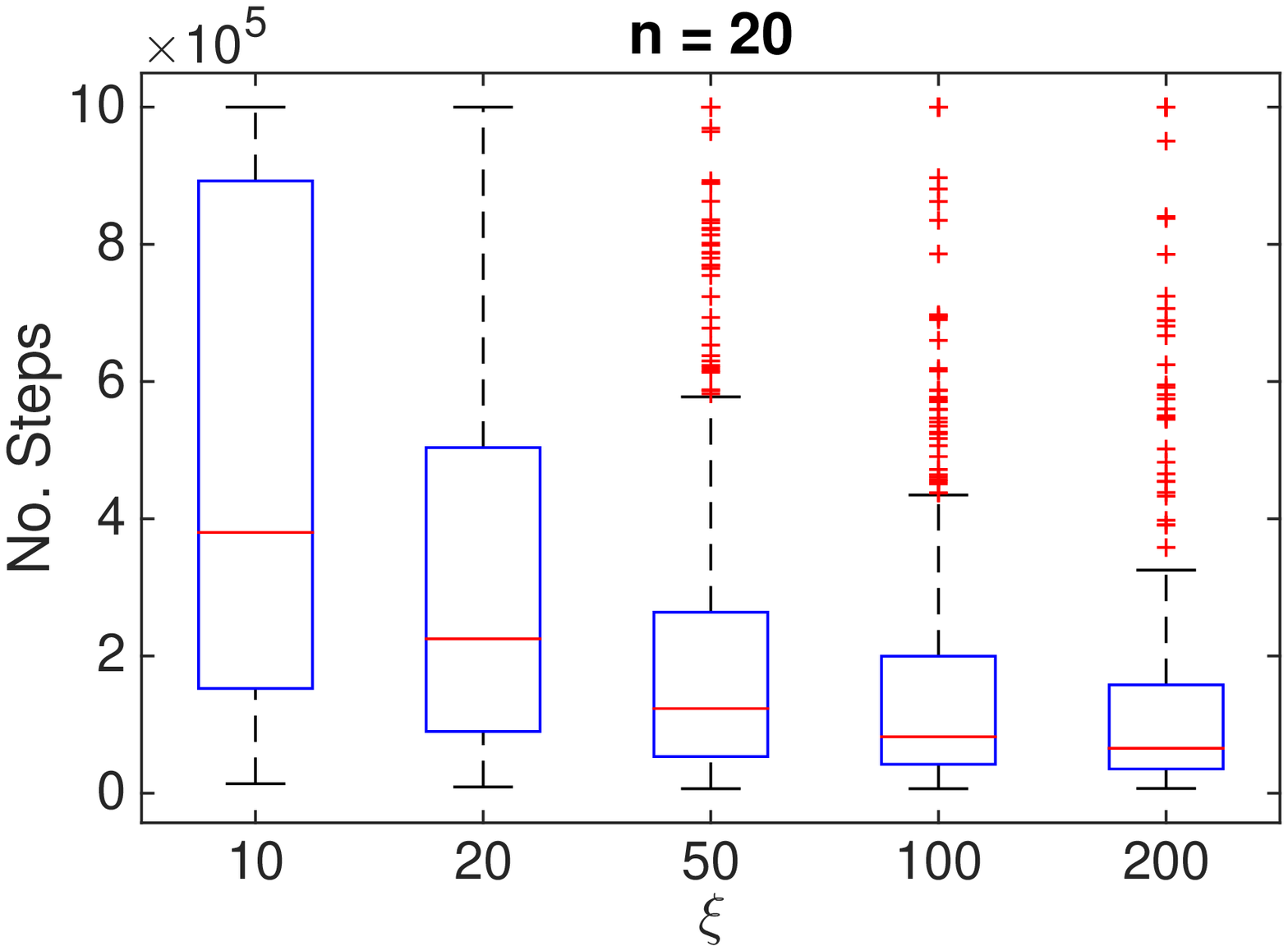}
    \caption{Boxplots for the number of steps until convergence for various values of $\xi$ and two payoff matrix sizes.}
    \label{ksi:effect:fig}
\end{figure}

\begin{table}[th]
\caption{Quantile statistics on the number of steps till convergence for various values of $\xi$ and $n$. The maximum number of steps was set to $t_{\max} = 2\times 10^6$.}
\label{tab:ksi:quantiles}
\centering
\begin{tabular}{ccccccc}
Matrix size & Quantile & $\xi=10$ & $\xi=20$ & $\xi=50$ & $\xi=100$ & $\xi=200$ \\ \hline 
\multirow{3}{*}{$n=10$} & 75\% & 127.1K & 83.3K & 54.5K & 44.1K & 43.6K \\
& 90\% & 346.7K & 209.1K & 137.9K & 111.3K & 110.3K \\
& 97.5\% & 1035.2K & 640.3K & 322.1K & 228.7K & 372.8K \\ \hline 
\multirow{3}{*}{$n=20$} & 75\% & 1957.2K & 1127.2K & 576.9K & 441.3K & 342.7K \\
& 90\% & 2000.0K & 2000.0K & 1644.8K & 1076.0K & 830.8K \\
& 97.5\% & 2000.0K & 2000.0K & 2000.0K & 2000.0K & 2000.0K \\
\end{tabular}
\end{table}

\noindent {\bf Number of steps.} Moving on, we present additional comparisons between FLBR-MWU and OMWU. Figure \ref{n:effect:fig} demonstrates the distribution of the number of steps till convergence for FLBR-MWU  (left) and OMWU (right). Interestingly, the distribution for payoff matrix size $n=50$ with the FLBR-MWU algorithm is similar to the distribution for $n=5$ with the OMWU algorithm. The computational gains are expected to be even more dramatic for larger games.

\begin{figure}[htp]
    \centering
    \includegraphics[width=0.49\textwidth]{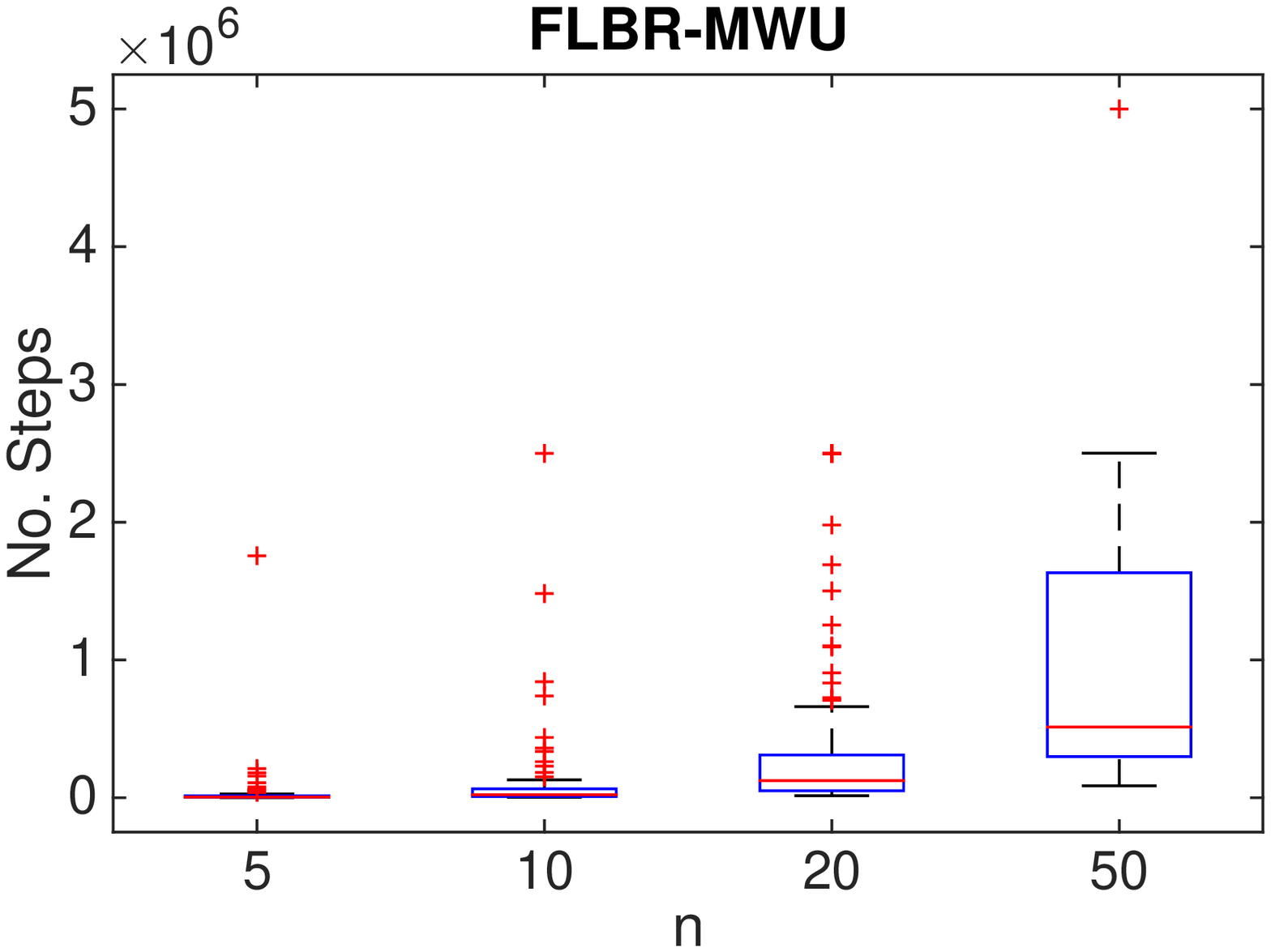}
    \includegraphics[width=0.49\textwidth]{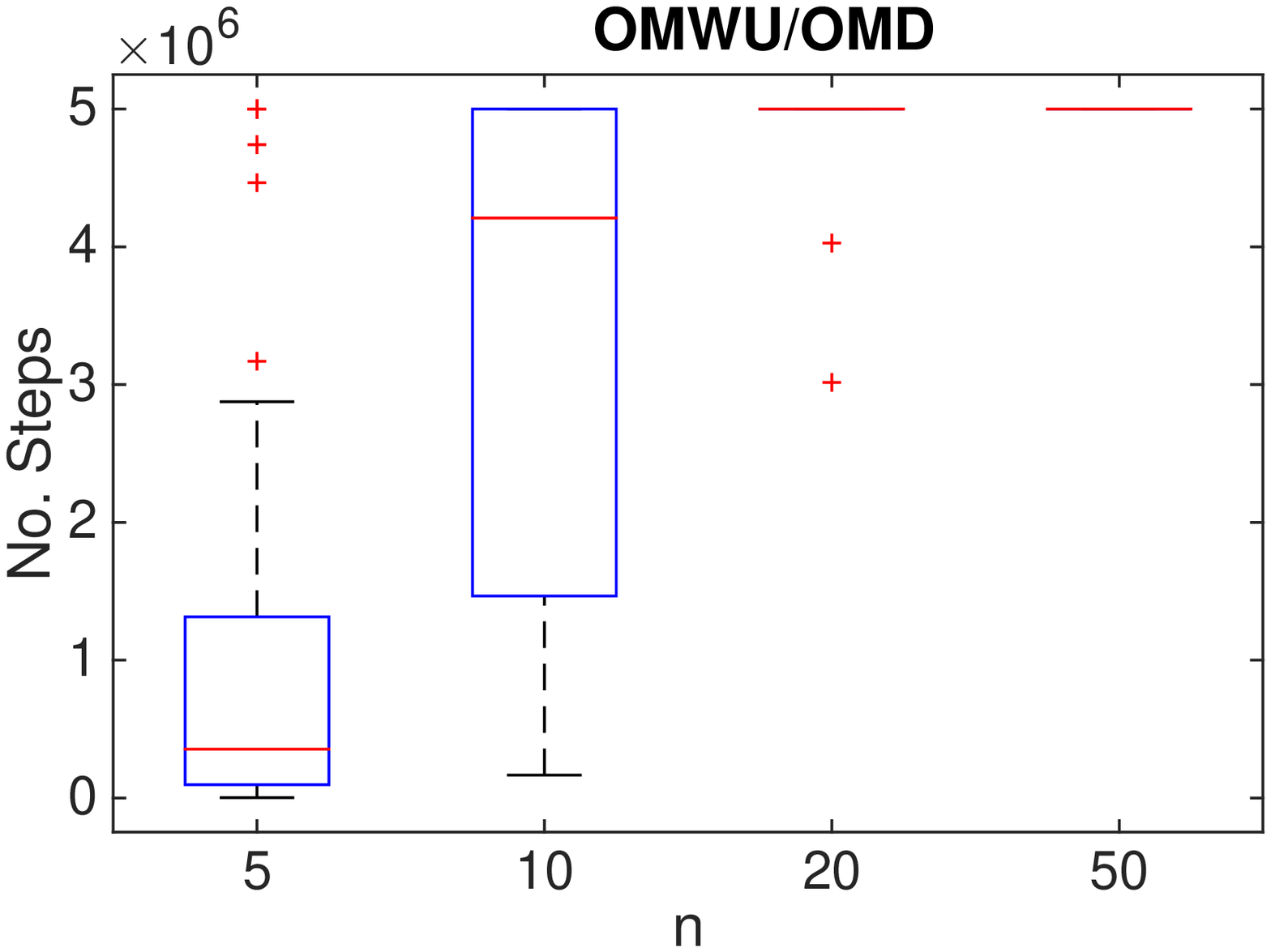}
    \caption{Boxplots for the number of steps until convergence for various payoff matrix sizes under FLBR-MWU (left) and OMWU/OMD (right). The computational gains when FLBR-MWU is used are striking.}
    \label{n:effect:fig}
\end{figure}

\noindent {\bf MWU, OMWU, and OMD.} We also present a comparison among the MWU, OMWU and OMD dynamics (where for OMD we implemented the version of \cite{DBLP:conf/iclr/MertikopoulosLZ19} with entropy regularization). Figure \ref{long:runs:fig} shows the evolution of a long run of 5 million steps and two values for the learning rate, $\eta$. We use the same payoff matrix as in Figure~\ref{motiv:ex:fig} of the main text and recall that the proposed FLBR-MWU method converged after only $100K$ steps (see Figure~\ref{motiv:ex:fig} in the main text). It is evident from the KL divergence in Figure \ref{long:runs:fig} (leftmost panels) that the OMWU and OMD algorithms have almost the same behavior, as expected by \cite{Wei2021LinearLC}, and they both converge, but in a very slow pace. The oscillatory behavior is prominent even after a large number of steps, as quantified by the $l_1$ norm difference (rightmost panels of Figure \ref{long:runs:fig}).

\begin{figure}[htp]
    \centering
    \includegraphics[width=0.49\textwidth]{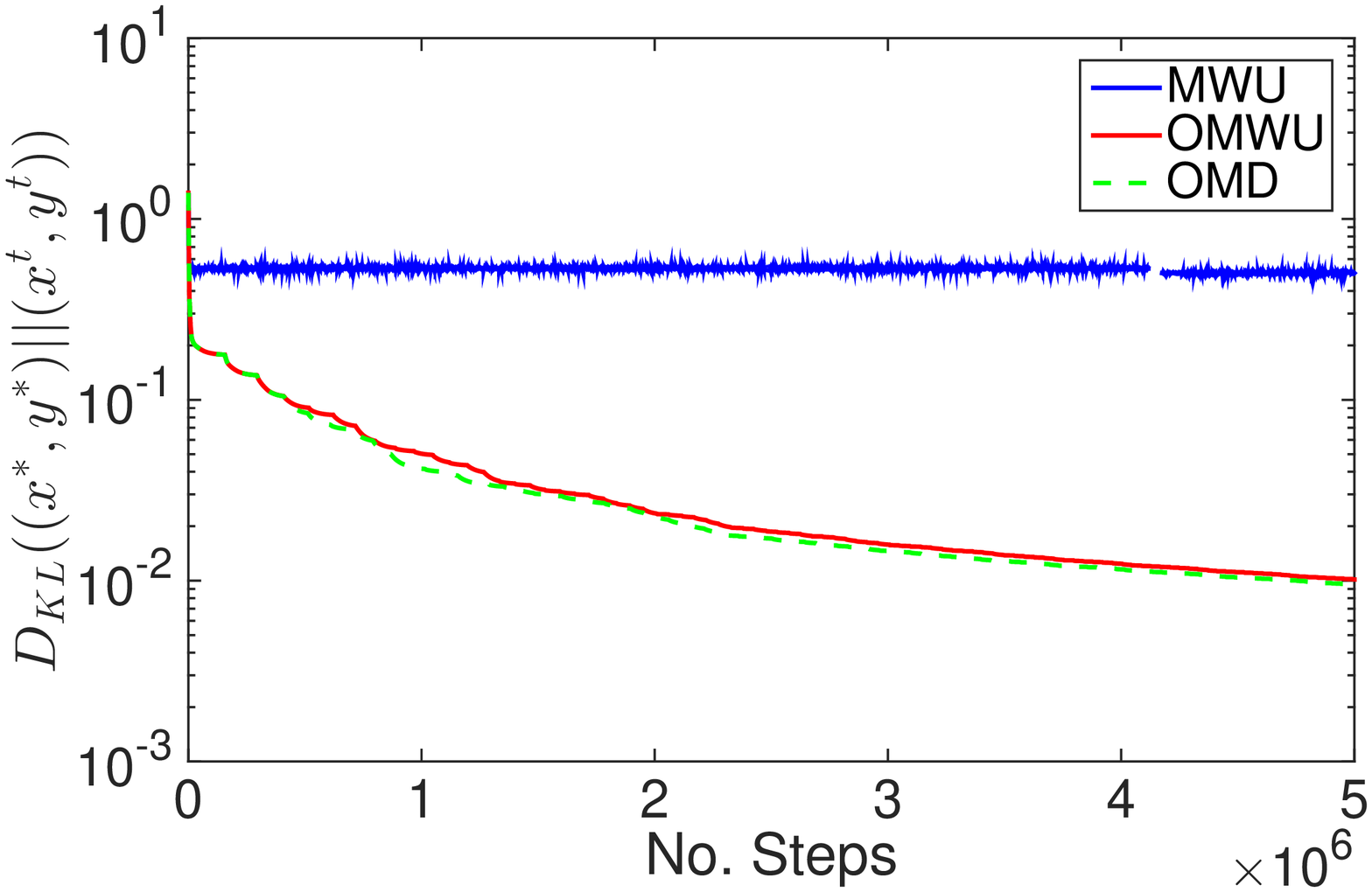}
    \includegraphics[width=0.49\textwidth]{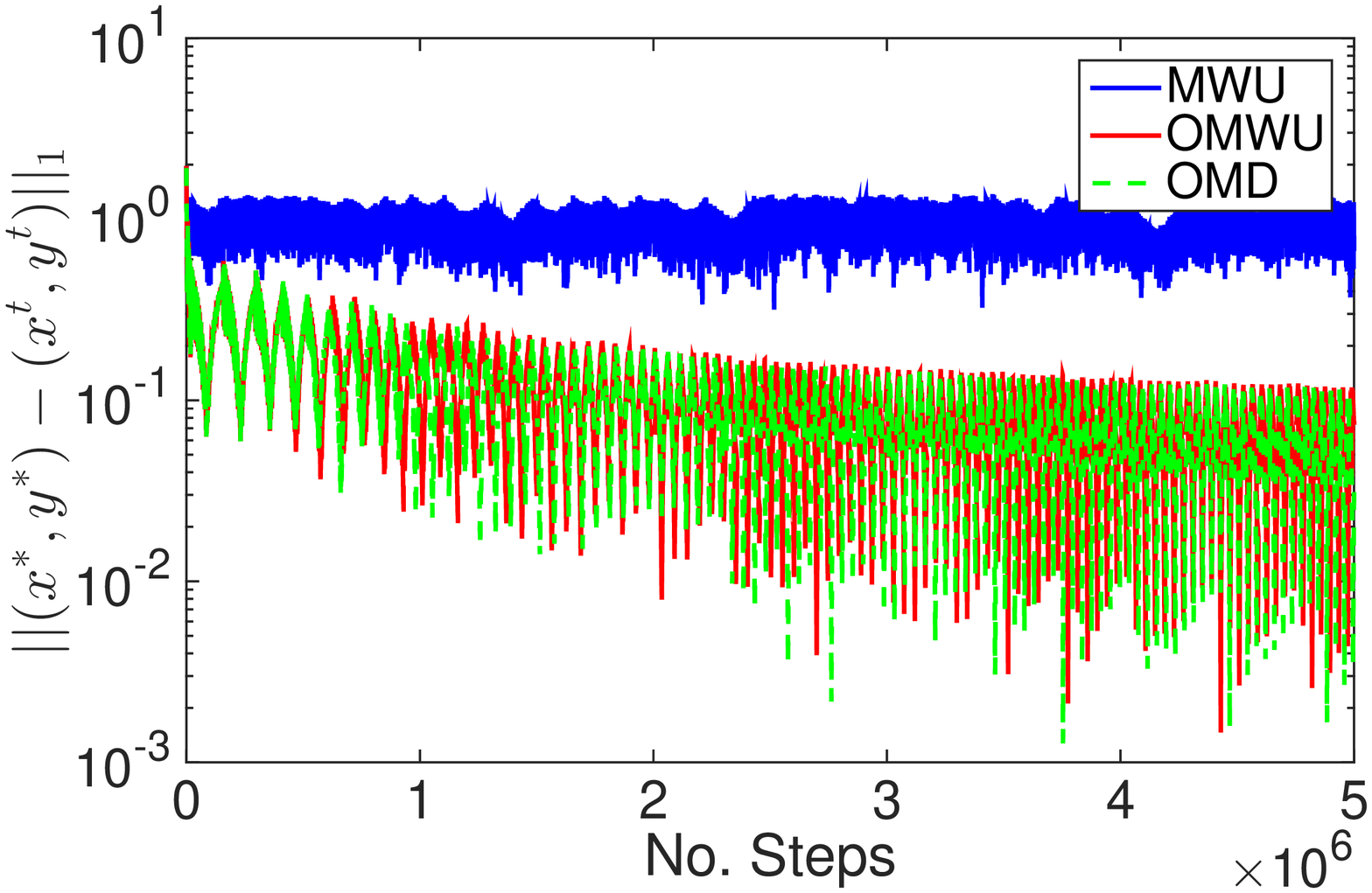}
    \includegraphics[width=0.49\textwidth]{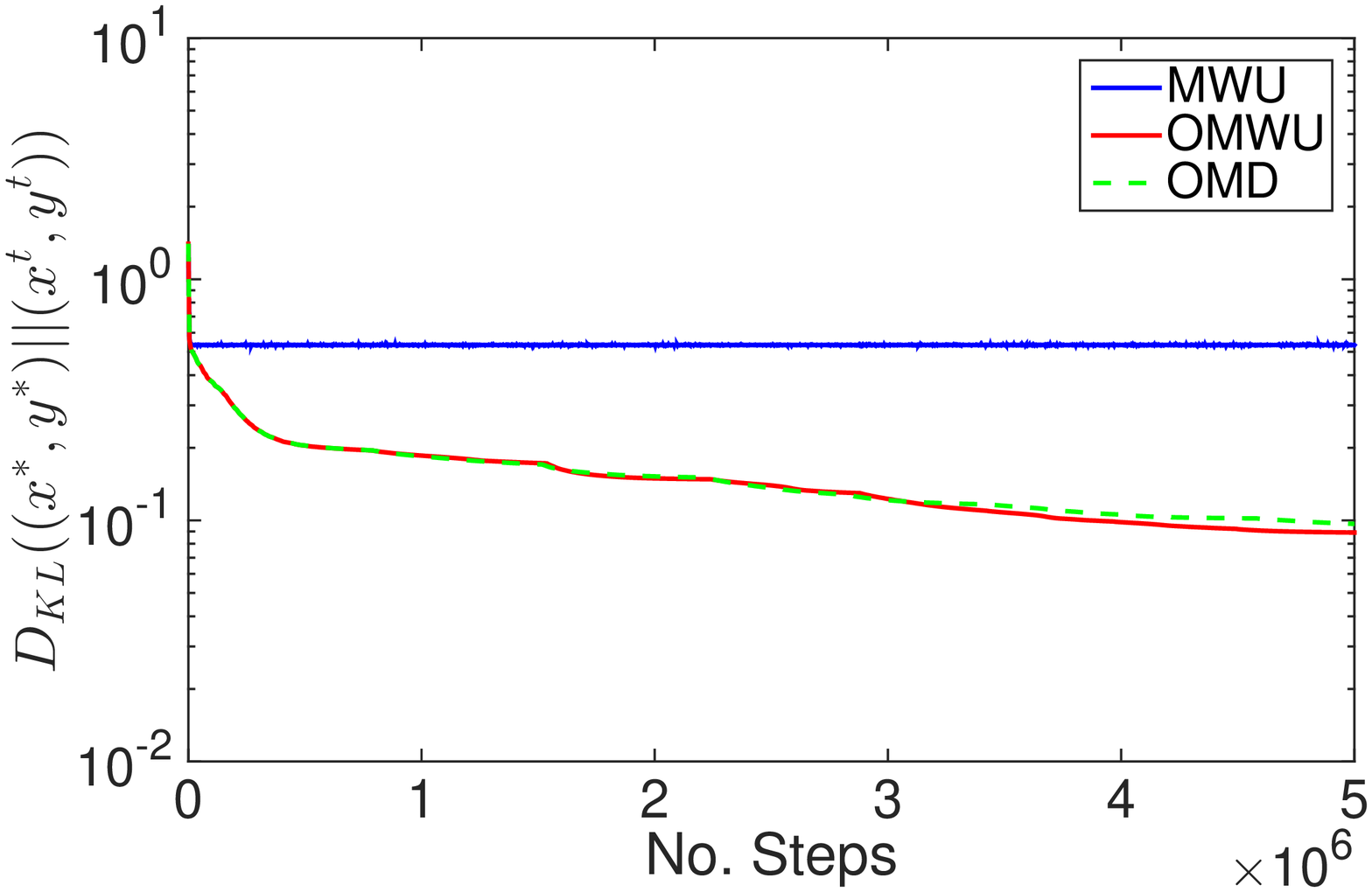}
    \includegraphics[width=0.49\textwidth]{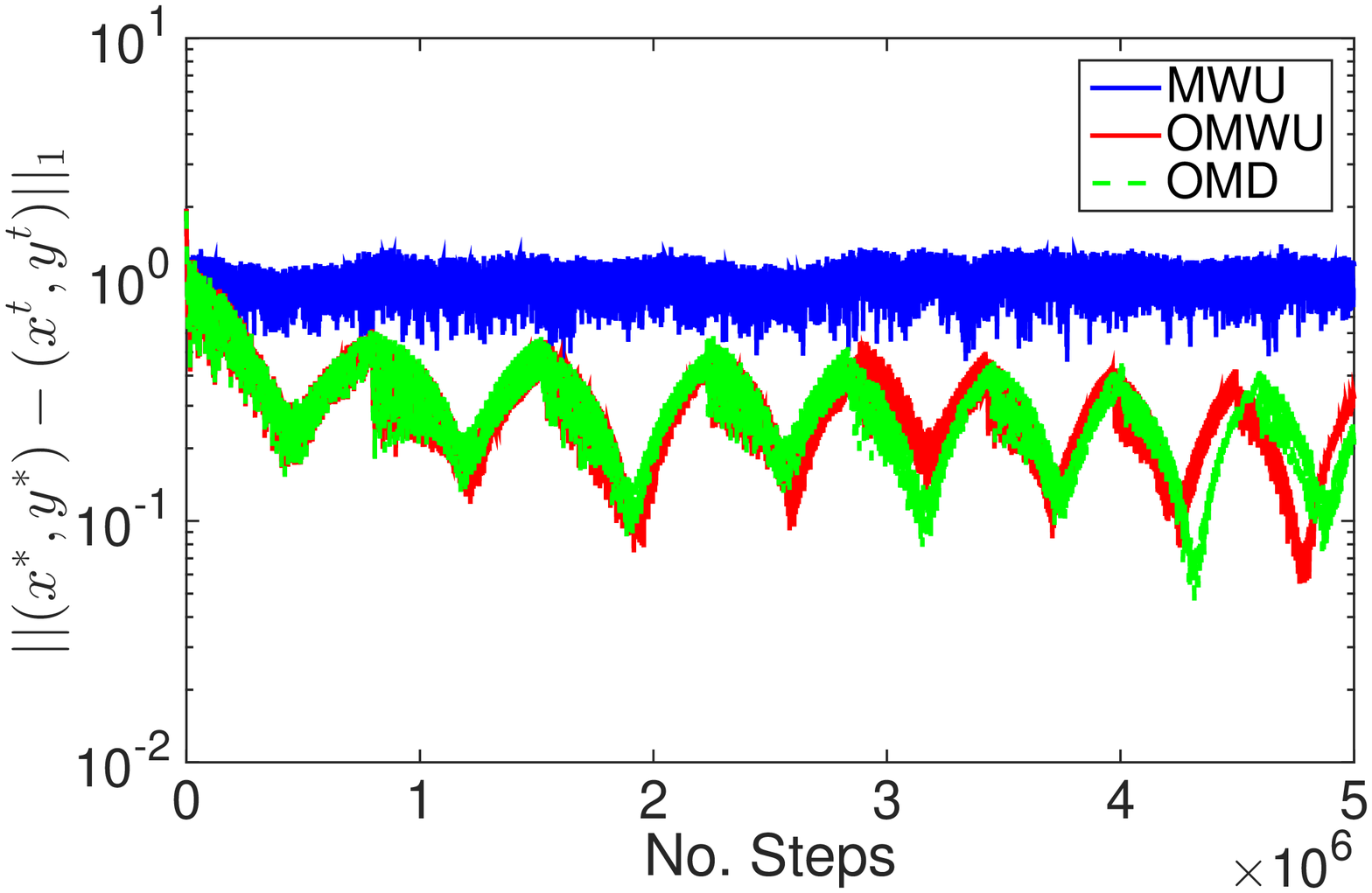}
    \caption{KL divergence and $l_1$ norm difference for $t_{\max}=5\times 10^6$ and two values for the learning rate: $\eta=0.1$ (upper row of panels) and $\eta=0.02$ (lower row of panels).}
    \label{long:runs:fig}
\end{figure}

Finally, we report in Table~\ref{tab:methods} several convergence statistics between OWMU, OMD and FLBR-MWU algorithms with $\eta = 0.1$. This table is an extension of Table~\ref{tab:size} from the main text (in Section~\ref{sec:exp}). Once again, the proposed FLBR-MWU algorithm is orders of magnitude faster while the closeness of the statistics between OWMU and OMD reveals the (almost) equivalence between the two algorithms.

\begin{table}[th]
\caption{Statistics on the number of steps till convergence for OWMU, OMD and FLBR-MWU and various payoff matrix sizes. The maximum number of steps was set to $t_{\max} = 5\times 10^6$.}
\label{tab:methods}
\centering
\begin{tabular}{cccccc}
Matrix size & Statistic & $n=5$ & $n=10$ & $n=20$ & $n=50$ \\ \hline 
\multirow{3}{*}{OWMU} & Mean & 1287.3K & 3280.9K & 4997.8K & 5000.0K \\
& Median & 631.9K & 3697.8K & 5000.0K & 5000.0K \\
& $t_{\max}$ & 12.0 & 44.0 & 98.0 & 100.0  \\ \hline 
\multirow{3}{*}{OMD} & Mean & 1287.6K & 3292.9K & 4997.8K & 5000.0K \\
& Median & 631.9K & 3629.1K & 5000.0K & 5000.0K  \\
& $t_{\max}$ & 12.0 & 44.0 & 98.0 & 100.0 \\ \hline 
\multirow{3}{*}{FLBR-MWU} & Mean & 18.8K & 45.9K & 267.1K & 1130.8K \\
& Median & 8.0K & 21.4K & 64.0K & 701.3K  \\
& $t_{\max}$ & 0.0 & 0.0 & 0.0 & 2.0 \\
\end{tabular}
\end{table}

%%%%%%%%%%%%%%%%%%

\end{document}